\newtheoremstyle{myplain}
{}
{}
{\itshape}
{}        
{\scshape}
{}
{.5em}
{}
\newtheoremstyle{myremark}
{}
{}
{}
{}        
{\scshape}
{}
{.5em}
{}
\numberwithin{equation}{section}
\newtheorem{theorem}{Theorem}[section]
\newtheorem{definition}[theorem]{Definition}
\newtheorem{lemma}[theorem]{Lemma}
\newtheorem{proposition}[theorem]{Proposition}
\newtheorem{remark}[theorem]{Remark}
\numberwithin{equation}{section}
\newcommand{\eps}{\varepsilon}
\newcommand{\sgn}{\operatorname{sgn}}
\newcommand{\nada}[1]{}
\DeclareMathOperator{\avtrco}{\textup{ATC}}
\DeclareMathOperator{\esr}{\textup{ESR}}
\long\def\symbolfootnote[#1]#2{\begingroup\def\thefootnote{\fnsymbol{footnote}}\footnote[#1]{#2}\endgroup}
\title{The Limits of Leverage}
\begin{document}
\symbolfootnote[0]{For helpful comments, we thank Jak\v sa Cvitani\'c, Gur Huberman, Johannes Muhle-Karbe, Walter Schachermayer, Ronnie Sircar, Matt Spiegel, Ren\'e Stulz, Peter Tankov and seminar participants at ETH Z\"urich, S\'eminaire Bachelier Paris, Vienna Graduate School of Finance, Quant Europe, Quant USA, Banff Workshop on Arbitrage and Portfolio Optimization, Institute of Finance at USI (Lugano).}
\symbolfootnote[0]{$\dag$ Boston University, Department of Mathematics and Statistics, 111 Cummington Mall, Boston, MA 02215, USA, and Dublin City University, School of Mathematical Sciences, Glasnevin, Dublin 9, Ireland, \texttt{paolo.guasoni@dcu.ie}. Partially supported by the ERC (278295), NSF (DMS-1412529), SFI (16/IA/4443,16/SPP/3347).}
\symbolfootnote[0]{$\ddag$ University of Limerick, Department of Mathematics \& Statistics, Castletroy, Co. Limerick, Ireland, \texttt{eberhard.mayerhofer@ul.ie}. Partially supported by the ERC (278295) and SFI (08/SRC/FMC1389).}

\maketitle
\vspace{-.5cm}
\centerline{\textsc{Paolo Guasoni$^{\dag}$}}
\medskip\centerline{\textit{Boston University and Dublin City University}}
\medskip\medskip

\centerline{\textsc{Eberhard Mayerhofer $^{\ddag}$}}
\medskip\centerline{\textit{University of Limerick}}
\medskip\medskip\medskip\medskip\medskip
\small

\small
When trading incurs proportional costs, leverage can scale an asset's return only up to a maximum multiple, which is sensitive to its volatility and liquidity. In a model with one safe and one risky asset, with constant investment opportunities and proportional costs, we find strategies that maximize long term returns given average volatility. As leverage increases, rising rebalancing costs imply declining Sharpe ratios. Beyond a critical level, even returns decline. Holding the Sharpe ratio constant, higher asset volatility leads to superior returns through lower costs.

\smallskip\footnotesize\textsc{Keywords:} leverage, transaction costs, portfolio choice.

\smallskip\footnotesize\textsc{Mathematics Subject Classification (2010):} {91G10, 91G80}

\smallskip\footnotesize\textsc{JEL Classification:} G11, G12
\medskip\medskip\medskip
\normalsize

\section{Introduction}

If trading is costless, leverage can scale returns without limits. Using the words of \cite{sharpe2011investors}:
\begin{quotation}
``If an investor can borrow or lend as desired, any portfolio can be leveraged up or down. A combination with a proportion $k$ invested in a risky portfolio and $1-k$ in the riskless asset will have an expected excess return of $k$ [times the excess return of the risky portfolio] and a standard deviation equal to $k$ times the standard deviation of the risky portfolio. Importantly, the Sharpe Ratio of the combination will be the same as that of the risky portfolio.''
\end{quotation}

In theory, this insight implies that the efficient frontier is linear, that efficient portfolios are identified by their common maximum Sharpe ratio, and that any of them spans all the other ones. 
Also, if leverage can deliver any expected return, then risk-neutral portfolio choice is meaningless, as it leads to infinite leverage. 

In practice, hedge funds and high-frequency trading firms employ leverage to obtain high returns from small relative mispricing of assets. A famous example is Long Term Capital Management, which used leverage of up to 40 times to increase returns from convergence trades between on-the-run and off-the-run treasury bonds (for example, see \cite{edwards1999hedge}). 

This paper shows that trading costs undermine these classical properties of leverage and set sharp theoretical limits to its applications. We start by characterizing the set of portfolios that maximize long term expected returns for given average volatility, extending the familiar efficient frontier to a market with one safe and one risky asset, where both investment opportunities and relative bid-ask spreads are constant. 
Figure \ref{fig:front} plots this frontier: expectedly, trading costs decrease returns, with the exception of a fully safe investment (the axes origin) or a fully risky investment (the attachment point with unit coordinates), which lead to static portfolios without trading, and hence earn their frictionless returns.\footnote{As we focus on long term investments, we neglect the one-off costs of set up and liquidation, which are negligible over a long holding period.}

But trading costs do not merely reduce expected returns below their frictionless benchmarks. Unexpectedly, in the leverage regime (to the right of the full-investment point) rebalancing costs rise so quickly with volatility that returns cannot increase beyond a critical factor, the leverage \emph{multiplier}. This multiplier depends on the relative bid-ask spread $\varepsilon$, the expected excess return $\mu$ and volatility $\sigma$, and approximately equals 
\begin{equation}\label{max leverage multiplier}
0.3815\left(\frac{\mu}{\sigma^2}\right)^{1/2}\varepsilon^{-1/2}.
\end{equation}

Table \ref{tab:mult} shows that even a modest bid-ask spread of 0.10\% implies a multiplier of 23 for an asset with 10\% volatility and 5\% expected return (similar to a long-term bond), while the multiplier declines to 10 for an asset with equal Sharpe ratio, but volatility of 50\% (similar to an individual stock). Leverage opportunities are much more limited for more illiquid assets with a spread of 1\%: the multiplier declines from less than 8 for 10\% volatility to less than 4 for 50\% volatility. 
Importantly, these limits on leverage hold even allowing for continuous trading, infinite market depth (any quantity trades at the bid or ask price), and zero capital requirements. 

\begin{table}[!t]
\centering
\begin{tabular}{ r l l l } 
& \multicolumn{3}{c}{Bid-Ask Spread ($\varepsilon$)}\\
Volatility ($\sigma$) & 0.01\% & 0.10\% & 1.00\% \\
\cline{2-4}\\
10\% &71.85\;(71.22)& 23.15\;(22.58)& 7.72\;(7.12)\\ 
20\%  &50.88\;(50.36) &16.45\;(15.92) &5.56\;(5.04) \\
50\% &32.30\;(31.85) &10.54\;(10.07)& 3.66\;(3.18)\\
\\
\end{tabular}
\caption{\label{tab:mult}
Leverage multiplier (maximum factor by which a risky asset's return can be scaled) for different asset volatilities and bid-ask spreads, holding the Sharpe ratio at the constant level of 0.5. Multipliers are obtained from numerical solutions of \eqref{eq: TKA fbp}, while their approximations from \eqref{max leverage multiplier} are in brackets.}
\end{table}
\begin{figure}[!ht]
\centering
\includegraphics[width=\textwidth]{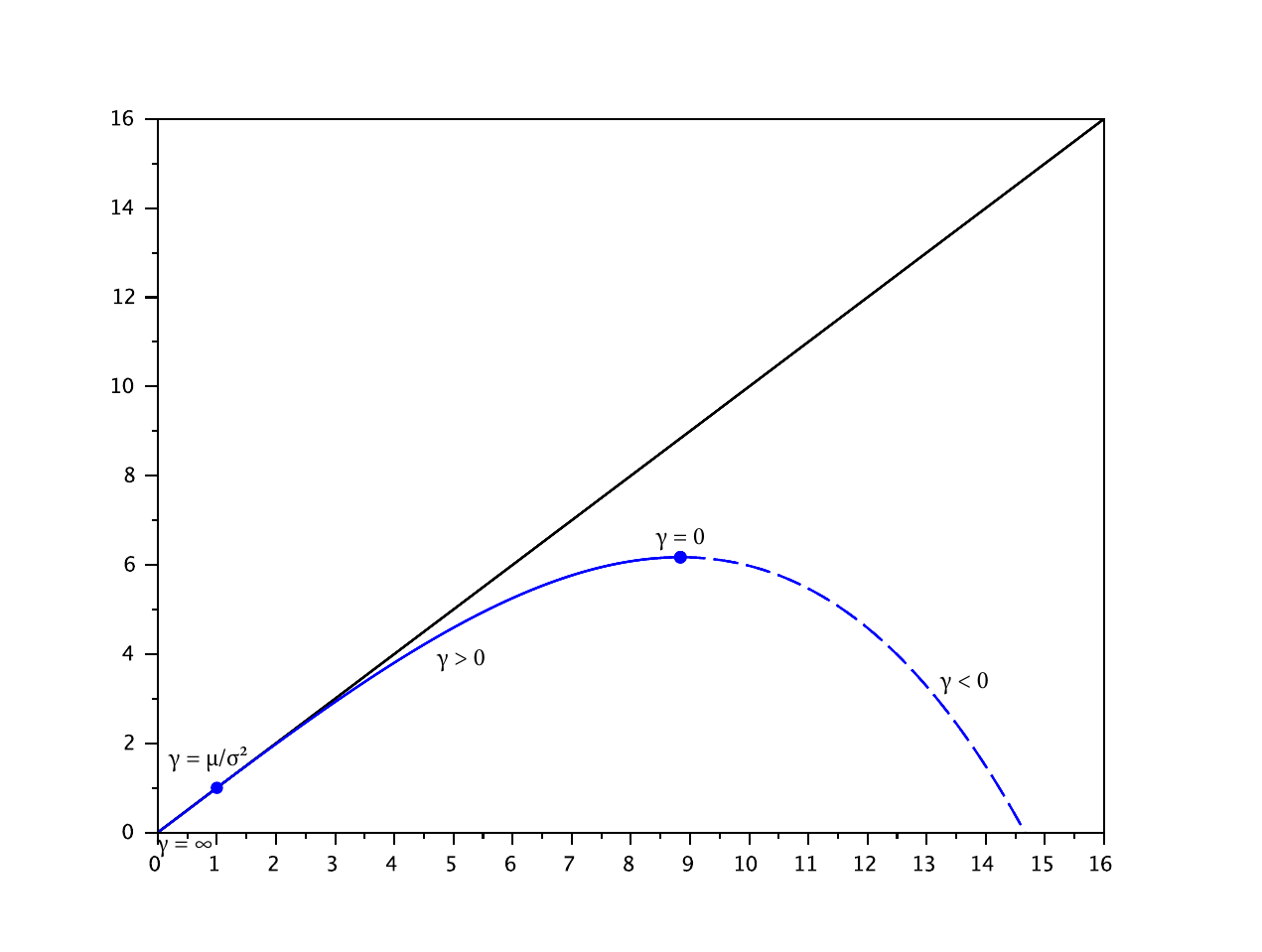}
\caption{\label{fig:front}
Efficient Frontier with trading costs, as expected excess return (vertical axis, in multiples of the asset's return) against standard deviation (horizontal axis, in multiples of the asset's volatility). The asset has expected excess return $\mu=8\%$, volatility $\sigma=16 \%$, and bid-ask spread of 1\%. Each point on the curve represents the performance of the optimal portfolio with risk aversion $\gamma$. The upper line denotes the classical efficient frontier, with no transaction costs. 
The maximum height of the curve ($\gamma=0$) corresponds to the leverage multiplier. As $\gamma$ increases, leverage, return, and volatility all decrease, reaching the asset's own performance $(1,1)$ at $\gamma=\mu/\sigma^2$. As $\gamma$ increases further, exposure to the asset declines below one, eventually vanishing at the origin ($\gamma=\infty$). The dashed frontier ($\gamma<0$) is not ``efficient'' in that such returns are maximal for given volatility, but can be achieved with lower volatility in the solid frontier ($\gamma>0$).
}
\end{figure}

Our results have two broad implications. First, with a positive bid-ask spread even a risk-neutral investor who seeks to maximize expected long-run returns takes finite leverage, and in fact a rather low leverage ratio in an illiquid market -- risk-neutral portfolio choice is meaningful. The resulting multiplier sets an endogenous level of risk that the investor chooses not to exceed regardless of risk aversion, simply to avoid reducing returns with trading costs. In this context, margin requirements based on volatility (such as value at risk and its variations) are binding only when they reduce leverage below the multiplier, and are otherwise redundant. In addition, the multiplier shows that an exogenous increase in trading costs, such as a proportional Tobin tax on financial transactions, implicitly reduces the maximum leverage that any investor who seeks return is willing to take, regardless of risk attitudes.

Second, two assets with the same Sharpe ratio do not generate the same efficient frontier with trading costs, and more volatility leads to a superior frontier. For example (Table \ref{tab:mult}) with a 1\% spread, the maximum leveraged return on an asset with 10\% volatility and 5\% return is $7.72\times 5\% \approx 39\%$. By contrast,  an asset with 50\% volatility and 25\% return (equivalent to the previous one from a classical viewpoint, as it has the same Sharpe ratio 0.5), leads to a maximum leveraged return of $3.66\times 25\% \approx 92\%$. The reason is that a more volatile asset requires a lower leverage ratio (hence lower rebalancing costs) to reach a certain return. Thus, an asset with higher volatility spans an efficient frontier that achieves higher returns through lower costs. 

This paper bears on the established literature on portfolio choice with frictions. The effect of transaction costs on portfolio choice is first studied by \citet*{MR0469196}, \citet*{constantinides.86}, and \citet*{MR1080472}, who identify a wide no-trade region, and derive the optimal trading boundaries through numerical procedures. While these papers focus on the maximization of expected utility from intertemporal consumption on an infinite horizon, \citet*{MR942619}, and  \citet*{dumas.luciano.91} show that similar strategies are obtained in a model with terminal wealth and a long horizon -- time preference has negligible effects on trading policies. This paper adopts the same approach of a long horizon, both for the sake of tractability, and because it focuses on the trade-off between return, risk, and costs, rather than consumption. 

Our asymptotic results for positive risk aversion are similar in spirit to the ones derived by  \citet*{MR1284980}, \citet*{MR2076549}, \citet*{gerhold.al.11}, and \citet*{kallsen2013general}, whereby transaction costs imply a no-trade region with width of order $O(\eps^{1/3})$ and welfare costs of order $O(\eps^{2/3})$. We also find that the trading boundaries obtained from a local mean-variance criterion are equivalent at the first order to the ones obtained from power utility.

The risk-neutral expansions and the limits of leverage of order $O(\eps^{-1/2})$ are new, and are qualitatively different from the risk-averse case. These results are not regular perturbations of a frictionless analogue, which is ill-posed. They are rather singular perturbations, which display the speed at which the frictionless problem becomes ill-posed as the crucial friction parameter vanishes.

Finally, this paper connects to the recent work of \cite{frazzini2012embedded} on embedded leverage. If different investors face different leverage constraints, they find that in equilibrium assets with higher factor exposures trade at a premium, thereby earning a lower return. \cite{frazzini2014betting} confirm this prediction across a range of markets and asset classes, and \cite{asness2012leverage} use it to explain the performance risk-parity strategies. With exogenous asset prices, we find that assets with higher volatility generate a superior efficient frontier by requiring lower rebalancing costs for the same return. This observation suggests that the embedded leverage premium may be induced by rebalancing costs in addition to leverage constraints, and should be higher for more illiquid assets.

 The paper is organized as follows: section \ref{sec: the model} introduces the model and the optimization problem. Section \ref{sec:mainres} contains the main results, which characterize the efficient frontier in the risk-averse (Theorem \ref{thm free b}) and risk-neutral (Theorem \ref{th multiplier}) cases. Section \ref{sec: implications} discusses the implications of these results for the efficient frontier, the trading boundaries of optimal policies and the embedded leverage effect. The section includes two supporting results, which show that the risk-neutral solutions arise as limits of their risk-averse counterparts for low risk-aversion (Theorem \ref{lem convergence}), and that the risk-neutral solutions are not constrained by the solvency condition (section \ref{sec tb}). Section \ref{sec: Heuristics} offers a derivation of the main free-boundary problems from heuristic control arguments, and concluding remarks are in section \ref{sec: conclusion}. All proofs are in the appendix.

\section{Model}\label{sec: the model}

The market includes one safe asset earning a constant interest rate of $r\ge 0$ and a risky asset with ask (buying) price $S_t$ that follows
\[
\frac{dS_t}{S_t}=(\mu+r)dt+\sigma dB_t, \quad S_0, \sigma,\mu > 0,
\]
where $B$ is a standard Brownian motion. The risky asset's bid (selling) price is $(1-\varepsilon)S_t$, which implies a constant relative bid-ask spread of $\varepsilon>0$, or, equivalently, constant proportional transaction costs.
\nada{
A self-financing trading strategy is summarized by its initial capital $x$ and the number of shares $\varphi_t$ of the risky asset held at time $t$. Denote by $w_t$ the fund's wealth at time $t$, which is the sum of the safe position
$x - \int_0^t S_s d\varphi_s - \varepsilon\int_0^t S_s d\varphi^\downarrow_s$ and the risky position $S_t \varphi_t $ evaluated at the ask price\footnote{The convention of evaluating the risky position at the ask price is inconsequential. Using the bid price instead leads to the same results up to a change of notation.}:
\begin{equation}\label{eq:self financing}
w_t = x - \int_0^t S_s d\varphi_s - \varepsilon\int_0^t S_s d\varphi^\downarrow_s
+S_t \varphi_t.
\end{equation}
We further require a strategy $\varphi$ to be solvent, in that its corresponding wealth $w_t$ is strictly positive at all times. (Admissible strategies are formally described in Definition \ref{def: admiss} below.)
}

We investigate the trade-off between a portfolio's average return against its realized variance. Denoting by $w_t$ the portfolio value at time $t$, for an investor who observes returns with frequency $\Delta t=T/n$ in the time-interval $[0,T]$, the average return and its continuous-time approximation are\footnote{All discrete statistics on this section converge in probability to their continuous-time counterparts. The budget equation and the definition of admissible strategies are in appendix \ref{sec: prereq} below.} 
\[
\frac{1}{n \Delta t}\sum_{k=1}^{n} \left(\frac{w_{k\Delta t}}{w_{(k-1)\Delta t}}-1\right)
\approx
\frac1T\int_0^T \frac{dw_t}{w_t}.
\]
In the familiar setting of no trading costs, $\frac1T\int_0^T \frac{dw_t}{w_t} = r+ \frac1 T\int_0^T \mu\pi_t dt+\frac{1}{T}\int_0^T \sigma\pi_t dB_t$, where $\pi_t$ is the portfolio weight of the risky asset, hence the average return equals the average risky exposure times its excess return, plus the safe rate.

Likewise, the average squared volatility on $[0,T]$ is obtained by the usual variance estimator applied to returns, and has the continuous-time approximation 
\[
\frac{1}{n \Delta t}
\sum_{k=1}^{n}
\left(\frac{w_{k\Delta t}}{w_{(k-1)\Delta t}}-1\right)^2
\approx 
\frac1T\int_0^T \frac{d\langle w\rangle_t}{w_t^2}
\]
 reducing to $\frac{\sigma^2} T\int_0^T \pi^2_t dt$ in the absence of trading costs.

With these definitions, the mean-variance trade-off is captured by maximizing 
\begin{equation}\label{eq: obj abstract}
\frac1 T \mathbb E\left[ \int_0^T \frac{dw_t}{w_t}-\frac\gamma 2 
\left\langle \int_0^\cdot 
\frac {dw_t}{w_t}\right\rangle_T
\right] ,
\end{equation}
where the parameter $\gamma>0$ is interpreted as a proxy for risk-aversion. 

This objective nests several familiar problems. Without trading costs it reduces to
\begin{equation}\label{eq: obj frictionless}
\frac1 T \mathbb E\left[ \int_0^T \left(\mu \pi_t -\frac\gamma 2 \sigma^2\pi_t^2\right) dt \right]
\end{equation}
which is maximized by the optimal constant-proportion portfolio $\pi = \frac{\mu}{\gamma \sigma^2}$ dating back to Markowitz and Merton, and confirms that in a geometric Brownian motion market with costless trading, the objective considered here is equivalent to utility-maximization with constant relative risk aversion. 
With or without transaction costs, the risk-neutral objective $\gamma = 0$ boils down to the average annualized return over a long horizon, while $\gamma=1$ reduces to logarithmic utility.

\nada{
To proceed further, note first that the objective function \eqref{eq: obj abstract} has a more concrete expression,
see Lemma \ref{le: rewriting obj fun}: For any $T>0$ and for any admissible trading strategy $\varphi$, 
\begin{align}
F_T(\varphi) :=&
\frac1 T \mathbb E\left[ \int_0^T \frac{dw_t}{w_t}-\frac\gamma 2 
\left\langle \int_0^\cdot 
\frac {dw_t}{w_t} \right\rangle_T
\right]\\
=&
\label{altinto}
r+\frac{1}{T}\mathbb E\left[\int_0^T \left(\mu \pi_t-\frac{\gamma \sigma^2}{2}\pi_t^2\right)dt-\varepsilon\int_0^T\pi_t\frac{d\varphi^\downarrow_t}{\varphi_t} \right].
\end{align}
}
Trading costs make \eqref{eq: obj abstract} lower than \eqref{eq: obj frictionless}, as they hinder continuous portfolio rebalancing and make constant-proportion strategies unfeasible. The reason is that it is costly to keep the exposure to the risky asset high enough to achieve the desired return, and low enough to limit the level of risk -- trading costs reduce returns and increase risk.

To neglect the spurious, non-recurring effects of portfolio set-up and liquidation, we focus on the Equivalent Safe Rate\footnote{In this equation the $\limsup$ is used merely to guarantee a good definition a priori. A posteriori, optimal strategies exist in which the limit superior is a limit, hence the similar problem defined in terms of $\liminf$ leads to the same solution.}
\begin{align}\label{eq: obj asympt}
\esr &:= \limsup_{T\rightarrow\infty} \frac1 T \mathbb E\left[ \int_0^T \frac{dw_t}{w_t}-\frac\gamma 2 
\left\langle \int_0^\cdot 
\frac {dw_t}{w_t}\right\rangle_T
\right]
\end{align}
which is akin to the one used by \cite{dumas.luciano.91} in the context of utility maximization.

\section{Main Results}\label{sec:mainres}

\subsection{Risk aversion and efficient frontier}
The first result characterizes the optimal solution to the main objective in \eqref{eq: obj asympt} in the usual case of a positive aversion to risk ($\gamma>0$). In this setting, the next theorem shows that trading costs create a no-trade region around the frictionless portfolio $\pi_* = \frac\mu{\gamma \sigma^2}$, and states the asymptotic expansions of the resulting average return and standard deviation\footnote{The exact formulae for average return, standard deviation, and average trading costs are in Appendix \ref{app proof main theorem}.}, thereby extending the familiar efficient frontier to account for trading costs.

\begin{theorem}\label{thm free b}
\noindent
Let $\frac{\mu}{\gamma\sigma^2}\neq 1$.
\begin{enumerate}
\item \label{thm free b part 1}
For any $\gamma>0$ there exists $\varepsilon_0>0$ such that for all $\varepsilon<\varepsilon_0$, 
there is a unique solution $(W,\zeta_-,\zeta_+)$, with $\zeta_-<\zeta_+$, for the free boundary problem 
\begin{align}\label{eq: TKA fbp}
&\textstyle{\frac{1}{2}\sigma^2 \zeta^2 W''(\zeta)+(\sigma^2+\mu)\zeta W'(\zeta)+\mu W(\zeta)-\frac{1}{(1+\zeta)^2}\left(\mu-\gamma\sigma^2\frac{\zeta}{1+\zeta}\right)=0,}\\\label{initial0 TKA}
&\textstyle{W(\zeta_-)=0}\\\label{initial1 TKA}
&\textstyle{W'(\zeta_-)=0,}\\\label{terminal0 TKA}
&\textstyle{W(\zeta_+)=\frac{\varepsilon}{(1+\zeta_+)(1+(1-\varepsilon)\zeta_+)},}\\
\label{terminal1 TKA}
&\textstyle{W'(\zeta_+)=\frac{\varepsilon(\varepsilon-2(1-\varepsilon)\zeta_+-2)}{(1+\zeta_+)^2(1+(1-\varepsilon)\zeta_+)^2}}
\end{align}

\item \label{thm free b part 2}
The trading strategy that buys at $\pi_- := \zeta_-/(1+\zeta_-)$ and sells at $\pi_+ := \zeta_+/(1+\zeta_+)$ as little as to keep the risky weight $\pi_t$ within the interval $[\pi_-,\pi_+]$ is optimal.

\item \label{thm free b part 3}
The maximum performance is 
\begin{equation}
\textstyle{\max_{\varphi\in\Phi}\lim_{T\rightarrow\infty}
\frac1T 
\mathbb E\left[
\int_0^T \left( \mu \pi_t - \frac{\gamma\sigma^2}{2} \pi_t^2 \right)dt
-{\varepsilon}\int_0^T\pi_t\frac{d\varphi^\downarrow_t}{\varphi_t} 
\right]
= \mu \pi_- -\frac{\gamma \sigma^2}{2}\pi_-^2,}
\end{equation}
where $\Phi$ is the set of admissible strategies in Definition \ref{def: admiss} below, $\varphi_t = \pi_t w_t/S_t$ is the number of shares held at time $t$, and $\varphi^\downarrow_t$ is the cumulative number of shares sold up to time $t$.

\item\label{thm free b part 4} 
The trading boundaries $\pi_-$ and $\pi_+$ have the asymptotic expansions
\begin{equation}\label{eq: trading boundaries TKA x}
\textstyle{\pi_\pm=\pi_*\pm\left(\frac{3}{4\gamma}\pi_*^2(\pi_*-1)^2\right)^{1/3}\varepsilon^{1/3}-\frac{(1-\gamma)\pi_*}{\gamma}\left(\frac{\gamma \pi_*(\pi_*-1)}{6}\right)^{1/3}\varepsilon^{2/3} +O(\varepsilon).}
\end{equation}

The long-run mean ($\hat m $), standard deviation ($\hat s$), Sharpe ratio ($(\hat m-r)\hat s$), average trading costs ($\avtrco$) and equivalent safe rate ($\esr$) have expansions\footnote{We are using the convention $a^{1/n}=\text{sign}(a)\,\vert a \vert^{1/n}$ for any $a\in\mathbb R$ and odd integer $n$, and $a^{2/n}=(a^2)^{1/n}$.}
\begin{align}
\label{eq: mean asympt}
\textstyle{\hat m}&:=\textstyle{\lim_{T\rightarrow\infty}\frac{1}{T}\int_0^T \frac{dw_t}{w_t}=r+\frac{\mu^2}{\gamma \sigma^2} -\frac{\sigma^2\pi_*( 5 \pi_*-3)}{2}\left(\frac{\gamma\pi_*(\pi_*-1)}{6}\right)^{1/3}\varepsilon^{2/3}+O(\varepsilon),}\\
\label{eq: vol asympt}
\textstyle{\hat s}&:= \textstyle{\lim_{T\rightarrow\infty}\sqrt{\frac{1}{T}\left\langle\int_0^\cdot\frac{dw_t}{w_t}\right\rangle_T}=\frac{\mu}{\gamma \sigma}-\frac{\sigma (7\pi_*-3)}{4\gamma}\left(\frac{\gamma\pi_*(\pi_*-1)}{6}\right)^{1/3}\varepsilon^{2/3}+O(\varepsilon),}\\
\label{eq: Sharpe asympt}
\textup{SR} &:=\frac{\hat m-r}{\hat s}=\frac{\mu}{\sigma}+\frac{3}{4\cdot {6}^{1/3}} (\pi_*-1)(\gamma \pi_*(1-\pi_*))^{1/3} \varepsilon^{2/3}+O(\varepsilon)\\
\label{eq: trading costs asympt}
\textstyle{\avtrco}&:=\textstyle{\lim_{T\rightarrow\infty}\frac{1}{T}\int_0^T\pi_t\frac{d\varphi^\downarrow_t}{\varphi_t}=\frac{3\sigma^2}{\gamma} \left(\frac{\gamma \pi_*(\pi_*-1)}{6}\right)^{4/3}\varepsilon^{2/3}+O(\varepsilon),}\\
\label{eq: performance asympt}
\textstyle{\esr} &=\textstyle{r+\frac{\gamma \sigma^2}{2}\pi_*^2-\frac{\gamma \sigma^2}{2}\left(\frac{3}{4 \gamma}\pi_*^2(\pi_*-1)^2\right)^{2/3}\varepsilon^{2/3} + O(\varepsilon).}
\end{align}

\end{enumerate}
\end{theorem}
\begin{proof}
The proof of the main part of this theorem is divided into Propositions \ref{prop1}, \ref{prop2} and \ref{prop3} in Appendix \ref{appendix HJB}. The proof of the asymptotic results is in section \ref{as of th31}.
\end{proof}

\subsection{Risk neutrality and limits of leverage}

In contrast to the risk-averse objective considered above, the risk-neutral objective leads to a solution which does not have a frictionless analogue: for small trading costs, both the optimal policy and its performance become unbounded as the optimal leverage increases arbitrarily. The next result describes the solution to the risk-neutral problem, identifying the approximate dependence of the leverage multiplier and its performance on the asset's risk, return and liquidity.

\begin{theorem}\label{th multiplier}
Let $\gamma=0$.
\begin{enumerate}
\item  \label{th multiplier issue 1}
There exists $\varepsilon_0>0$ such that for all $\varepsilon<\varepsilon_0$, the free boundary problem  \eqref{eq: TKA fbp}--\eqref{terminal1 TKA} has a unique solution $(W,\zeta_-,\zeta_+)$ with $\zeta_-<\zeta_+$. 

\item \label{th multiplier issue 2}

The trading strategy $\hat\varphi$ that buys at $\pi_- := \zeta_-/(1+\zeta_-)$ and sells at $\pi_+ := \zeta_+/(1+\zeta_+)$ as little as to keep the risky weight $\pi_t$ within the interval $[\pi_-,\pi_+]$ is optimal.

\item \label{th multiplier issue 3}
The maximum expected return is 
\begin{equation}
\textstyle{\max_{\varphi\in\Phi}\lim_{T\rightarrow\infty}
\frac1 T \int_0^T \frac{dw_t}{w_t}
= r + \mu \pi_- .}
\end{equation}

\item  \label{th multiplier issue 4}
The trading boundaries have the series expansions
\begin{align}\label{as multiplier piminus}
\textstyle{\pi_- =}&\textstyle{ (1-\kappa)\kappa^{1/2}\left(\frac{\mu}{\sigma^2}\right)^{1/2}\varepsilon^{-1/2} + 1 + O(\varepsilon^{1/2}),}\\
\textstyle{\pi_+ =}& \textstyle{\kappa^{1/2}\left(\frac{\mu}{\sigma^2}\right)^{1/2}\varepsilon^{-1/2} + 1 + O(\varepsilon^{1/2}),}
\end{align}
where $\kappa\approx 0.5828$ is the unique solution to
\begin{equation}\label{unicorn dimless}
\textstyle{f(\xi):=\frac{3}{2}\xi+\log(1-\xi)=0,\quad \xi\in(0,1).}
\end{equation}
\end{enumerate}
\end{theorem}
\begin{proof}
See Appendix \ref{sec: proof th multiplier} below.
\end{proof}

The next section discusses how these results modify the familiar intuition about risk, return, and performance evaluation in the context of trading costs.

\section{Implications and Applications}\label{sec: implications}

\subsection{Efficient frontier}

Theorem \ref{thm free b} extends the familiar efficient frontier to account for trading costs. Compared to the linear frictionless frontier, average returns decline because of rebalancing losses. Average volatility increases because more risk becomes necessary to obtain a given return net of trading costs. 

To better understand the effect of trading costs on return and volatility, consider the dynamics of the portfolio weight in the absence of trading, which is
\begin{equation}
d\pi_t = \pi_t (1-\pi_t) (\mu - \sigma^2 \pi_t)dt + \sigma \pi_t (1-\pi_t) dB_t
.
\end{equation}
The central quantity here is the portfolio weight volatility $\sigma \pi_t (1-\pi_t)$, which vanishes for the single-asset portfolios $\pi_t = 0$ or $\pi_t = 1$, remains bounded above by $\sigma/4$ in the long-only case $\pi_t \in [0,1]$, and rises quickly with leverage ($\pi_t>1$). This quantity is important because it measures the extent to which a portfolio, left to itself, strays from its initial composition in response to market shocks and, by reflection, the quantity of trading that is necessary to keep it within some region. In the long-only case, the portfolio weight volatility decreases as the no-trade region widens to span $[0,1]$, which means that a portfolio tends to spend more time near the boundaries. By contrast, with leverage portfolio weight volatility increases, which means that a wider boundary does not necessarily mitigate trading costs.

\begin{figure}[t]
\centering
\includegraphics[width=\textwidth]{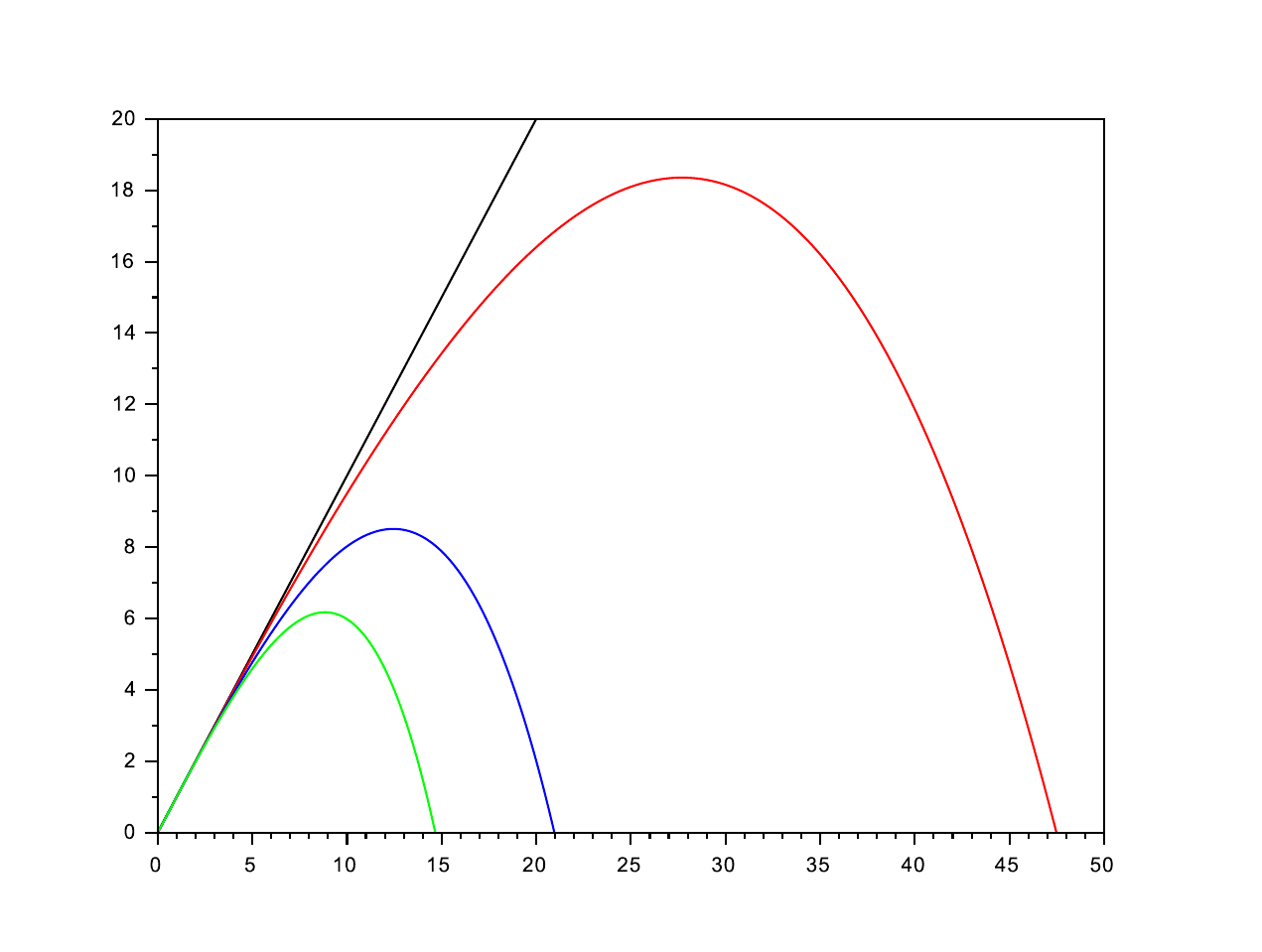}
\caption{\label{fig:mvt multiple}
Efficient Frontier with trading costs, as expected excess return (vertical axis, in multiples of the asset's expected excess return) against standard deviation (horizontal axis, in multiples of the asset's volatility). The asset has expected excess return $\mu=8\%$, volatility $\sigma=16 \%$, and bid-ask spread of $0.1 \%, 0.5 \%, 1 \%$. The upper line is the frictionless efficient frontier. The maximum of each curve is the leverage multiplier.}
\end{figure}

Consistent with this intuition, equation \eqref{eq: mean asympt} shows that the impact of trading costs is small on long-only portfolios, but rises quickly with leverage, reducing returns at the order of $\varepsilon^{2/3}$ for $\pi_*>1$. Of course, this expansion is valid for small $\varepsilon$ while holding the value of $\gamma$ fixed. As $\gamma$ declines to zero, both the expected return and volatility diverge, but so does the impact of trading costs, making the asymptotics for $\gamma>0$ uninformative for the risk-neutral limit $\gamma=0$.

The performance \eqref{eq: performance asympt} coincides at the first order with the equivalent safe rate from utility maximization with constant relative risk aversion $\gamma$ \cite[Equation (2.4)]{gerhold.al.11}, supporting the interpretation of $\gamma$ as a risk-aversion parameter, and confirming that, for asymptotically small costs, the efficient frontier captures the risk-return trade-off faced by a utility maximizer.

Figure \ref{fig:mvt multiple} displays the effect of trading costs on the efficient frontier. As the bid-ask spread declines, the frontier increases to the linear frictionless frontier, and the asymptotic results in the theorem become more accurate. However, if the spread is held constant as leverage (hence volatility) increases, the asymptotic expansions become inaccurate, and in fact the efficient frontier ceases to increase at all after the leverage multiplier is reached.

\subsection{Trading boundaries}\label{sec tb}
\begin{figure}[t]
\centering
\includegraphics[width=\textwidth]{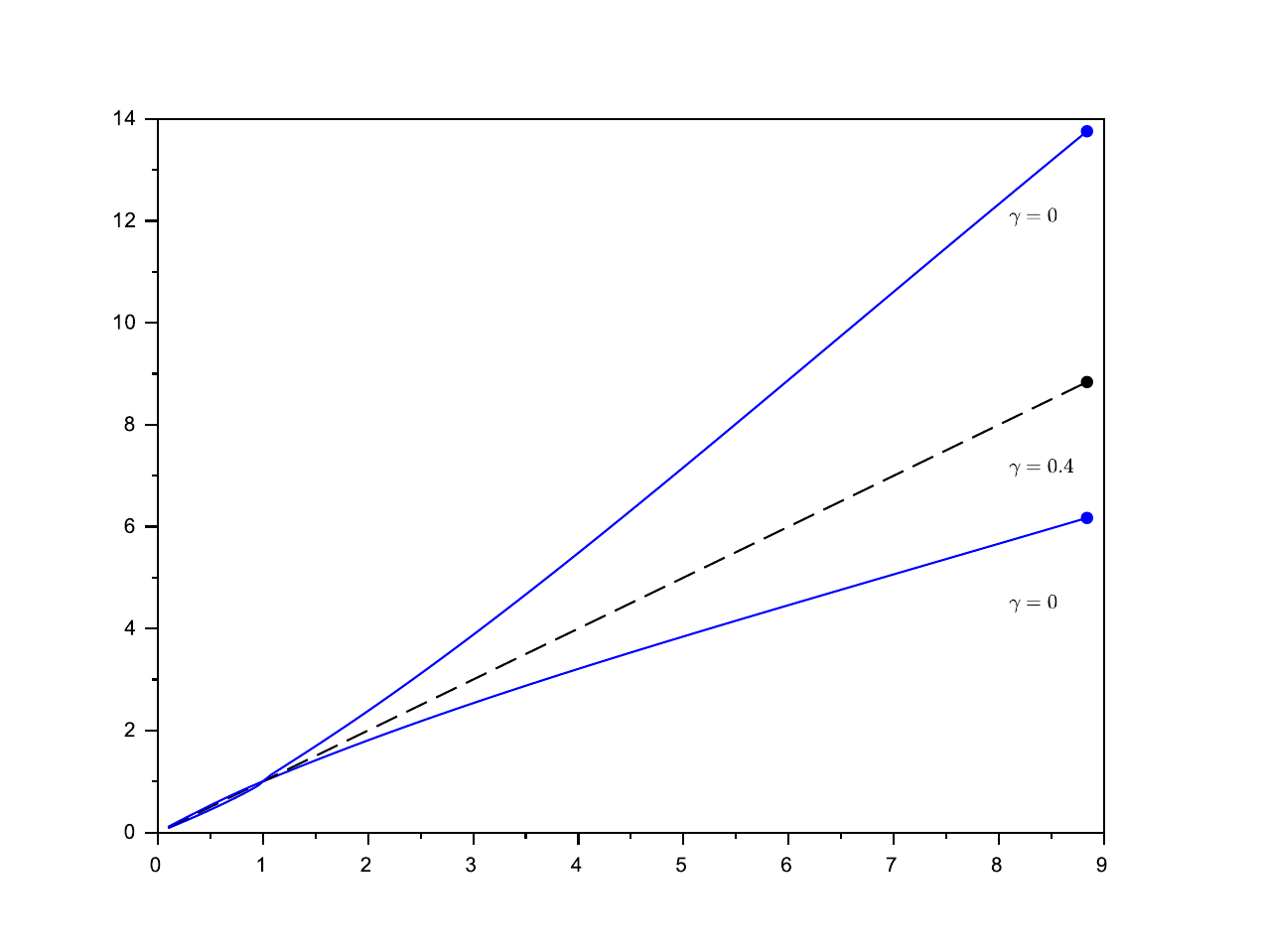}
\caption{Trading boundaries $\pi_\pm$ (vertical axis, outer curves, as risky weights) and implied Merton fraction (middle curve) against average portfolio volatility (horizontal axis, as multiples of $\sigma$). $\mu=8\%$, $\sigma=16\%$, and $\varepsilon=1\%$.}
\label{fig:boundaries}
\end{figure}

Each point in the efficient frontier corresponds to a rebalancing strategy that is optimal for some value of the risk-aversion parameter $\gamma$. For small trading costs, equation \eqref{eq: trading boundaries TKA x} implies that the trading boundaries corresponding to the efficient frontier depart from the ones arising in utility maximization, which are \citep{gerhold.al.11}
\begin{equation}
\pi_\pm=\pi_*\pm\left(\frac{3}{4\gamma}\pi_*^2(1-\pi_*)^2\right)^{1/3}\varepsilon^{1/3} +O(\varepsilon).
\end{equation}
The term of order $\varepsilon^{2/3}$ vanishes for $\gamma = 1$ because this case coincides with the maximization of logarithmic utility. For high levels of leverage ($\gamma <1$ and $\pi_*>1$), this term implies that the trading boundaries that generate the efficient frontier are lower than the trading boundaries that maximize utility. In Figure \ref{fig:boundaries}, $\gamma \rightarrow \infty$ corresponds to the safe portfolio in the origin (0,0), while $\gamma = \mu/\sigma^2$ to the risky investment (1,1), which has by definition the same volatility and return as the risky asset. As $\gamma$ declines to zero, the trading boundaries converge to the right endpoints, which correspond to the strategy that maximizes average return with no regard for risk, thereby achieving the multiplier.

 As leverage increases, the sell boundary rises more quickly than the buy boundary (Figure \ref{fig:boundaries}). For example, the risk-neutral portfolio tolerates leverage fluctuations from approximately $6$ to $14$. The locations of these boundaries trade off the need to keep exposure to the risky asset high to maximize return while also keeping rebalancing costs low. Risk aversion makes boundaries closer to each other by penalizing the high realized variance generated by the wide risk-neutral boundaries.

Importantly, these boundaries remain finite even as the frictionless Merton portfolio $\mu/(\gamma\sigma^2)$ diverges to infinity with $\gamma$ declining to zero. Thus the no-trade region is not symmetric around the frictionless portfolio, in contrast to the boundaries arising from utility maximization \citep{gerhold.al.11}, which are always symmetric, and hence diverge when $\gamma$ is low. 
The difference is that here the risk-neutral objective is to maximize the expected \emph{return} of the portfolio, while a risk-neutral utility maximizer focuses on expected \emph{wealth}. In a  frictionless setting this distinction is irrelevant, and an investor can use a return-maximizing policy to maximize wealth instead. But trading costs drive a wedge between these two ostensibly equivalent risk-neutral criteria -- maximizing expected return is not the same as maximizing expected wealth. 

In the risk-neutral case (Theorem \ref{th multiplier} \ref{th multiplier issue 4}) the optimal trading boundaries satisfy the approximate relation
\begin{equation}
\frac{\pi_-}{\pi_+} \approx 0.4172 
\end{equation}
which is universal in that it holds for any asset, regardless of risk, return and liquidity. This relation means that an optimal risk-neutral rebalancing strategy should always tolerate wide variations in leverage over time, and that the maximum allowed leverage should be approximately 2.5 times the minimum. More frequent rebalancing cannot achieve the maximum return: it can be explained either by risk aversion or by elements that lie outside the model, such as price jumps.

Finally, note that the solvency constraint that wealth remain positive at all times implies that\footnote{ \label{foot special} Any potentially optimal strategy has positive exposure ($\pi_t> 0$), as the asset price has a positive risk premium (Remark \ref{rem: range of trade} below). Denoting by $X_t=w_t-\varphi_t S_t$ the safe position at time $t\geq 0$, where $w_t$ is total portfolio wealth, the liquidation value is $w_t-\varepsilon \varphi_t S_t\geq 0$, which implies $1-\varepsilon \frac{\varphi_t S_t}{w_t}>0$ and thus the claim.} $\pi_t< \frac{1}{\varepsilon}$ for every admissible trading strategy. As $\pi_t\leq \pi_+$ for the optimal trading strategy in Theorem \ref{thm free b} and Theorem \ref{th multiplier}, the upper bound $\pi_t\le \frac1\varepsilon$ is never binding for realistic bid-ask-spreads.

\subsection{Embedded leverage}

\begin{figure}[t]
\centering
\includegraphics[width=\textwidth]{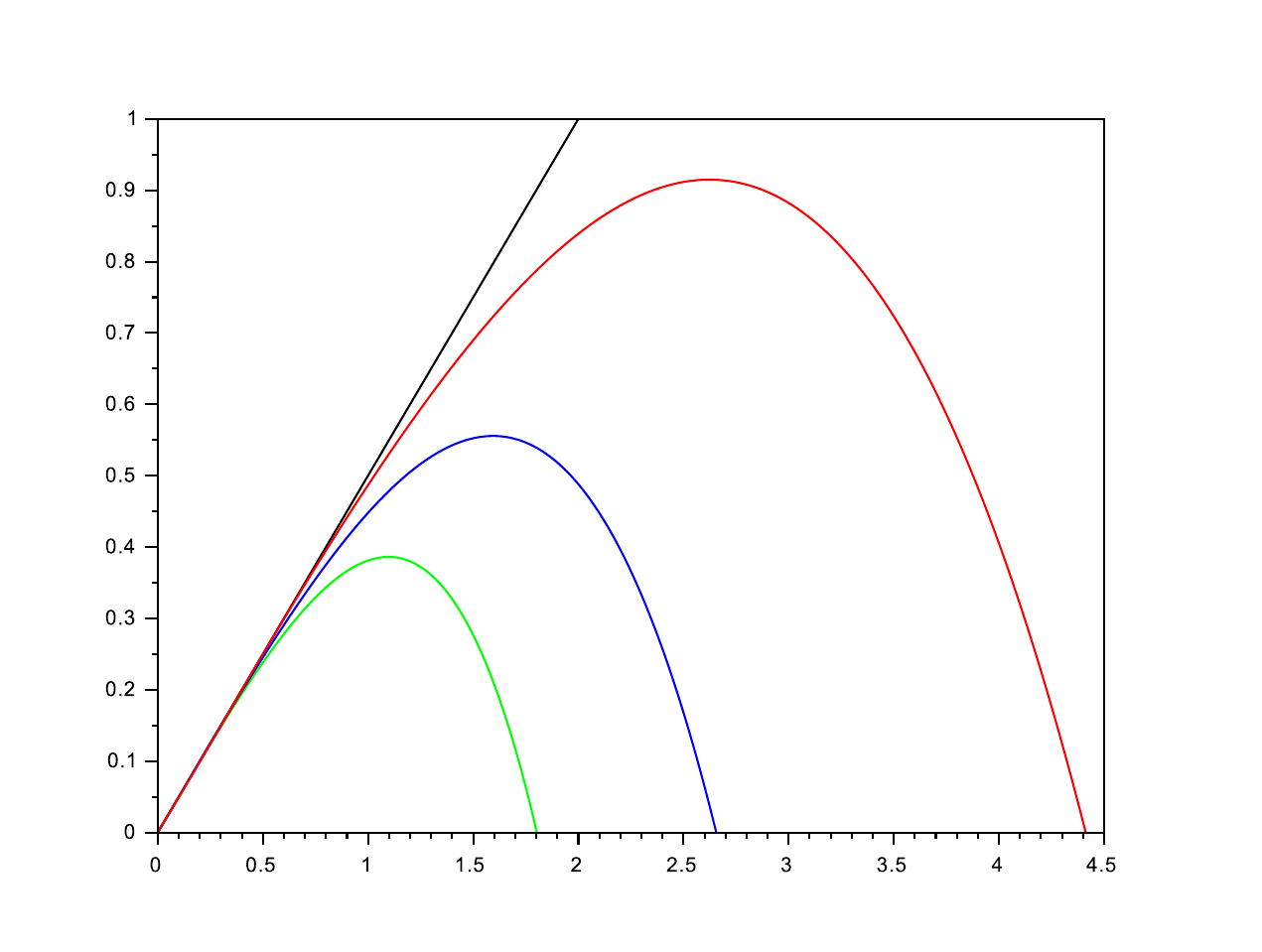}
\caption{Efficient Frontier, as average expected excess return (vertical axis) against volatility (horizontal axis), for an asset with Sharpe ratio $\mu/\sigma=0.5$, for various levels of asset volatility, from $10\%$ (bottom), $20\%$, to $50\%$ (top), for a bid-ask spread $\varepsilon = 1\%$. The straight line is the frictionless frontier. }
   \label{fig:mvt constant sharpe}
\end{figure}
In frictionless markets, two perfectly correlated assets with equal Sharpe ratio generate the same efficient frontier, and in fact the same payoff space. This equivalence fails in the presence of trading costs: as the more volatile asset has a proportionally higher return, it can be traded to generate higher returns with lower leverage ratios, resulting in an efficient frontier that dominates (for high returns) the one generated by the less volatile asset. Figure \ref{fig:mvt constant sharpe} (top of the three curves) displays this phenomenon: for example, a portfolio with an average return of 50\% net of trading costs is obtained from an asset with 25\% return and 50\% volatility at a small cost, as an average leverage factor of 2 entails moderate rebalancing.

Achieving the same 50\% return from an asset with 20\% volatility (and 10\% return) is more onerous: trading costs require leverage higher than 5, which in turn increases trading costs. Overall, the resulting portfolio needs about 120\% rather than 100\% volatility 
to achieve the desired 50\% average return (middle curve in Figure \ref{fig:mvt constant sharpe}).

From an asset with 10\% volatility (and 5\% return), obtaining a 50\% return net of trading costs is impossible (bottom curve in Figure \ref{fig:mvt constant sharpe}), because the leverage multiplier is less than 8 (Table \ref{tab:mult}, top right), and therefore the return can be scaled to less than 40\%. The intuition is clear: increasing leverage also increases trading costs, which in turn call for more leverage to increase return, but also increase costs. At some point, the marginal net return from more leverage becomes zero, and further increases are detrimental.

Because an asset with higher volatility is superior to another one, with equal Sharpe ratio and perfectly correlated, but with lower volatility, the model suggests that in equilibrium they cannot coexist, and that the asset with lower volatility should offer a higher return to be held by investors. Indeed, \cite{frazzini2012embedded,frazzini2014betting} document significant negative excess returns in assets with embedded leverage (higher volatility), and offer a theoretical explanation based on heterogeneous leverage constraints, which lead more constrained investors to bid up prices (and hence lower returns) of more volatile assets. Our results hint that the same phenomenon may arise even in the absence of constraints, as a result of rebalancing costs. In contrast to constraints-based explanations, our model suggests that the premium for embedded leverage should be higher for more illiquid assets.

\subsection{From risk aversion to risk neutrality}\label{sec convergence}
Theorems \ref{thm free b} and \ref{th multiplier} are qualitatively different: while Theorem \ref{thm free b} with positive risk aversion leads to a regular perturbation of the Markowitz-Merton solution, Theorem \ref{th multiplier} with risk-neutrality leads to a novel result with no meaningful analogue  in the frictionless setting -- a singular perturbation. Furthermore, a close reading of the statement of Theorem \ref{thm free b} shows that the existence of a solution to the free-boundary problem, and the asymptotic expansions, hold for $\varepsilon$ less than some threshold $\bar\varepsilon(\gamma)$ that depends on the risk aversion $\gamma$. In particular, if $\gamma$ approaches zero while $\varepsilon$ is held constant, Theorem \ref{thm free b} does not offer any conclusions on the convergence of the risk-averse to the risk-neutral solution. Still, if the risk-neutral result is to be accepted as a genuine phenomenon rather than an artifact, it should be clarified whether the risk averse trading policy and its performance converge to their risk neutral counterparts as risk aversion vanishes. The next result resolves this point under some parametric restrictions. 
Denote by
\[
\textstyle{G(\zeta):=\frac{\varepsilon}{(1+\zeta)(1+(1-\varepsilon)\zeta)},\quad h(\zeta)=\mu\left(\frac{\zeta}{1+\zeta}\right)-\frac{\gamma\sigma^2}{2}\left(\frac{\zeta}{1+\zeta}\right)^2}
\]
and associate to any solution  $(W(\cdot;\gamma),\zeta_-(\gamma),\zeta_+(\gamma))$ of the free boundary problem \eqref{eq: TKA fbp} the function 
\[
\textstyle{\hat W(\zeta;\gamma):=\begin{cases} 0,\quad \quad&\zeta<\zeta_-(\gamma)\\ W(\zeta;\gamma),\quad &\zeta\in [\zeta_-(\gamma),\zeta_+(\gamma)]\\ G(\zeta),\quad\quad& \zeta\geq \zeta_+(\gamma)\end{cases},}
\]
which naturally extends $W$ to the left and right of the free-boundaries.

\begin{theorem}\label{lem convergence}
Let $\mu>\sigma^2$, $\bar\varepsilon>0$, and $\bar \gamma>0$, and assume that for any $\gamma\in [0,\bar\gamma]$ the free boundary problem \eqref{eq: TKA fbp} has a unique solution $(W,\zeta_-,\zeta_+)$ satisfying $\zeta_+<-1/(1-\varepsilon)$ and that the function $\hat W$
satisfies, for each  $\gamma\in (0,\bar\gamma]$, the HJB equation
\begin{equation}\label{HJB chapter 5x}
\textstyle{\min\left(\frac{\sigma^2}{2}\zeta^2 \hat W'+\mu \zeta  \hat W- h(\zeta)+h(\zeta_-), G(\zeta)-\hat W,\hat W\right)=0.}
\end{equation}
Then, \eqref{HJB chapter 5x} is satisfied also for $\gamma=0$, and for each $\gamma\in [0,\bar \gamma]$, the trading strategy that buys at $\pi_- (\gamma)=\frac{\zeta_-(\gamma)}{1+\zeta_-(\gamma)}$
and sells at $\pi_+ (\gamma)=\frac{\zeta_+(\gamma)}{1+\zeta_+(\gamma)}$ to keep the risky weight $\pi_t$ within the interval $[\pi_-(\gamma),\pi_+(\gamma)]$ is optimal.
Furthermore, $\zeta_\pm(\gamma)\rightarrow \zeta_\pm(0)$ and $\hat W(\zeta;\gamma)\rightarrow \hat W(\zeta;0)$ as $\gamma \downarrow 0$, each $\zeta\in\mathbb R$.
\end{theorem}

In summary, this result confirms that, as the risk-aversion parameter $\gamma$ declines to zero, the risk-averse policy in Theorem \ref{thm free b} converges to the risk-neutral policy in Theorem \ref{th multiplier}, and that the corresponding mean-variance objective in Theorem \ref{thm free b} converges to the average return in Theorem \ref{th multiplier}.

\section{Heuristic Solution}\label{sec: Heuristics}
This section offers a heuristic derivation of the HJB equation. Let $(\varphi_t^\uparrow)_{t\ge 0}$ and $(\varphi_t^\downarrow)_{t\ge 0}$ denote the cumulative number of shares bought and sold, respectively. The finite-horizon objective \eqref{eq: obj abstract} reduces to the expression (compare eq.~\eqref{eq main problem 1} in Lemma \ref{le: rewriting obj fun} below)
\begin{equation}
\max_{\varphi\in \Phi}\mathbb E\left[\int_0^T \left(\mu \pi_t-\frac{\gamma \sigma^2}{2}\pi_t^2\right)dt-\varepsilon\int_0^T\pi_t\frac{d\varphi^\downarrow_t}{\varphi_t} \right].
\end{equation}
From the outset, this objective is scale-invariant: doubling the initial number of risky shares and safe units, and also doubling the number of shares $\varphi_t$ held at time $t$ results in doubling also the number of safe units at time $t$ (through the self-financing condition), thereby leaving $d\varphi_t/\varphi_t$, $\pi_t = S_t \varphi_t / X_t$, and hence the objective, unchanged. 
Thus, we conjecture that the residual value function $V$ depends on the calendar time $t$ and on the variable $\zeta_t = \pi_t/(1-\pi_t)$, which denotes the number of shares held for each unit of the safe asset. In terms of this variable, the conditional value of the above objective at time $t$ becomes:
\begin{equation}\textstyle
\textstyle{F^\varphi(t) = \int_0^t \left(\mu \frac{\zeta_s}{1+\zeta_s} -\frac{\gamma \sigma^2}{2}\frac{\zeta_s^2}{(1+\zeta_s)^2}\right)ds-\varepsilon\int_0^t \frac{\zeta_s}{1+\zeta_s}\frac{d\varphi^\downarrow_s}{\varphi_s} 
+ V(t,\zeta_t)
.}
\end{equation}
By It\^o's formula, the dynamics of $F^\varphi$ is (henceforth the arguments of $V$ are omitted for brevity)
\begin{align*}
dF^\varphi(t) &= \left(\frac{\mu\zeta_t}{1+\zeta_t} -\frac{\gamma \sigma^2}{2}\frac{\zeta_t^2}{(1+\zeta_t)^2}\right)dt -\frac{\varepsilon\zeta_t}{1+\zeta_t}\frac{d\varphi^\downarrow_t}{\varphi_t} +V_tdt + V_\zeta d\zeta_t + \frac12 V_{\zeta\zeta}d\langle\zeta\rangle_t,
\end{align*}
where the subscripts of $V$ denote partial derivatives. Recall now the self-financing condition for the safe position $X_t$ and the risky position $Y_t$:
\begin{equation*}
dX_t = rX_tdt-S_td\varphi^\uparrow_t+(1-\varepsilon) S_t d\varphi^\downarrow_t,\quad dY_t = S_t d\varphi^\uparrow_t- S_td \varphi^\downarrow_t+\varphi_t dS_t
,
\end{equation*}
which implies the dynamics for the risky-safe ratio $\zeta_t$
\begin{equation*}
\frac{d\zeta_t}{\zeta_t} = \mu dt + \sigma dW_t + (1+\zeta_t) \frac{d\varphi_t}{\varphi_t} + \varepsilon \zeta_t \frac{d\varphi^\downarrow_t}{\varphi_t},
\end{equation*}
whence the dynamics of $F^\varphi$ simplifies to
\begin{align}
\textstyle{dF^\varphi(t) =}&\textstyle{ 
\left(\mu \frac{\zeta_t}{1+\zeta_t} -\frac{\gamma \sigma^2}{2}\frac{\zeta_t^2}{(1+\zeta_t)^2}
+V_t + \frac{\sigma^2}2 \zeta_t^2 V_{\zeta\zeta} + \mu \zeta_t V_\zeta
\right)dt}\\
-& \textstyle{\zeta_t \left(V_\zeta (1+(1-\varepsilon)\zeta_t) + \frac{\varepsilon}{1+\zeta_t} \right)\frac{d\varphi^\downarrow_t}{\varphi_t} 
+ \zeta_t (1+\zeta_t) V_\zeta \frac{d\varphi^\uparrow_t}{\varphi_t}
+\sigma \zeta_t V_\zeta dW_t.}
\end{align}
Now, by the martingale principle of optimal control \citep{davis1973dynamic} the process $F^\varphi(t)$ above needs to be a supermartingale for any trading policy $\varphi$ and a martingale for the optimal policy. As $\varphi^\uparrow$ and $\varphi^\downarrow$ are increasing processes, the supermartingale condition implies\footnote{In particular, the coefficients of  $\frac{d\varphi^\uparrow_t}{\varphi_t}$ and $\frac{d\varphi^\downarrow_t}{\varphi_t}$ need to be negative. As short positions are never optimal (cf. Remark \ref{lem integrability} and footnote \ref{foot special}), it follows that $0<\pi_t<1/\varepsilon$, whence only two cases arise: (a) $\zeta_t<-1/(1-\varepsilon)$, or (b) $\zeta_t>0$. In both cases $\zeta(1+\zeta)>0$ and $-\zeta (1+(1-\varepsilon)\zeta)<0$, whence \eqref{eq:boucon} follows.} the inequalities
\begin{align}\label{eq:boucon}
\textstyle{-\frac{\varepsilon}{(1+\zeta)(1+(1-\varepsilon)\zeta)} \le V_\zeta \le 0 ,}
\end{align}
and the martingale condition prescribes that the left (respectively, right) inequality becomes an equality at the points of increase of $\varphi^\downarrow$ (resp. $\varphi^\uparrow$). 
Likewise, it follows that 
\begin{equation*}
\textstyle{\mu\frac{\zeta}{1+\zeta} -\frac{\gamma \sigma^2}{2}\frac{\zeta^2}{(1+\zeta)^2}
+V_t + \frac{\sigma^2}2 \zeta^2 V_{\zeta\zeta} + \mu \zeta V_\zeta \le 0}
\end{equation*}
with the inequality holding as an equality whenever both inequalities in \eqref{eq:boucon} are strict. To achieve a stationary (that is, time-homogeneous) system, suppose that the residual value function  is of the form $V(t,\zeta) = \lambda (T-t) - \int^\zeta W(z)dz$ for some $\lambda$ to be determined, which represents the average optimal performance over a long period of time. Replacing this parametric form of the solution, the above inequalities become
\begin{align}
\textstyle{0\le  W(\zeta) \le \frac{\varepsilon}{(1+\zeta)(1+(1-\varepsilon)\zeta)},} \\
\textstyle{\mu \frac{\zeta}{1+\zeta} -\frac{\gamma \sigma^2}{2}\frac{\zeta^2}{(1+\zeta)^2}
-\lambda - \frac{\sigma^2}2 \zeta^2 W'(\zeta) - \mu \zeta W(\zeta) \le 0.}
\end{align}
Assuming further that the first inequality holds over some interval $[\zeta_-,\zeta_+]$, with each inequality reducing to an equality at the respective endpoint, the optimality conditions become
\begin{align}\label{eq:hjb}
\textstyle{\frac{\sigma^2}2 \zeta^2 W'(\zeta) + \mu \zeta W(\zeta) 
-\mu \frac{\zeta}{1+\zeta} +\frac{\gamma \sigma^2}{2}\frac{\zeta^2}{(1+\zeta)^2}
+\lambda  =}&\textstyle{ 0 \qquad \text{for }\zeta\in[\zeta_-,\zeta_+],}\\
\label{eq:bou1}
\textstyle{W(\zeta_-) = 0,\quad W(\zeta_+) = \frac{\varepsilon}{(\zeta_+ +1)(1+(1-\varepsilon)\zeta_+)}}&
,
\end{align}
which lead to a family of candidate value functions, each of them corresponding to a pair or boundaries ($\zeta_-, \zeta_+$). The optimal boundaries are identified by the smooth-pasting conditions, formally derived by differentiating eqs.~\eqref{eq:bou1} with respect to their boundaries
\begin{equation}
\label{eq:smooth1}
\textstyle{W'(\zeta_-) = 0,\quad W'(\zeta_+) = \frac{\varepsilon(\varepsilon-2(1-\varepsilon)\zeta_+-2)}{(1+\zeta_+)^2(1+(1-\varepsilon)\zeta_+)^2}.}
\end{equation}
These conditions identify the value function. The four unknowns are the free parameter in the general solution to the ordinary differential equation \eqref{eq:hjb}, the free boundaries $\zeta_-$ and $\zeta_+$, and the optimal rate $\lambda$. These quantities are identified by the boundary and smooth-pasting conditions \eqref{eq:bou1}--\eqref{eq:smooth1}.

\section{Conclusion}\label{sec: conclusion}

The costs of rebalancing a leveraged portfolio are substantial, and detract from its ostensible frictionless return. As leverage increases, such costs rise faster than the return, making it impossible for an investor to lever an asset's return beyond a certain multiple, net of trading costs. 

In contrast to the frictionless theory, trading costs make the risk-return trade-off nonlinear. An investor who seeks high return prefers an asset with high volatility to another one with equal Sharpe ratio but lower volatility, because higher volatility makes leverage cheaper to realize. A risk-neutral, return-maximizing investor does not take infinite leverage, but rather keeps it within a band that balances high exposure with low rebalancing costs.

\appendix

\section{Admissible Strategies}\label{sec: prereq}

In view of transaction costs, only finite-variation trading strategies are consistent with solvency. Denote by $X_t$ and $Y_t$ the wealth in the safe and risky positions respectively, and by $(\varphi_t^\uparrow)_{t\ge 0}$ and $(\varphi_t^\downarrow)_{t\ge 0}$ the cumulative number of shares bought and sold, respectively. The self-financing condition prescribes that $(X,Y)$ satisfy the dynamics
\begin{equation}\label{eq: sf1}
dX_t = rX_tdt-S_td\varphi^\uparrow_t+(1-\varepsilon) S_t d\varphi^\downarrow_t,\quad dY_t = S_t d\varphi^\uparrow_t- S_td \varphi^\downarrow_t+\varphi_t dS_t
.
\end{equation}

A strategy is admissible if it is nonanticipative and solvent, up to a small increase in the spread:
\begin{definition}\label{def: admiss}
Let $x>0$ (the initial capital) and let $(\varphi_t^\uparrow)_{t\ge 0}$ and $(\varphi_t^\downarrow)_{t\ge 0}$ be continuous, increasing processes, adapted to the augmented natural filtration of $B$. Then $(x, \;\varphi_t=\varphi_t^\uparrow-\varphi_t^\downarrow)$ is an \emph{admissible trading strategy} if 
\begin{enumerate}
\item its liquidation value is strictly positive at all times:
There exists $\varepsilon'>\varepsilon$ such that the discounted asset $\widetilde S_t:=e^{-rt}S_t$ satisfies
\begin{equation}\label{eq: liquidity}
x - \int_0^t \widetilde S_s d\varphi_s +\widetilde S_t \varphi_t - \varepsilon'\int_0^t \widetilde S_s d\varphi^\downarrow_s -\varepsilon'\varphi_t^+ \widetilde S_t > 0\qquad\text{a.s. for all }t\ge 0.
\end{equation}
 \item\label{part2 Lemma A21} 
The following integrability conditions hold\footnote{
Note that $\frac{\pi_t}{\varphi_t} = \frac{S_t}{w_t}$, therefore on the set $\{(\omega,t) : \varphi_t = 0\}$ the quantity $\frac{\pi_t}{\varphi_t}$ is well-defined.}
\begin{equation}\label{eq: admissible trading}
\mathbb E\left[\int_0^t \vert\pi_u\vert^2 du\right]<\infty,\quad\mathbb E\left[\int_0^t \pi_u \frac{d\|\varphi_u\|}{\varphi_u}\right]<\infty \quad\text{ for all }t\ge 0,
\end{equation}
where $\|\varphi_t\|$ denotes the total variation of $\varphi$ on $[0,t]$. 
\end{enumerate}
The family of admissible trading strategies is denoted by $\Phi$.
\end{definition}

The following lemma describes the dynamics of the wealth process $w_t$, the risky weight $\pi_t$, and the risky-safe ratio $\zeta_t$. 
\begin{lemma}\label{le: rewriting obj fun}
For any admissible trading strategy $\varphi$, \footnote{The notation $\frac{dx_t}{x_t}=dy_t$ means $x_t=x_0+\int_0^t x_s dy_s$, hence the SDEs are well defined even for null $x_t$.}
\begin{align}\label{eq zeta diff}
\frac{d\zeta_t}{\zeta_t}&=\mu dt +\sigma dB_t+(1+\zeta_t)\frac{d\varphi_t^\uparrow}{\varphi_t}    -(1+(1-\varepsilon)\zeta_t)\frac{d\varphi_t^\downarrow}{\varphi_t},\\\label{eq w diff}
\frac{dw_t}{w_t}&=r dt+\pi_t(\mu dt+\sigma dB_t-\varepsilon \frac{d\varphi_t^\downarrow}{\varphi_t}),\\\label{eq pi diff}
\frac{d\pi_t}{\pi_t}&=(1-\pi_t)(\mu dt+\sigma dB_t)-\pi_t(1-\pi_t)\sigma^2 dt+\frac{d\varphi_t^\uparrow}{\varphi_t}-(1-\varepsilon\pi_t)\frac{d\varphi_t^\downarrow}{\varphi_t}.
\end{align}
For any such strategy, the functional
\begin{equation}\label{eq: finite mean var fun}
F_T(\varphi):=
\frac1 T \mathbb E\left[ \int_0^T \frac{dw_t}{w_t}-\frac\gamma 2\left\langle  \int_0^T \frac{d w_t}{w_t}\right\rangle_T\right]
\end{equation}
equals to
\begin{equation}\label{eq main problem 1}
F_T(\varphi)=r+\frac{1}{T}\mathbb E\left[\int_0^T \left(\mu \pi_t-\frac{\gamma \sigma^2}{2}\pi_t^2\right)dt-\varepsilon\int_0^T\pi_t\frac{d\varphi^\downarrow_t}{\varphi_t} \right].
\end{equation}
\end{lemma}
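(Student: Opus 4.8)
The plan is to obtain the three stochastic differentials \eqref{eq zeta diff}--\eqref{eq pi diff} directly from the self-financing dynamics \eqref{eq: sf1}--\eqref{eq: sf2} and the price equation for $S$, and then to read off \eqref{eq main problem 1} by integrating \eqref{eq w diff} and identifying its quadratic variation. Throughout I write $w_t=X_t+Y_t$ for total wealth, $\pi_t=Y_t/w_t$, and $\zeta_t=Y_t/X_t$, and I use the bookkeeping identities $Y_t=\varphi_t S_t$ (the risky position at the ask price), $\zeta_t=\pi_t/(1-\pi_t)$, and $\pi_t/\varphi_t=S_t/w_t$; the last of these makes the integrands in \eqref{eq: admissible trading} and in \eqref{eq main problem 1} well defined even where $\varphi_t=0$. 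Because $X$ is of finite variation and $Y$ inherits a martingale part only through $S$, all quadratic-variation terms below come solely from the Brownian component of $S$.

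First I would derive \eqref{eq w diff}. Adding \eqref{eq: sf1} and \eqref{eq: sf2}, the transaction terms $\pm S_t\,d\varphi^\uparrow_t$ and $\mp S_t\,d\varphi^\downarrow_t$ cancel up to the spread, leaving $dw_t=rX_t\,dt+\varphi_t\,dS_t-\varepsilon S_t\,d\varphi^\downarrow_t$. Substituting $\varphi_t\,dS_t=Y_t(\mu+r)\,dt+\sigma Y_t\,dB_t$ and regrouping $r(X_t+Y_t)=rw_t$ gives $dw_t=rw_t\,dt+Y_t(\mu\,dt+\sigma\,dB_t)-\varepsilon S_t\,d\varphi^\downarrow_t$; dividing by $w_t$ and using $Y_t/w_t=\pi_t$ and $S_t/w_t=\pi_t/\varphi_t$ yields \eqref{eq w diff}. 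Next I would compute $d\zeta_t=d(Y_t/X_t)$. Since $X$ is continuous of finite variation, $\langle X\rangle=\langle X,Y\rangle=0$, so the quotient rule collapses to $d\zeta_t=X_t^{-1}\,dY_t-X_t^{-2}Y_t\,dX_t$. Inserting \eqref{eq: sf1}, \eqref{eq: sf2} and the price dynamics, and converting ratios with $S_t/X_t=\zeta_t/\varphi_t$, the drift--diffusion part reduces to $\zeta_t(\mu\,dt+\sigma\,dB_t)$ (the interest rate from $\varphi_t\,dS_t/X_t$ cancels against $-rY_t/X_t\,dt$), while the $d\varphi^\uparrow$ terms combine to $\zeta_t(1+\zeta_t)\,d\varphi^\uparrow_t/\varphi_t$ and the $d\varphi^\downarrow$ terms to $-\zeta_t(1+(1-\varepsilon)\zeta_t)\,d\varphi^\downarrow_t/\varphi_t$, which is \eqref{eq zeta diff}.

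Equation \eqref{eq pi diff} then follows by Itô's formula applied to $g(\zeta)=\zeta/(1+\zeta)$: using $g'=(1+\zeta)^{-2}$, $g''=-2(1+\zeta)^{-3}$ and $d\langle\zeta\rangle_t=\sigma^2\zeta_t^2\,dt$, every coefficient simplifies through the substitutions $\zeta/(1+\zeta)=\pi$ and $1/(1+\zeta)=1-\pi$ to the stated form. Finally, integrating \eqref{eq w diff} gives $\int_0^T dw_t/w_t=rT+\int_0^T(\mu\pi_t\,dt+\sigma\pi_t\,dB_t)-\varepsilon\int_0^T\pi_t\,d\varphi^\downarrow_t/\varphi_t$, and the only martingale term contributes $\langle\int_0^\cdot dw_t/w_t\rangle_T=\sigma^2\int_0^T\pi_t^2\,dt$ to the quadratic variation appearing in \eqref{eq: finite mean var fun}. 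Substituting these into \eqref{eq: finite mean var fun} and invoking the first bound in \eqref{eq: admissible trading}, namely $\mathbb E[\int_0^T\pi_u^2\,du]<\infty$, to conclude that $\int_0^\cdot\sigma\pi_t\,dB_t$ is a true (square-integrable) martingale with zero expectation, the Brownian term drops out and one is left with \eqref{eq main problem 1}.

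The main obstacle is rigor at the degeneracies rather than the algebra: the variable $\zeta_t=Y_t/X_t$ passes through $\pm\infty$ whenever $X_t=0$ (that is, $\pi_t=1$), and in the leverage regime $X_t<0$, so the forms ``$d\zeta_t/\zeta_t$'' and ``$d\pi_t/\pi_t$'' must be read in the multiplicative (stochastic-exponential) sense flagged in the footnote to Lemma \ref{le: rewriting obj fun}, under which the SDEs remain well posed through zeros of $X$, $\varphi$ and $\pi$. Care is also needed to justify the quotient/Itô computation across these sign changes and to confirm, via the second condition in \eqref{eq: admissible trading}, that the cost integral $\int_0^T\pi_t\,d\varphi^\downarrow_t/\varphi_t$ is integrable so that the expectation in \eqref{eq main problem 1} is finite; the martingale step is the only place where admissibility is genuinely essential.
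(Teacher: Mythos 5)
Your proposal is correct and follows essentially the same route as the paper's own proof: both derive the three SDEs from the self-financing conditions \eqref{eq: sf1}--\eqref{eq: sf2} via the quotient rule for $\zeta_t=Y_t/X_t$ (using that $X$ is continuous and of finite variation, so the covariation corrections vanish), obtain \eqref{eq pi diff} by It\^o's formula applied to $\pi$ as a function of $\zeta$, and then read off \eqref{eq main problem 1} from \eqref{eq w diff} by taking expectations, with the admissibility conditions \eqref{eq: admissible trading} guaranteeing that the Brownian integral is a true martingale and the cost integral is integrable. Your write-up is in fact more explicit than the paper's (which leaves the martingale step and the degeneracy issues implicit), but the underlying argument is the same.
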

\begin{proof}
The self-financing conditions \eqref{eq: sf1} imply that
\begin{align}\label{eq: cash we have}
\frac{dX_t}{X_t} =& rdt-\zeta_t \frac{d\varphi_t^\uparrow}{\varphi_t}+(1-\varepsilon) \zeta_t \frac{d\varphi_t^\downarrow}{\varphi_t},
\\
\label{eq: stock we have}
\frac{dY_t}{Y_t} =& \frac{d\varphi_t^\uparrow}{\varphi_t}-\frac{d\varphi_t^\downarrow}{\varphi_t}+\frac{dS_t}{S_t},
\\
\label{eq: Ito app}
\frac{d(Y_t/X_t)}{Y_t/X_t} =&
\frac{dY_t}{Y_t}-\frac{dX_t}{X_t}+\frac{d\langle X\rangle_t}{X^2_t}-
\frac{d\langle X,Y\rangle_t}{X_t Y_t}=\frac{dY_t}{Y_t}-\frac{dX_t}{X_t}.
\end{align}
Equation \eqref{eq zeta diff} follows from the last equation, and \eqref{eq  w diff} holds in view of equations \eqref{eq: cash we have} and \eqref{eq: stock we have}. Equation \eqref{eq pi diff} follows from the identity $\pi_t=1-\frac{1}{1+\zeta_t}$ and \eqref{eq zeta diff}.
The expression in \eqref{eq main problem 1} for the objective functional follows from equation \eqref{eq w diff}.
\end{proof}

The following lemma shows that, without loss of generality, it is enough to consider trading strategies which do not take short positions in the risky asset. 
\begin{lemma}\label{lem integrability}
If $\varphi\in\Phi$ is optimal for \eqref{eq: obj asympt}, then also the strategy $\hat{\varphi}_t:=\varphi_t 1_{\{\varphi_t\geq 0\}}$  is optimal.
\end{lemma}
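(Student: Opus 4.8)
The plan is to prove the \emph{sharper} statement that an optimal $\varphi$ never holds a short risky position, from which the lemma follows at once: since $\pi_t = S_t\varphi_t/w_t$ and $w_t>0$ by solvency, the sign of $\varphi_t$ equals that of the risky weight $\pi_t$, so once $\pi_t\ge 0$ for all $t$ (a.s.) we have $\hat\varphi\equiv\varphi$ and $\hat\varphi$ inherits optimality trivially. The engine of the argument is the scale invariance already exploited in Section \ref{sec: Heuristics}: by Lemma \ref{le: rewriting obj fun} the functional \eqref{eq main problem 1} depends on $\varphi$ only through the weight process $\pi$, because the running reward $h(\pi_t):=\mu\pi_t-\tfrac{\gamma\sigma^2}{2}\pi_t^2$ and the cost $\varepsilon\int_0^\cdot \pi_t\,d\varphi^\downarrow_t/\varphi_t=\varepsilon\int_0^\cdot (S_t/w_t)\,d\varphi^\downarrow_t$ are scale invariant, and the weight dynamics \eqref{eq pi diff} are autonomous in $\pi$ given the relative trading rates $d\varphi^{\uparrow}_t/\varphi_t$ and $d\varphi^{\downarrow}_t/\varphi_t$.

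Concretely, given an optimal $\varphi$ with weight $\pi$, I would introduce the reflected target weight $\bar\pi_t:=\max(\pi_t,0)=\pi_t^+$ and construct an admissible strategy $\bar\varphi$ realizing it: $\bar\varphi$ mirrors the relative trades of $\varphi$ on $\{\pi_t\ge 0\}$, and holds the entire wealth in the safe asset (zero shares, hence weight $0$) throughout each excursion $\{\pi_t<0\}$. Maintaining weight $0$ requires no trading, because $\pi\equiv 0$ is absorbing for \eqref{eq pi diff} in the absence of orders; and at the instants when $\pi$ crosses $0$ the two strategies coincide (both hold zero shares), so $\bar\varphi$ reattaches to $\pi$ on the next long interval. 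Admissibility of $\bar\varphi$ (Definition \ref{def: admiss}) is inherited: $\bar\varphi\ge 0$ keeps the liquidation value \eqref{eq: liquidity} positive, and the integrability conditions \eqref{eq: admissible trading} hold because $\bar\varphi$ executes a subset of the trades of $\varphi$ with $|\bar\pi|\le|\pi|$.

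The comparison of \eqref{eq main problem 1} for $\bar\varphi$ and $\varphi$ then splits according to the sign of $\pi_t$. On $\{\pi_t\ge 0\}$ the weight paths coincide, so the reward integrands agree and the selling measures agree. On $\{\pi_t<0\}$ we have $\bar\pi_t=0$, hence $h(\bar\pi_t)=0>h(\pi_t)$ (using $\mu>0$ and $\gamma\ge0$, so $h<0$ on $(-\infty,0)$), while $\bar\varphi$ sells nothing and $\varphi$ pays nonnegative costs; moreover the selling needed to realize $\pi^+$ is precisely the selling of $\pi$ restricted to $\{\pi_t\ge0\}$, so the total cost of $\bar\varphi$ is no larger than that of $\varphi$. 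Consequently $\esr(\bar\varphi)\ge\esr(\varphi)$, with strict inequality whenever $\varphi$ is short on a set of positive average measure. Since $\varphi$ is optimal, strict improvement is impossible, forcing $\pi_t\ge 0$ for a.e.\ $t$, a.s.; continuity of $\pi$ upgrades this to all $t$, whence $\hat\varphi\equiv\varphi$ and the lemma follows.

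The step I expect to be the main obstacle is the rigorous construction of $\bar\varphi$ and the identity $\bar\pi=\pi^+$, together with the cost inequality. The difficulty is entirely at the times when $\pi$ visits $0$: there the relative trading rates $d\varphi^{\uparrow,\downarrow}/\varphi$ in \eqref{eq pi diff} degenerate as the share base vanishes, so one must argue through the well-defined quantities $\pi\,d\varphi^{\downarrow}/\varphi=(S/w)\,d\varphi^{\downarrow}$ and verify, via the autonomy of the weight equation, that pausing at $0$ during short excursions and resuming afterwards reproduces the weight path $\pi$ on $\{\pi\ge0\}$ while strictly economizing on trading. A secondary point is to confirm that the strict-improvement conclusion survives the passage to the $\limsup$ in \eqref{eq: obj asympt}, which is routine once the finite-horizon inequality $F_T(\bar\varphi)\ge F_T(\varphi)$ is established uniformly in $T$.
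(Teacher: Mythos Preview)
Your route through the ``sharper'' statement---that an optimal $\varphi$ never holds a short position---is not merely more elaborate than needed; it targets a claim that is actually \emph{false}. The objective \eqref{eq: obj asympt} is a long-run time average, so modifying a strategy on a bounded time interval leaves the $\esr$ unchanged. In particular, take the optimal reflection strategy of Theorem~\ref{thm free b}, alter it to carry a small short position on $[0,1]$, and then resume: the result is admissible (for a sufficiently small short) and has the same $\esr$, hence is optimal, yet goes short. Your deduction ``strict inequality whenever $\varphi$ is short on a set of positive average measure \dots\ forcing $\pi_t\ge 0$ for a.e.\ $t$'' breaks precisely here: ruling out positive \emph{average} short-time does not yield $\pi_t\ge 0$ for Lebesgue-a.e.\ $t$ (the short set can have infinite Lebesgue measure yet zero time-average density, e.g.\ $\bigcup_n[n^2,n^2+1]$), so the continuity upgrade to all $t$ is unfounded, and the conclusion $\hat\varphi\equiv\varphi$ fails. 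The remark that this passage to the $\limsup$ is ``routine'' is exactly where the argument collapses.

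The paper avoids all of this by arguing directly that the truncated strategy $\hat\varphi=\varphi\,1_{\{\varphi\ge0\}}$ is admissible and satisfies $F_T(\hat\varphi)\ge F_T(\varphi)$ for every $T$, via the representation \eqref{eq main problem 1}: when $\varphi_t<0$ one has $\hat\pi_t=0$, which improves the drift term (since $\mu\ge 0$), the variance penalty, and the selling cost simultaneously. No claim that the original $\varphi$ was already nonnegative is made or needed---the lemma asserts only that $\hat\varphi$ is \emph{also} optimal. Your separate construction of a competitor $\bar\varphi$ with weight $\pi^+$, and the attendant difficulties at the zero crossings you anticipate, are unnecessary once you aim for this weaker (and correct) conclusion.
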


\begin{proof}
Due to Lemma \ref{le: rewriting obj fun}, the objective functional has the equivalent form \eqref{eq main problem 1},  (letting $T\rightarrow\infty$).  It is clear that $\hat\varphi$ is an admissible trading strategy if $\varphi$ is. Furthermore, as $\mu\geq 0$,
$\mu\hat\pi_t\geq \mu\pi_t$ at all times $t$, and $\hat \pi_t=0$ whenever $\varphi_t<0$,
 whence $F_T(\hat\varphi)\geq F_T(\varphi)$ for each $T>0$.
\end{proof} 

\begin{remark}\label{rem: range of trade}
In view of this Lemma and admissibility, it suffices to consider trading strategies which satisfy $0\leq \pi_t\leq 1/\varepsilon$, or, in terms of the risky-safe ratio, $\zeta_t<-1/(1-\varepsilon)$ or $\zeta_t\geq 0$.
\end{remark}

\section{Risk Aversion and Efficient Frontier}\label{appendix HJB}
This section contains a series of propositions that lead to the proof of Theorem \ref{thm free b} \ref{thm free b part 1}--\ref{thm free b part 3}. Part \ref{thm free b part 4} of the theorem is postponed to Appendix \ref{app proof main theorem}.
Set
\begin{equation}
G(\zeta):=\frac{\varepsilon}{(1+\zeta)(1+(1-\varepsilon)\zeta)}
\qquad\text{and}\qquad
h(\zeta):=\mu \left(\frac{\zeta}{1+\zeta}\right)-\frac{\gamma \sigma^2}{2}\left(\frac{\zeta}{1+\zeta}\right)^2.
\end{equation}
Defining $H:=h'$,  the free boundary problem \eqref{eq: TKA fbp}--\eqref{terminal1 TKA}
reduces to
\begin{align}\label{eq: TKA fbp re}
&\frac{1}{2}\sigma^2 \zeta^2 W''(\zeta)+(\sigma^2+\mu)\zeta W'(\zeta)+\mu W(\zeta)-H(\zeta)=0,\\\label{initial0 TKA re}
&W(\zeta_-)=0,\\\label{initial1 TKA re}
&W'(\zeta_-)=0,\\\label{terminal0 TKA re}
&W(\zeta_+)=G(\zeta_+),\\\label{terminal1 TKA re}
&W'(\zeta_+)=G'(\zeta_+).
\end{align}
\begin{proposition}\label{prop1}
Let $\gamma>0$ and $\pi_*\neq 1$. For sufficiently small $\varepsilon$, the free boundary problem \eqref{eq: TKA fbp re}--\eqref{terminal1 TKA re}  has a unique solution $(W,\zeta_-,\zeta_+)$, with $\zeta_-<\zeta_+$.  The free boundaries have the asymptotic expansion
\begin{equation}\label{eq: asymptotics zeta}
\textstyle{\zeta_\pm=\frac{\pi_*}{1-\pi_*}\pm\left(\frac{3}{4\gamma}\right)^{1/3}\left(\frac{\pi_*}{(\pi_*-1)^2}\right)^{2/3} \varepsilon^{1/3}-\frac{(5-2\gamma)\pi_*}{2\gamma(\pi_*-1)^2}\left(\frac{\gamma \pi_*(\pi_*-1)}{6}\right)^{1/3}\varepsilon^{2/3}+O(\varepsilon).}
\end{equation}
\end{proposition}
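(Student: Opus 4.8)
The plan is to solve the linear ODE \eqref{eq: TKA fbp re} in closed form and then reduce the four boundary conditions to a finite-dimensional system whose solvability I settle by a regularizing rescaling together with the implicit function theorem. First I would observe that the homogeneous part of \eqref{eq: TKA fbp re} is equidimensional (Cauchy--Euler): inserting $W=\zeta^r$ gives the indicial equation $\sigma^2 r^2+(\sigma^2+2\mu)r+2\mu=0$, whose discriminant is the perfect square $(\sigma^2-2\mu)^2$, so the roots are explicitly $r=-1$ and $r=-2\mu/\sigma^2$ (with a logarithmic solution in the degenerate case $\sigma^2=2\mu$). A particular solution $W_p$ then follows by variation of parameters against $H$, giving $W=W_p+c_1\zeta^{-1}+c_2\zeta^{-2\mu/\sigma^2}$. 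Imposing the two left conditions \eqref{initial0 TKA re}--\eqref{initial1 TKA re} fixes $(c_1,c_2)$ uniquely as smooth functions of $\zeta_-$ (the coefficient matrix is a nonsingular Wronskian-type matrix at $\zeta_-$), so that the terminal conditions \eqref{terminal0 TKA re}--\eqref{terminal1 TKA re} collapse to two scalar equations $E_1(\zeta_-,\zeta_+;\varepsilon)=0$, $E_2(\zeta_-,\zeta_+;\varepsilon)=0$ in the two unknown boundaries.

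The essential difficulty is that this system is \emph{singular} at $\varepsilon=0$: frictionless optimality forces $H(\zeta_*)=h'(\zeta_*)=0$ at $\zeta_*=\pi_*/(1-\pi_*)$, while the terminal data $G,G'$ are of order $\varepsilon$, so both boundaries collapse onto $\zeta_*$ as $\varepsilon\downarrow 0$ and a naive perturbation in $\varepsilon$ fails. The remedy, which is the crux, is to regularize through the rescaling $\zeta_\pm=\zeta_*+\delta\,v_\pm$ and $W=\varepsilon\,\tilde W$ with $\delta:=\varepsilon^{1/3}$, dictated by the dominant balance in which $\tfrac12\sigma^2\zeta^2W''$ matches the linearized source $H(\zeta_*+\delta v)\approx H'(\zeta_*)\delta v$ while $W$ matches the $O(\varepsilon)$ terminal data. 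Taylor-expanding the coefficients, $H$, and $G,G'$ about $\zeta_*$ in powers of $\delta$, the leading inner problem reduces to the explicit cubic ODE $\tilde W^{(0)}_{uu}=c\,u$ with $c=2H'(\zeta_*)/(\sigma^2\zeta_*^2)$, subject to $\tilde W^{(0)}=\tilde W^{(0)}_u=0$ at $v_-$ and $\tilde W^{(0)}_u=0$, $\tilde W^{(0)}=(1-\pi_*)^2$ at $v_+$. Using $H'(\zeta_*)=h''(\zeta_*)=-\gamma\sigma^2(1-\pi_*)^4$, the two derivative conditions force the symmetric configuration $v_+=-v_-=\rho$, and the value conditions yield $\rho^3=3\pi_*^2/\bigl(4\gamma(1-\pi_*)^4\bigr)$, that is, $\rho=(3/(4\gamma))^{1/3}\bigl(\pi_*/(1-\pi_*)^2\bigr)^{2/3}$, exactly the $\varepsilon^{1/3}$ coefficient in \eqref{eq: asymptotics zeta}.

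With the leading order in hand, I would recast $E_1,E_2$ in the rescaled variables as a map $\mathcal F(v_-,v_+,\dots;\delta)=0$ that extends smoothly to $\delta=0$, where it has the nondegenerate solution just computed. Because $\rho\neq 0$ (equivalently $H'(\zeta_*)\neq 0$, valid for $\gamma>0$, $\pi_*\neq 1$), the Jacobian of $\mathcal F$ in the boundary variables is invertible at $\delta=0$; the implicit function theorem then delivers, for all small $\varepsilon$, a unique smooth branch $\delta\mapsto(v_\pm(\delta),\dots)$ with $\zeta_-<\zeta_+$, and Taylor expansion in $\delta=\varepsilon^{1/3}$ produces the full series \eqref{eq: asymptotics zeta}, the $\varepsilon^{2/3}$ term arising from the first correction driven by the hitherto neglected drift terms $(\sigma^2+\mu)\zeta W'+\mu W$ and the quadratic Taylor terms of $H$ and $G$. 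I expect two points to demand the most care. The first is \emph{global} uniqueness of an admissible pair for small $\varepsilon$: for this I would prove a priori that any solution must have both boundaries converging to $\zeta_*$ with gap of exact order $\varepsilon^{1/3}$, placing it inside the neighborhood where the implicit function theorem applies. The second is the bookkeeping of the higher-order inner expansion needed to pin down the $\varepsilon^{2/3}$ coefficient, which is routine but the source of the intricate term in \eqref{eq: asymptotics zeta}. The sign of $\zeta_*$, and hence whether the boundaries lie in the long-only region $\zeta>0$ or the leverage region $\zeta<-1$, depends on $\gamma\gtrless\mu/\sigma^2$, but since the analysis is local around $\zeta_*\neq 0,-1$ it proceeds identically in both regimes.
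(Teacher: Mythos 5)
Your proposal is correct and follows essentially the same route as the paper: reduce the ODE to an explicit solution (the paper integrates the exact derivative once and uses variation of constants, you use the Cauchy--Euler roots $-1$ and $-2\mu/\sigma^2$, which is equivalent), collapse the boundary conditions to a $2\times 2$ nonlinear system, rescale with $\delta=\varepsilon^{1/3}$ around $\zeta_*$, solve the leading-order cubic system to get the symmetric boundaries with $\rho^3=3\pi_*^2/(4\gamma(1-\pi_*)^4)$, and invoke the implicit function theorem at this nondegenerate root, with the $\varepsilon^{2/3}$ coefficient obtained by pushing the expansion one order further. Your leading-order computation (including $H'(\zeta_*)=-\gamma\sigma^2(1-\pi_*)^4$ and the resulting $\varepsilon^{1/3}$ coefficient) matches the paper's, and the two points you flag as delicate (Jacobian nondegeneracy and uniqueness of the solution branch) are exactly what the paper's Proposition on the rescaled system $\Phi=0$ is devoted to.
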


\begin{proof}[Proof of Proposition \ref{prop1}]
Note that \eqref{eq: TKA fbp re} is equivalent to the ODE
\[
\left(\frac{\sigma^2 \zeta^2}{2}W'(\zeta)+\mu \zeta W(\zeta)-h(\zeta)\right)'=0
\]
and thus, the initial conditions \eqref{initial0 TKA re}, \eqref{initial1 TKA re} imply
that $W$ satisfies 
\[
\frac{\sigma^2 \zeta^2}{2}W'(\zeta)+\mu \zeta W(\zeta)=h(\zeta)-h(\zeta_-), \quad W(\zeta_-)=0.
\]
By the variation of constants method, and as $\zeta_-\notin\{-1,0\}$, any solution of the
initial value problem \eqref{eq: TKA fbp re}--\eqref{initial1 TKA re} is thus of the form
\begin{equation}\label{TKA IVP sol}
\widetilde W(\zeta_-,\zeta):=\frac{2}{(\sigma\zeta)^2 }\int_{\zeta_-}^\zeta(h(y)-h(\zeta_-))\left(\frac{y}{\zeta}\right)^{2\gamma \pi_*-2}dy.
\end{equation}
Suppose $(W,\zeta_-,\zeta_+)$ is a solution of \eqref{eq: TKA fbp re}--\eqref{terminal1 TKA re}. In view of \eqref{TKA IVP sol},
$ W(\cdot)\equiv \widetilde W(\zeta_-,\cdot)$. Let
\begin{equation}\label{ Integral J}
 J(\zeta_-,\zeta):=\frac{\sigma^2 \zeta^{2\gamma \pi_*}}{2}\widetilde W(\zeta_-,\zeta).
\end{equation}
By the terminal conditions \eqref{terminal0 TKA re}--\eqref{terminal1 TKA re} at $\zeta_+$, and setting $\delta=\varepsilon^{1/3}$, 
 $(\zeta_-,\zeta_+)$ satisfy the following system of algebraic equations,
\begin{align}\label{eq: TAK1}
\Psi_1(\zeta_-,\zeta_+):=& \widetilde W(\zeta_-,\zeta_+)-\frac{\delta^3}{(1+\zeta_+)(1+(1-\delta^3)\zeta_+)}=0,\\
\label{eq: TAK2}
\Psi_2(\zeta_-,\zeta_+):=& \textstyle
\frac{2 (h(\zeta_+)-h(\zeta_-))}{\sigma^2 \zeta_+^2}-\frac{2\gamma \pi_*}{\zeta_+} \widetilde W(\zeta_-,\zeta_+)-\frac{(1-\delta^3)^2}{(1+(1-\delta^3)\zeta_+))^2}+\frac{1}{(1+\zeta_+)^2}=0.
\end{align}
Conversely, if $(\zeta_-,\zeta_+)$ solve \eqref{eq: TAK1}--\eqref{eq: TAK2}, then the triplet
$(\zeta\mapsto \widetilde W(\zeta_-,\zeta),\zeta_-,\zeta_+)$ provides a solution to the free boundary problem \eqref{eq: TKA fbp re}--\eqref{terminal1 TKA re}. Therefore,
to provide a unique solution of the free boundary problem, it suffices to provide a unique solution of  \eqref{eq: TAK1}--\eqref{eq: TAK2}.

To obtain a guess for the asymptotic expansions of the prospective solutions $\zeta_\pm$, expand $\Psi_{1,2}$ around
\begin{equation}\label{guezz}
\zeta_{-}=\zeta_*+B_{1}\delta+O(\delta^2), \quad \quad \zeta_+=\zeta_*+B_2\delta+O(\delta^2),\quad\text{where}\quad\zeta_*=\frac{\pi_*}{1-\pi_*},
\end{equation}
which yields 
\begin{align}\label{eq: mr no}
\Psi_1(\zeta_\pm(\delta))&=-\frac{\gamma(1-\pi_*)^6}{3\pi_*^2}\left(2 B_1^3-3 B_1^2 B_2+B_2^3+\frac{3\pi_*^2}{\gamma (1-\pi_*)^4}\right)\delta^3+O(\delta^4),\\\label{psi2 estimate}
\Psi_2(\zeta_\pm(\delta))&=\frac{(B_1-B_2)(B_1+B_2)\gamma(\pi_*-1)^6}{\pi_*^2}\delta^2+O(\delta^3).
\end{align}
Equating the coefficients of the leading order terms to zero yields\footnote{The coefficient in \eqref{psi2 estimate} vanishes also for $B_1=B_2$, but \eqref{eq: mr no} does not, excluding such a case.} 
\begin{align}\label{eq: no1}
2 B_1^3-3 B_1^2 B_2+B_2^3+\frac{3\pi_*^2}{\gamma (1-\pi_*)^4}&=0,\\\label{eq: no2}
B_1+B_2&=0,
\end{align}
whence $B_1=-B_2$ and solves $B_1^3=-\frac{3}{4\gamma}\frac{\pi_*^2}{(1-\pi_*)^4}=0$, and thus
\begin{equation}\label{eq B1}
B_1=-\left(\frac{3}{4\gamma}\right)^{1/3}\left(\frac{\pi_*}{(1-\pi_*)^2}\right)^{2/3}.
\end{equation}
With the change of variables
\begin{equation}\label{def of eta}
\eta_\pm:=\frac{\zeta_\pm-\zeta_*}{\delta}
\end{equation}
and the notation
\begin{equation}\label{eq: system psi12def}
\Phi_{1}(\eta_-,\eta_+):=\Psi_{1}(\zeta_-(\eta_-),\zeta_+(\eta_+)),\quad \Phi_{2}(\eta_-,\eta_+):=\Psi_{2}(\zeta_-(\eta_-),\zeta_+(\eta_+))
\end{equation}
the system \eqref{eq: TAK1}--\eqref{eq: TAK2} for $\zeta_\pm$ reduces to 
\begin{equation}\label{ana sys}
\Phi(\eta_-,\eta_+)=(\Phi_1(\eta_-,\eta_+),\Phi_2(\eta_-,\eta_+))=0
\end{equation}
in the unknowns $\eta_\pm$. Because of \eqref{eq B1}, the guess \eqref{guezz} takes the
explicit form
\begin{equation}\label{guezzx}
\zeta_{\pm}=\zeta_*\pm\left(\frac{3}{4\gamma}\right)^{1/3}\left(\frac{\pi_*}{(1-\pi_*)^2}\right)^{2/3}\delta+O(\delta^2), 
\end{equation}
which suggests that the solution $(\eta_-,\eta_+)$ is around $(B_1,B_2=-B_1)$. Proposition \ref{prop eqpsi12} below indeed guarantees the existence a unique solution
around  around $(B_1,B_2=-B_1)$ for sufficiently small $\delta>0$, which is analytic in $\delta$. Hence, also the original system
$\Psi(\zeta_-,\zeta_+)=0$ has a unique solution $(\zeta_-,\zeta_+)$ for small $\delta$, with the first
order proxies \eqref{guezzx}. This implies that the free boundary problem \eqref{eq: TKA fbp re}--\eqref{terminal1 TKA re} has a unique solution for sufficiently small $\varepsilon$.

%
%
%
%

To derive the higher order terms of \eqref{eq: asymptotics zeta}, it is useful to rewrite the integral \eqref{ Integral J} as\footnote{For  $\pi_* = 1/(2\gamma)$, $I_1 = h(\zeta_-)(\log \zeta_--\log \zeta_+)$ and $I_2 = \int_{\zeta_-}^{\zeta_+} \frac{h(y)}{y} dy$.}
\begin{equation}\label{life aid1}
J(\zeta_-,\zeta_+)=
\underbrace{\frac{h(\zeta_-)(\zeta_-^{2\gamma \pi_*-1}-\zeta_+^{2\gamma \pi_*-1})}{2\gamma \pi_*-1}}_{=:I_1}
+
\underbrace{\int_{\zeta_-}^{\zeta_+} h(y)y^{2\gamma \pi_*-2}dy}_{=:I_2}.
\end{equation}
The derivative of $I_2$ with respect to $\delta$ equals
\begin{equation}\label{life aid2}
\frac{dI_2}{d\delta}=h(\zeta_+) \zeta_+^{2\gamma \pi_*-2}\frac{d \zeta_+}{d\delta}-h(\zeta_-)\zeta_-^{2\gamma \pi_*-2}\frac{d\zeta_-}{d\delta}.
\end{equation}
Now, expanding the right-hand side as a power series in $\delta$, and integrating with respect to $\delta$ yields an asymptotic expansion of $I_2$. 

To obtain these expansions, guess a solution of equations \eqref{eq: TAK1}--\eqref{eq: TAK2} of the form
\[
\zeta_\pm=\frac{\pi_*}{1-\pi_*}\pm\left(\frac{3}{4\gamma}\right)^{1/3}\left(\frac{\pi_*}{(1-\pi_*)^2}\right)^{2/3}\delta+A_{\pm}\delta^2+O(\delta^3),\\
\]
for some unkowns $A_\pm$, and substitute it into equations \eqref{eq: TAK1}--\eqref{eq: TAK2}, thereby using \eqref{life aid1} and \eqref{life aid2}. Comparing the coefficients in the asymptotic expansion of the two equations reveals that
\[
A_-=A_+=\left(\frac{(5-2\gamma)\pi_*}{2\gamma(1-\pi_*)^2}\right)\left(\frac{\gamma \pi_*(1-\pi_*)}{6}\right)^{1/3},
\]
and therefore \eqref{eq: asymptotics zeta} holds.
\end{proof}

\begin{proposition}\label{prop eqpsi12}
Let $\gamma>0$ and $\pi_*\neq 1$, and recall $B_1=-B_2$ from \eqref{eq B1}. For sufficiently small $\delta>0$, the system \eqref{ana sys}, where $\Phi=(\Phi_1,\Phi_2)$ is defined by \eqref{eq: system psi12def}, has a unique solution $(\eta_-(\delta),\eta_+(\delta))$
satisfying $\eta_-(0)=B_1$, $\eta_+(0)=B_2$, and $\delta\mapsto \eta_\pm(\delta)$ are analytic functions.
\end{proposition}
\begin{proof}
Consider first the ``general'' case $\mu/\sigma^2\neq 1/2$: Introduce the rescaled functions $\widetilde{\Phi}_{1,2}$ and $\widetilde{\Phi}:=(\widetilde \Phi_1,\widetilde \Phi_2)$ defined as
\begin{equation}\label{eq: system psi12}
\widetilde\Phi_{1}:=\frac{\Phi_{1}}{\delta^l},\quad \widetilde\Phi_{2}:=\frac{\Phi_{2}}{\delta^m},
\end{equation}
where $ l=3$ and $m=2$. By scaling, the function $\widetilde \Phi$ depends on three arguments, and for the sake of clarity henceforth it is denoted by 
\[
\widetilde \Phi=\widetilde\Phi(\eta_-,\eta_+,\delta)
\]
Let $D\widetilde\Phi$ be the Frechet differential of $\widetilde\Phi$. As shown next, the Jacobian satisfies, 
\begin{equation}\label{eq: jacobi cookies}
\det (D\widetilde\Phi)(\eta_-=B_1,\eta_+=B_2,\delta=0)=\frac{6 \gamma (1-\pi_*)^8(2\gamma \pi_*-1)}{\pi_*^2}\neq 0,
\end{equation}
hence the implicit function theorem for analytic functions \cite[Theorem I.B.4]{GunningRossi} ensures that for sufficiently small $\delta$ 
there exists a unique solution $(\eta_-,\eta_+)$ of $\widetilde\Phi(\eta_-,\eta_+)=0$ around $(B_1,B_2)$ which is analytic in $\delta$. 

It remains to prove \eqref{eq: jacobi cookies}. By construction,
\[
\Psi_2(\zeta_-,\zeta_+)=\frac{\partial \Psi_1(\zeta_-,\zeta_+)}{\partial \zeta_+},
\]
whence
\begin{align*}
\frac{\partial \widetilde\Phi_1}{\partial \eta_+}(\eta_\pm)&=\frac{1}{\delta^l}\frac{\partial \Phi_1}{\partial \eta_+}(\eta_\pm)=\frac{1}{\delta^l}\frac{\partial \Psi_1(\zeta_\pm(\eta_\pm))}{\partial \zeta_+}\frac{\partial \zeta_+}{\partial \eta_+}\\&=\frac{1}{\delta^{l-1}}\frac{\partial \Psi_1(\zeta_\pm(\eta_\pm))}{\partial \zeta_+}=\frac{\Psi_2 (\zeta_\pm(\eta_\pm))}{\delta^{l-1}}
\end{align*}
and thus inserting the definition of $\eta_\pm$ (cf.~\eqref{def of eta}) into equation \eqref{eq: TAK2} and, letting $\delta\rightarrow 0$, in view of \eqref{eq: no1}--\eqref{eq: no2} and their solution \eqref{eq B1} it follows that 
\[
\frac{\partial \widetilde\Phi_1}{\partial \eta_+}\mid_{(B_1,B_2,0)}=0.
\]
Thus the determinant of the Jacobian is simply
\[
\det(D\widetilde\Phi)(B_1,B_2,0)=\frac{\partial \widetilde \Phi_1(\eta_-,\eta_+)}{\partial \eta_-}\mid_{(B_1,B_2,0)}\times\frac{\partial \widetilde \Phi_2(\eta_-,\eta_+)}{\partial \eta_+}\mid_{(B_1,B_2,0)}.
\]
Because
\[
\frac{\partial \Psi_1}{\partial \zeta_-}=-\frac{2h'(\zeta_-)}{\sigma^2 \zeta_+^{2\mu/\sigma^2}}\left(\frac{\zeta_+^{2\mu/\sigma^2-2}}{2 \mu/\sigma^2-1}-\frac{\zeta_-^{2\mu/\sigma^2-2}}{2 \mu/\sigma^2-1}\right)
\]
and by the chain rule
\[
\frac{\partial \widetilde \Phi_1(\eta_-,\eta_+)}{\partial \eta_-}=\frac{1}{\delta_3}\frac{\partial \Psi_1}{\partial \zeta_-} \times \delta,
\]
it follows that

\[
\frac{\partial \widetilde\Phi_1(\eta_-,\eta_+)}{\partial \eta_-}\mid_{(B_1,B_2,0)}=\frac{6^{2/3}(1-\pi_*)^3(\gamma \pi_*(1-\pi_*))^{1/3}(1-2\gamma \pi_*)}{\pi_*}.
\]
Similarly, 
\[
\frac{\partial \widetilde\Phi_2(\eta_-,\eta_+)}{\partial \eta_+}\mid_{(B_1,B_2,0)}=-\frac{6^{1/3}(1-\pi_*)^4(\gamma (1-\pi_*)\pi_*)^{2/3}}{\pi_*^2}, 
\]
from which \eqref{eq: jacobi cookies} and hence the assertion in the proposition follows.

For the ``singular'' case $\mu/\sigma^2=1/2$ one needs to set $l=5$, $m=3$ in \eqref{eq: system psi12}, then the right side of \eqref{eq: jacobi cookies} equals $3 (1-1/\pi_*)^8 \pi_*^5\neq 0$, and therefore similar arguments as in the general case apply.
\end{proof}

\begin{definition}\label{definition HJB eq}
A solution of the HJB equation is a pair $(V,\lambda)$, where $V$ is a twice continuously differentiable function, which satisfies
\begin{equation}\label{eq: HJBx}
\min(\mathcal A V(x)-h(x)+\lambda, G(x)-V'(x),V'(x))=0, \quad x\in\left (-\infty, -\frac{1}{1-\varepsilon}\right)\cup (0,\infty),
\end{equation}
where $\mathcal A : \mathcal C^2(\mathbb R) \mapsto \mathcal C^2(\mathbb R)$ is the differential operator
\[
\mathcal A f(x):=\frac{\sigma^2}{2}x^2 f''(x)+\mu x f'(x).
\] 
\end{definition}

Note that the restriction $x\in\left (-\infty, -\frac{1}{1-\varepsilon}\right)\cup (0,\infty)$ is motivated by Remark \ref{rem: range of trade}.

\begin{proposition}\label{prop2}
Let $(W,\zeta_-,\zeta_+)$ be the solution of the free boundary problem \eqref{terminal0 TKA re}--\eqref{terminal1 TKA re}  (provided by Proposition \ref{prop1}) with asymptotic expansion 
\eqref{eq: asymptotics zeta}. For sufficiently small $\varepsilon$, the pair 
\[
V(\cdot):=\int_0^{\cdot} \hat W(\zeta)d\zeta, \quad \lambda:=h(\zeta_-),
\]
where
\begin{equation}\label{eq candidate HJB}
\hat W(\zeta):=\begin{cases} 0 \quad &\text{for}\quad\zeta<\zeta_-,\\
W(\zeta) \quad &\text{for}\quad\zeta \in [\zeta_-,\zeta_+],\\
G(\zeta)\quad &\text{for}\quad \zeta\geq \zeta_+,\end{cases}
\end{equation}
is a solution of the HJB equation \eqref{eq: HJBx}.
\end{proposition}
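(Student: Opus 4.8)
The plan is to verify the variational inequality \eqref{eq: HJBx} directly by partitioning the domain $\left(-\infty,-\tfrac{1}{1-\varepsilon}\right)\cup(0,\infty)$ according to the free boundaries and checking, in each piece, that all three entries of the $\min$ are nonnegative while at least one vanishes. Since $V'=\hat W$ by construction, \eqref{eq candidate HJB} shows that exactly one entry is identically zero in each of the three natural regions: the buy region $\{\zeta<\zeta_-\}$, where $\hat W=0$ so the third entry $V'$ vanishes; the no-trade region $[\zeta_-,\zeta_+]$, where the ODE \eqref{eq: TKA fbp re} integrated against the initial conditions \eqref{initial0 TKA re}--\eqref{initial1 TKA re} gives $\tfrac{\sigma^2}{2}\zeta^2W'+\mu\zeta W=h(\zeta)-h(\zeta_-)$, i.e.\ $\mathcal A V-h+\lambda\equiv 0$; and the sell region $\{\zeta\ge\zeta_+\}$, where $\hat W=G$ so the second entry $G-V'$ vanishes. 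First I would record that the boundary conditions \eqref{initial0 TKA re}, \eqref{terminal0 TKA re} match the values and the smooth-pasting conditions \eqref{initial1 TKA re}, \eqref{terminal1 TKA re} match the first derivatives of $\hat W$ across $\zeta_\pm$, so $\hat W\in C^1$ and hence $V\in C^2$, as required by Definition \ref{definition HJB eq}.

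With the equalities settled, the verification reduces to a finite list of sign inequalities. In the no-trade region one must show $0\le W\le G$. The lower bound follows from the integral representation \eqref{TKA IVP sol}: since $h(\pi)=\mu\pi-\tfrac{\gamma\sigma^2}{2}\pi^2$ is a downward parabola in $\pi=\zeta/(1+\zeta)$ peaking at $\pi_*$, and $\zeta_-<\zeta_*:=\pi_*/(1-\pi_*)$, the integrand $h(y)-h(\zeta_-)$ is nonnegative for $y$ between $\zeta_-$ and the reflected point where $h$ returns to the level $h(\zeta_-)$; the asymptotics \eqref{eq: asymptotics zeta} place $\zeta_+$ within $O(\varepsilon^{1/3})$ of $\zeta_*$, well short of that reflected point for small $\varepsilon$, so the weighted integral and hence $W$ stay nonnegative. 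In the buy region the first entry is $h(\zeta_-)-h(\zeta)\ge 0$, which is again a statement about the parabola $h$ and is checked on both branches (for $\zeta\in(0,\zeta_-)$ directly, and for the leverage branch $\zeta<-\tfrac{1}{1-\varepsilon}$, corresponding to $\pi>1$, by the symmetry of $h$ about $\pi_*$ together with $\pi_-$ close to $\pi_*$).

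The substantive inequality is the first entry in the sell region, which I would phrase through the auxiliary function $\Theta(\zeta):=\tfrac{\sigma^2}{2}\zeta^2G'(\zeta)+\mu\zeta G(\zeta)-h(\zeta)+h(\zeta_-)$. Continuity of $\tfrac{\sigma^2}{2}\zeta^2W'+\mu\zeta W-h+h(\zeta_-)$ across $\zeta_+$ together with $W(\zeta_+)=G(\zeta_+)$, $W'(\zeta_+)=G'(\zeta_+)$ gives $\Theta(\zeta_+)=0$, and differentiation yields $\Theta'(\zeta)=R[G](\zeta)$, where $R[f]:=\tfrac12\sigma^2\zeta^2f''+(\sigma^2+\mu)\zeta f'+\mu f-H$ is the operator on the left of \eqref{eq: TKA fbp re}. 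Thus $\Theta\ge 0$ on the sell region will follow once $R[G]\ge 0$ there is established, using the explicit $G(\zeta)=\varepsilon/[(1+\zeta)(1+(1-\varepsilon)\zeta)]$. The same estimate controls $G-W$ in the no-trade region, since $G-W$ inherits $R[G-W]=R[G]$ and vanishes to first order at $\zeta_+$, so a comparison argument rules out an interior sign change and delivers $W\le G$.

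The main obstacle is exactly this family of global sign inequalities — chiefly $R[G]\ge 0$ on the sell region and the companion bound $W\le G$ — because the leading behaviour of $R[G]$ depends on the sign of $\mu-\gamma\sigma^2$, i.e.\ on whether the relevant regime is long-only ($\pi_*<1$) or leveraged ($\pi_*>1$), and the two regimes live on different branches of the domain with qualitatively different geometry (an unbounded sell region in the former, a bounded one terminating at $-\tfrac{1}{1-\varepsilon}$ in the latter). I would handle each regime separately, substituting the expansions \eqref{eq: asymptotics zeta} and the resulting expansion of $W$ from \eqref{TKA IVP sol} to confirm that every inequality holds with the correct leading-order sign for $\varepsilon$ below an explicit threshold, degenerating to equality only at the free boundaries, which is precisely what is needed for the $\min$ in \eqref{eq: HJBx} to attain zero throughout.
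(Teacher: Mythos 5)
Your architecture is the paper's: the same three-region split, the same identification of which entry of the min vanishes where, the $C^1$ pasting of $\hat W$ from \eqref{initial0 TKA re}--\eqref{terminal1 TKA re}, the buy-region inequality $h(\zeta_-)\ge h(\zeta)$ from the concavity of $h$ in $\pi$, and the lower bound $W\ge 0$ read off the representation \eqref{TKA IVP sol} all match Proposition \ref{prop2}'s proof. The two inequalities you yourself call substantive, however, are exactly where your proposed arguments break down, and where the paper does qualitatively different work.

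First, $W\le G$ on $[\zeta_-,\zeta_+]$. Write $\mathcal L f:=\tfrac12\sigma^2\zeta^2f''+(\sigma^2+\mu)\zeta f'+\mu f$ for the linear part of the operator in \eqref{eq: TKA fbp re}; then $u:=G-W$ satisfies $\mathcal L u=R[G]$ (your identity ``$R[G-W]=R[G]$'' is literally false since $R$ is affine, but that is reparable) with $u(\zeta_+)=u'(\zeta_+)=0$, and a comparison/integrating-factor argument yields $u\ge0$ only if the forcing term $R[G]$ is nonnegative on the \emph{no-trade} region. That premise fails: on $[\zeta_-,\zeta_+]$ the function $G$ and its derivatives are $O(\varepsilon)$, so
\begin{equation*}
R[G](\zeta)=-H(\zeta)+O(\varepsilon)=\frac{\gamma\sigma^2}{(1+\zeta)^2}\left(\frac{\zeta}{1+\zeta}-\pi_*\right)+O(\varepsilon),
\end{equation*}
which is \emph{negative} of order $\varepsilon^{1/3}$ on the entire sub-interval where $\pi<\pi_*$, i.e.\ on roughly half of the no-trade region. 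So ``the same estimate'' cannot control $G-W$; at best one could attempt a single-crossing refinement (sign change of $R[G]$ from $-$ to $+$), which you neither state nor prove. The paper's proof of this inequality is of a completely different nature: it supposes an interior zero $\bar\zeta$ of $W-G$ along a sequence $\delta_k\downarrow0$, rescales by $\delta=\varepsilon^{1/3}$, shows via the limiting cubic \eqref{eq: no1} that the rescaled zero must converge to the same limit $B_2$ as $\zeta_+$, and then contradicts the uniqueness/analyticity statement of Proposition \ref{prop eqpsi12} through an intertwined sequence. None of this rigidity argument appears in your proposal, and no elementary comparison substitute for it is supplied.

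Second, the sell-region inequality. You reduce it to $R[G]=\Theta'\ge0$ on the whole sell region, which is strictly stronger than what is needed ($\Theta\ge0$ given $\Theta(\zeta_+)=0$) and is not what the paper proves; nothing forces $\Theta$ to be monotone there. More importantly, your plan to verify the sign ``by substituting the expansions \eqref{eq: asymptotics zeta} \dots with the correct leading-order sign'' cannot close the argument: this is a \emph{global} inequality on a region stretching to the liquidation boundary $-\tfrac1{1-\varepsilon}$ (where $G$ blows up) or to $+\infty$, and expansions of $\zeta_\pm$ in powers of $\varepsilon^{1/3}$ give no uniform control far from $\zeta_+$. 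The paper's verification is global and algebraic: change variables $z=\zeta/(1+\zeta)$, factor $F(z,\varepsilon)=\tfrac{(z-\pi_+)}{(1-\varepsilon z)^2}\,g(z)$ by polynomial division, and run a case analysis in $\gamma$ (the cases $\gamma=1$, $\gamma\in[1,4)$, $\gamma\in(0,1)$, $\gamma\ge4$) dictated by whether the critical points $z_\pm=\tfrac{2}{3\varepsilon}\pm\tfrac{1}{3\varepsilon}\sqrt{\tfrac{\gamma-4}{\gamma-1}}+O(1)$ of the cubic $g$ fall inside $[\pi_+,1/\varepsilon]$, checking positivity of $g$ at the endpoints and at the interior local minimum when one exists. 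This $\gamma$-dependent case structure is the core of the paper's proof of the HJB inequality in the sell region, and your proposal shows no awareness that it is needed; ``checking the leading-order sign for small $\varepsilon$'' is precisely the step that does not suffice.
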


\begin{proof}[Proof of Proposition \ref{prop2}]
To check that $(V,\lambda)$ solves the HJB equation \eqref{eq: HJBx}, consider separately the domains $[\zeta_-,\zeta_+]$, $\zeta<\zeta_-$ and $\zeta>\zeta_+$. From the decompositions
\[
G(\zeta)=\frac{1}{1+\zeta}-\frac{1-\varepsilon}{1+(1-\varepsilon)\zeta}
\qquad\text{and}\qquad
G'(\zeta)=\left(\frac{1-\varepsilon}{1+(1-\varepsilon)\zeta}\right)^2-\frac{1}{(1+\zeta)^2}
,
\]
note first that on $[\zeta_-,\zeta_+]$, by construction it holds that
\[
(\mathcal A V(\zeta)-h(\zeta)+h(\zeta_-))'=\frac{1}{2}\sigma^2 \zeta^2 W''(\zeta)+(\sigma^2+\mu)\zeta W'(\zeta)+\mu W(\zeta)-H(\zeta)=0.
\]
Furthermore, in view of the initial conditions \eqref{initial0 TKA re}--\eqref{initial1 TKA re},
\[
(\mathcal A V(\zeta)-h(\zeta)+h(\zeta_-))\mid_{\zeta=\zeta_-}=\mathcal A V(\zeta)\mid_{\zeta=\zeta_-}=0,
\]
whence 
\[
\mathcal A V(\zeta)-h(\zeta)+h(\zeta_-)\equiv 0,\quad \zeta\in [\zeta_-,\zeta_+].
\]
To see that $0\leq V'\leq G$ on all of $[\zeta_-,\zeta_+]$, observe that
\begin{equation}\label{der h}
(h(\zeta)-h(\zeta_-))'=h'(\zeta)=H(\zeta)=\frac{\mu}{\pi_*(1+\zeta)^2}\left(\pi_*-\frac{\zeta}{1+\zeta}\right).
\end{equation}

Note that for $\zeta_-< \zeta\leq \zeta^*$, where $\zeta^*/(1+\zeta^*)=\pi_*$,  
$V'(\zeta)=W(\zeta)>0$. It is shown that also $W(\cdot)\geq 0$ on all of $[\zeta_-,\zeta_+]$. This is equivalent to showing non-negativity
of
\begin{equation}\label{simple W}
w(\zeta):=2 \sigma^2 \zeta^{2\gamma\pi_*} W(\zeta)=\int_{\zeta_-}^\zeta (h(x)-h(\zeta_-))x^{2\gamma\pi_*-2}dx.
\end{equation}
Now $w'(\zeta)= (h(\zeta)-h(\zeta_-))\zeta^{2\gamma\pi_*-2}=0$ if and only if $h(\zeta_-)=h(\zeta)$. Hence, either $\zeta=\zeta_-$ or $\zeta=\overline\zeta$, where
\[
\pi(\overline\zeta)=\frac{\overline\zeta}{1+\overline\zeta}=2\pi_*-\pi_-.
\]
By the first-order asymptotics of \eqref{eq: asymptotics zeta}, one obtains $\bar \zeta\notin [\zeta_-,\zeta_+]$ for sufficiently small $\varepsilon$. Therefore $w'>0$ on $(\zeta_-,\zeta_+]$, and by \eqref{simple W} it follows that $V'\geq 0$ on all of $[\zeta_-,\zeta_+]$. 
To conclude the validity of the HJB equation on $[\zeta_-,\zeta_+]$, it only remains to show the inequality $V'\leq G$. To this end, notice that $\Psi_1(\zeta)=W(\zeta)-G(\zeta)$, (this is the function defined in \eqref{eq: TAK1}, with fixed $\zeta_-$) satisfies
\[
\Psi_1(\zeta_-)=-G(\zeta_-)=-\frac{\varepsilon}{(1+\zeta_-)(1+(1-\varepsilon)\zeta_-)}=-(1-\pi_*)^2\varepsilon+O(\varepsilon^{4/3}),
\]
hence for sufficiently small $\varepsilon$, $\Psi_1(\zeta)<0$ on some interval
$[\zeta_-, \bar \zeta)$, and $\Psi_1(\bar \zeta)=0$. Therefore, $\bar\zeta\leq \zeta_+$. As
$\Psi_1( \zeta_+)=0$ by construction, it suffices to show that $\bar \zeta=\zeta_+$ to prove non-negativity
of $\Psi_1$ on $[\zeta_-,\zeta_+]$. Suppose, by contradiction, that there exists a sequence $\delta_k\downarrow 0$ such that for each $k\geq 1$, $\Psi_1(\bar\zeta(\delta_k))=0$, and that $\zeta_-(\delta_k)<\bar \zeta(\delta_k) <\zeta_+(\delta_k)$. Now, change variable to $u=\frac{\zeta-\zeta_*}{\delta}$, and introduce the notation $u_\pm=\frac{\zeta_\pm-\zeta_*}{\delta}$, $\bar u=\frac{\bar\zeta-\zeta_*}{\delta}$. Up to a subsequence, without loss of generality assume that $\bar u(\delta_k)$ converges, whence it satisfies
\[
\lim_{k\rightarrow\infty}\bar u(\delta_k)=:B_0\in [B_1,B_2],
\]
where $B_1$ is defined in \eqref{eq B1}, and $B_2=-B_1$. The calculations leading to \eqref{eq B1} therefore entail that $B_0$ must satisfy
\eqref{eq: no1} in place of $B_2$, i.e.
\begin{equation}\label{eq superduper}
2 B_1^3-3 B_1^2 B_0+B_0^3+\frac{3\pi_*^2}{\gamma (1-\pi_*)^4}=0.
\end{equation}
With $B_1$ from \eqref{eq B1} and the change of variable
$\xi=-B_0/B_1$ implies $2-3\xi+\xi^3=0$ which has the only solutions $1$ and $-2$. Therefore, \eqref{eq superduper} has the only relevant solution
\[
B_0=-B_1=B_2.
\]
By intertwining $u_+(\delta)$ and $\bar u(\delta_k)$, one can introduce 
\[
\bar u^*(\delta)=\begin{cases}\bar u(\delta_k),\quad k\in\mathbb N\\ u_+(\delta),\quad \text{otherwise}\end{cases}.
\]
Hence $(u_-(\delta),u^*(\delta))$ satisfies $\Phi(u_-,u_+)=0$ near $(B_1,B_2)$, for sufficiently small $\delta$. By Proposition \ref{prop eqpsi12},
$u^*(\delta)=u_+(\delta)$, which contradicts our assumption $\bar\zeta\neq \zeta_+$.

Consider now $\zeta\leq \zeta_-$. $V$ solves the HJB equation, if 
\[
\mathcal A V-h(\zeta)+h(\zeta_-)=h(\zeta_-)-h(\zeta)\geq 0, \quad G(\zeta)\geq 0.
\]
As $h(\zeta)-h(\zeta_-)=0$ for $\zeta=\zeta_-$,  it suffices to show that $h'$ is non-negative to obtain the first inequality. To this end, the explicit
formula  \eqref{der h} for the derivative is used. Now for small $\varepsilon$
clearly $\pi_-<\pi_*$, hence for $\zeta=\zeta_-$ \eqref{der h} is indeed strictly positive, hence, upon integration, one obtains the first inequality for any $\zeta<\zeta_-$. To settle the second inequality, recall that either $\zeta<-1/(1-\varepsilon)$ or $\zeta>0$. On these domains, $G$ is clearly a strictly positive function.  Hence it is proved that $V$ satisfies the HJB equation
for $\zeta\leq \zeta_-$.

Finally, consider $\zeta\geq \zeta_+$. As $G=W$, it suffices to show
\begin{equation}\label{eq part 3 first eq}
L(\zeta):=\mathcal A V(\zeta)-h(\zeta)+h(\zeta_-)\geq 0,\quad G(\zeta)\geq 0.
\end{equation}
As $G(\zeta)$ is strictly positive, the second inequality holds. For the first inequality in \eqref{eq part 3 first eq}, note that
\begin{align*}
L(\zeta)&=\frac{\sigma^2 \zeta^2}{2}G'(\zeta)+\mu \zeta G(\zeta)-h(\zeta)+h(\zeta_-)
\end{align*}
and $L(\zeta_+)=0$, because of \eqref{eq: TKA fbp re}, \eqref{terminal0 TKA re} and \eqref{terminal1 TKA re}. Therefore it suffices to show $L$ has no zeros on $[\zeta_+,-1/(1-\varepsilon))$, besides $\zeta_+$. 

Consider first, $\gamma=1$. Using the transformation $z=\frac{\zeta}{1+\zeta}$ one can rewrite $L$ in terms of $z$, denoting it by 
$F(z,\varepsilon):=L(\zeta(z))$. As $F(\pi_+)=0$, polynomial division by $(z-\pi_+)$ yields
\begin{equation}\label{eq: Dec F}
F(z,\varepsilon)=\frac{(z-\pi_+)}{(1-\varepsilon z)^2}g(z),
\end{equation}
and $g(z)=\frac{1}{2}(g_0+g_1 z)$, where
\begin{align*}
g_0&=2\mu (-1+(1-2\pi_-+\pi_+)\varepsilon-(1-\pi_-)\pi_+\varepsilon^2\\&\qquad+\sigma^2(\pi_++2(\pi_-^2-\pi_+)\varepsilon+\pi_+(1-\pi_-^2)\varepsilon^2),\\
g_1&=1-(1-\pi_-)\varepsilon)(\sigma^2+\varepsilon(2\mu-(1+\pi_-)\sigma^2).
\end{align*}
Therefore, the following asymptotic expansions hold
\[
g(\pi_+)=\sigma^2\left(\frac{3}{4}\left(\frac{\mu}{\sigma^2}\right)^2\left(1-\frac{\mu}{\sigma^2}\right)^2\right)^{1/3}\varepsilon^{1/3}+O(\varepsilon^{2/3}), \quad g(1/\varepsilon)=\frac{\sigma^2}{2\varepsilon}+O(1).
\]
It follows that $g$ has no zeros on $[\pi_+,1/\varepsilon]$, for sufficiently small $\varepsilon$. Hence 
$F(z)>0$ for $z\in (\pi_+,1/\varepsilon)$.

Next, consider $\gamma\neq 1$. Using the transformation $z=\frac{\zeta}{1+\zeta}$ one can rewrite, similar to the case $\gamma=1$, $L$ in terms of $z$, obtaining the function $F(z,\varepsilon)=L(\zeta(z))$. It is proved next that $F$ has no zeros on $(\pi_+,1/\varepsilon)$.

As $F(\pi_+)=0$, polynomial division by $(z-\pi_+)$ yields \eqref{eq: Dec F},
where the third order polynomial $g$ has derivative
\[
g'=a_0+a_1z+a_2 z^2,
\]
where the coefficients $a_0,a_1$ and $a_2$ are complex, yet explicit, functions of the parameters and the relative bid-ask spread $\varepsilon$.

In view of \eqref{eq: Dec F}, it is enough to show that $g$ has no zeros on $[\pi_+,1/\varepsilon]$. First, note the following asymptotic expansions,
\begin{equation}\label{g at pip}
g(\pi_+)=\left(\frac{3}{4\gamma}\pi_*^2(\pi_*-1)^2\right)^{1/3}\varepsilon^{1/3}+O(\varepsilon^{2/3}),\quad
g(1/\varepsilon)=\frac{\sigma^2}{2\varepsilon}+O(1).
\end{equation}
Therefore, for sufficiently small $\varepsilon$, $g>0$ on both endpoints of $[\pi_+,1/\varepsilon]$. It remains to show
that any local minimum of $g$ in  $[\pi_+,1/\varepsilon]$ is non-negative. The local extrema $z_\pm$, where $g'(z_\pm)=0$, have asymptotic expansions $z_\pm=\frac{2}{3\varepsilon}\pm \frac{1}{3\varepsilon}\sqrt{\frac{\gamma-4}{\gamma-1}}+O(1).$
Obviously, there are no local extrema in $[\pi_+,1/\varepsilon]$ whenever $\gamma \in[1,4)$. Therefore $g>0$
on all of $[\pi_+,1/\varepsilon]$, and thus $F(z)\geq 0$ on $[\pi_+,1/\varepsilon)$. The non-trivial case $\gamma \notin[1,4)$ remains:

For $0<\gamma<1$ it holds that $\frac{4-\gamma}{1-\gamma}>4$, hence $z_\pm\notin [\pi_+,1/\varepsilon]$. It follows that $g'$
has no zeros in this interval and thus $g>0$ on  $[\pi_+,1/\varepsilon]$.

Next, consider $\gamma\geq 4$: The local minimum $z_-$ of a third order polynomial with negative leading coefficient
satisfies $z_-<z_+$ and $g(z_-)<g(z_+)$. In view of \eqref{g at pip}, it remains to show $g(z_-)>0$. It holds that
$g(z_-)=\frac{3\gamma+(\gamma-4)(2+\gamma+\sqrt {(\gamma-4)(\gamma-1)})}{27 (\gamma - 1) \varepsilon}+O(1)$, 
whence $g(z_-)>0$ for sufficiently small $\varepsilon$. Hence $g>0$ on  $[\pi_+,1/\varepsilon]$ is shown.
\end{proof}

\begin{lemma}\label{existence controllable strategy}
Let $\eta_-<\eta_+$ be such that either $\eta_+<-1/(1-\varepsilon)$ or $\eta_->0$. Then there exists an admissible trading strategy ${\hat\varphi}$ such that the risky-safe ratio $\eta_t$
satisfies SDE \eqref{eq zeta diff}. Moreover, $(\eta_t,{\hat\varphi}_t^\uparrow,{\hat\varphi}_t^\downarrow)$ is a reflected diffusion on
the interval $[\eta_-,\eta_+]$. In particular, $\eta_t$ has stationary density equals
\begin{equation}\label{eq st de 1}
 \nu(\eta):=\frac{\frac{2\mu}{\sigma^2}-1}{\eta_+^{\frac{2\mu}{\sigma^2}-1}-\eta_-^{\frac{2\mu}{\sigma^2}-1}}\eta^{\frac{2\mu}{\sigma^2}-2}, \quad \eta\in[\eta_-,\eta_+],
\end{equation}
when $\eta_->0$, and otherwise equals
\begin{equation}\label{eq st de 2}
 \nu(\eta):=\frac{\frac{2\mu}{\sigma^2}-1}{\vert\eta_-\vert^{\frac{2\mu}{\sigma^2}-1}-\vert\eta_+\vert^{\frac{2\mu}{\sigma^2}-1}}\vert\eta\vert^{\frac{2\mu}{\sigma^2}-2}, \quad \eta\in[\eta_-,\eta_+].
\end{equation}
\end{lemma}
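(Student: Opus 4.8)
The plan is to exploit the fact, visible in \eqref{eq zeta diff}, that in the absence of trading the risky/safe ratio is a geometric Brownian motion, $d\zeta_t=\mu\zeta_t\,dt+\sigma\zeta_t\,dB_t$, and that the two transaction terms $\zeta_t(1+\zeta_t)\tfrac{d\varphi^\uparrow_t}{\varphi_t}$ and $-\zeta_t(1+(1-\varepsilon)\zeta_t)\tfrac{d\varphi^\downarrow_t}{\varphi_t}$ act, respectively, as upward and downward pushes. The target process $\eta_t$ is therefore a geometric Brownian motion reflected at the two endpoints of $[\eta_-,\eta_+]$, and the trading strategy will be read off from the reflection terms. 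The first thing I would check is the orientation of these pushes: writing $c_-:=\eta_-(1+\eta_-)$ and $c_+:=\eta_+\bigl(1+(1-\varepsilon)\eta_+\bigr)$, both hypotheses $\eta_->0$ and $\eta_+<-\tfrac{1}{1-\varepsilon}$ give $c_->0$ and $c_+>0$ (in the second case $\eta_-<\eta_+<-1$, so $1+\eta_-<0$ and $1+(1-\varepsilon)\eta_+<0$). This is precisely why only these two regimes appear in the statement: they are the ranges in which the buy term at $\eta_-$ points into the interval and the sell term at $\eta_+$ points into the interval, so that reflection is feasible. Note also that $[\eta_-,\eta_+]$ avoids $0$ and (in the leverage case) $-\tfrac{1}{1-\varepsilon}$, so the coefficients $\mu\eta$ and $\sigma\eta$ are Lipschitz and bounded on the interval.

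Next I would construct the reflected diffusion. Extending $\mu\eta,\sigma\eta$ to globally Lipschitz coefficients and invoking the classical theory of reflected stochastic differential equations on an interval (the two-sided Skorokhod map, which is Lipschitz, together with the usual Picard iteration), one obtains a unique strong solution $\eta_t\in[\eta_-,\eta_+]$ together with continuous nondecreasing boundary processes $L_t,U_t$ such that
\[
d\eta_t=\mu\eta_t\,dt+\sigma\eta_t\,dB_t+dL_t-dU_t,
\]
where $L$ increases only on $\{\eta_t=\eta_-\}$ and $U$ only on $\{\eta_t=\eta_+\}$. Both $L$ and $U$ are adapted to the augmented filtration of $B$.

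I would then define the strategy by inverting the reflection. Set the relative trades $d\ell^\uparrow_t:=c_-^{-1}\,dL_t$ and $d\ell^\downarrow_t:=c_+^{-1}\,dU_t$, which are nonnegative measures because $c_\pm>0$; on $\{\eta_t=\eta_-\}$ one has $\eta_t(1+\eta_t)\,d\ell^\uparrow_t=dL_t$ and on $\{\eta_t=\eta_+\}$ one has $\eta_t(1+(1-\varepsilon)\eta_t)\,d\ell^\downarrow_t=dU_t$, so $\eta_t$ satisfies \eqref{eq zeta diff} with $\tfrac{d\varphi^\uparrow}{\varphi}=\ell^\uparrow$ and $\tfrac{d\varphi^\downarrow}{\varphi}=\ell^\downarrow$. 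Fixing $\varphi_0>0$, I reconstruct the absolute share process as the solution of the linear equation $d\varphi_t=\varphi_t(d\ell^\uparrow_t-d\ell^\downarrow_t)$, namely $\varphi_t=\varphi_0\exp(\ell^\uparrow_t-\ell^\downarrow_t)>0$ (no Itô correction since $\ell^\uparrow,\ell^\downarrow$ are continuous and of finite variation), and set $\varphi^\uparrow_t:=\int_0^t\varphi_s\,d\ell^\uparrow_s$, $\varphi^\downarrow_t:=\int_0^t\varphi_s\,d\ell^\downarrow_s$, so that $\varphi_t=\varphi_0+\varphi^\uparrow_t-\varphi^\downarrow_t$. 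Choosing $X_0,Y_0$ with $Y_0/X_0=\eta_0$ and defining $X,Y$ through the self-financing dynamics \eqref{eq: sf1}--\eqref{eq: sf2}, Lemma \ref{le: rewriting obj fun} shows that $Y_t/X_t$ solves \eqref{eq zeta diff} with the same reflection data; pathwise uniqueness for this linear equation then forces $Y_t/X_t=\eta_t$, so the constructed strategy indeed realizes $\eta_t$ as its risky/safe ratio, and $(\eta_t,\varphi^\uparrow_t,\varphi^\downarrow_t)$ is a reflected diffusion by construction.

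The main obstacle is the verification of admissibility in the sense of Definition \ref{def: admiss}. Continuity and adaptedness are immediate. For solvency \eqref{eq: liquidity}, the wealth $w_t$ is the Doléans--Dade exponential of \eqref{eq w diff} with bounded coefficients, hence strictly positive, while the risky weight $\pi_t=\eta_t/(1+\eta_t)$ lies in a compact subset of $(0,1)$ (long case) or of $(1,1/\varepsilon)$ (leverage case); in the latter, $\eta_+<-\tfrac{1}{1-\varepsilon}$ strictly gives $\pi_+<1/\varepsilon$ strictly, so one may pick $\varepsilon'\in(\varepsilon,1/\pi_+)$ and obtain the liquidation value $\,w_t(1-\varepsilon'\pi_t)\ge w_t(1-\varepsilon'\pi_+)>0$. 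For the integrability conditions \eqref{eq: admissible trading}, boundedness of $\pi_t$ handles the first, and the second reduces to $\mathbb E[L_t]<\infty$ and $\mathbb E[U_t]<\infty$, a standard finiteness property of boundary local times of a reflected diffusion on a bounded interval. Finally, the stationary density is obtained by noting that $\eta$ is a one-dimensional diffusion reflected at both ends, whose invariant law is the normalized speed measure. Solving the stationary forward equation with zero flux, the flux $\mu\eta\nu-\tfrac{d}{d\eta}\bigl(\tfrac{\sigma^2}{2}\eta^2\nu\bigr)\equiv0$ yields $\tfrac{\sigma^2}{2}\eta^2\nu\propto|\eta|^{2\mu/\sigma^2}$, hence $\nu(\eta)\propto|\eta|^{2\mu/\sigma^2-2}$; normalizing over $[\eta_-,\eta_+]$ via $\int|\eta|^{2\mu/\sigma^2-2}d\eta$ (finite and sign-sensitive since the interval avoids $0$) gives exactly \eqref{eq st de 1} when $\eta_->0$ and \eqref{eq st de 2} when $\eta_+<-\tfrac{1}{1-\varepsilon}$.
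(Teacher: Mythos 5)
Your proposal is correct and follows essentially the same route as the paper's proof: construct the reflected geometric Brownian motion on $[\eta_-,\eta_+]$ via the two-sided Skorokhod problem, convert the boundary (local-time) terms into relative purchases and sales --- exploiting precisely the positivity of $\eta(1+\eta)$ at $\eta_-$ and of $\eta\bigl(1+(1-\varepsilon)\eta\bigr)$ at $\eta_+$ in the two admissible regimes --- and derive the stationary density from the zero-flux stationary Fokker--Planck equation, normalized over the interval. Your explicit sign bookkeeping through $c_\pm$ and the reconstruction of $\varphi$, $X$, $Y$ with a pathwise-uniqueness argument are in fact slightly more detailed than the paper's version, while your admissibility check (solvency and integrability) is at the same level of informality as the paper's own.
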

\begin{proof}
By the solution of the Skorohod problem for two reflecting boundaries \citep{kruk2007explicit}, there exists a well-defined reflected diffusion $(\eta_t, L_t, U_t)$ satisfying $
\frac{d\eta_t}{\eta_t}=\mu dt+\sigma dB_t+dL_t-dU_t$, where $B$ is a standard Brownian motion. If $\eta_->0$, $L$ (resp. $U$) is a non-decreasing processes which increases only on the set $\{\eta=\eta_-\}$ (resp. $\{\eta=\eta_+\}$)\footnote{If $\eta_+<0$, the terms ``decreasing'' and ``increasing'' are exchanged.}. Also, $\eta_->0$
or $\eta_+<-1/(1-\varepsilon)$ implies that $\eta_t>0$ or $\eta_t<-1/(1-\varepsilon)$ for all $t$, almost surely. Hence for each $t>0$ the coefficients
$ (1+(1-\varepsilon)\eta_t)$ and $(1+\eta_t)$ are invertible, almost surely. Define the increasing processes $({\hat\varphi} ^\uparrow, {\hat\varphi}^\downarrow)$ by
$\frac{d{\hat\varphi}_t^\uparrow}{{\hat\varphi}_t}=(1+\eta_t)^{-1}dL_t$, $\frac{d{\hat\varphi}_t^\downarrow}{{\hat\varphi}_t}=(1+(1-\varepsilon)\eta_t)^{-1}dU_t$. The associated measures $d{\hat\varphi} ^\uparrow, d{\hat\varphi}^\downarrow $ are supported on $\eta_t=\eta_-$ and $\eta_t=\eta_+$, respectively. Hence ${\hat\varphi}$ is a trading strategy, which by Lemma \ref{le: rewriting obj fun}  yields
a risky-safe satisfying precisely the stochastic differential equation \eqref{eq zeta diff}. The admissibility of the trading strategy is clear, as ${\hat\varphi}$ is a continuous, finite variation trading strategy, and it satisfies $\pi_+ <1/\varepsilon$, which implies that
there exists $\varepsilon'>\varepsilon$ such that $\pi_t<1/\varepsilon'$, for all $t>0$, a.s.. 

Write the infinitesimal generator of $(\eta_t)_{t\geq 0}$ in the general form 
\[
\mathcal Af(\eta)=\frac{\sigma^2}{2}\eta^2 f''(\eta)+\mu\eta f'(\eta)=: \frac{1}{2}a^2(\eta) f''(\eta)+b(\eta)f'(\eta).
\]
The speed measure is $m(d\eta)=\left( \frac{2}{a^2(\eta)}e^{\int _{\eta_-}^\eta \frac{2 b(y)}{a^2(y)}dy}\right) d\eta$, and as $m([\eta_-,\eta_+)]<\infty$, 
$(\eta_t)_{t\geq 0}$ is positively recurrent and its invariant density $\nu$ is 
\begin{equation}
\nu(\eta)d\eta=\frac{m(d\eta)}{m([\eta_-,\eta_+])},
\end{equation}
(see \cite[ II.9, II. 12, and II. 36]{borodin2002handbook}). Distinguishing the cases $\eta_+<0$ or $\eta_->0$, the probability densities \eqref{eq st de 1} and \eqref{eq st de 2} follow.
\end{proof}

The following constitutes the verification of optimality of the trading strategy of Lemma \ref{existence controllable strategy}
with the trading boundaries in Proposition \ref{prop1}:
\begin{proposition}\label{prop3}
Let $\zeta_\pm$ be the free boundaries as derived in Proposition \ref{prop1}, and set $\pi_\pm:=\zeta_\pm/(1+\zeta_\pm).$ Denote by $\hat \varphi$
the trading strategy of Lemma \ref{existence controllable strategy} associated with these free boundaries. Then for all $t>0$, the fraction of wealth $\pi_t$ invested in the risky asset lies in
the interval $[\pi_-,\pi_+]$, almost surely, entails no trading whenever $\pi\in (\pi_-,\pi_+)$ (the no-trade region) and engages in trading
only at the boundaries $\pi_\pm$. For sufficiently small $\varepsilon$, $\hat\varphi$ is optimal, and
the value function is 
\begin{align}\nonumber
\textstyle{F_{\infty}(\hat\varphi)}&\textstyle{=r+\max_{\varphi\in \Phi}\lim_{T\rightarrow\infty}\frac1T 
\mathbb E\left[
\int_0^T \left( \mu  \pi_t - \frac\gamma 2 \sigma^2 \pi_t^2 \right)dt
-{\varepsilon}\int_0^T\pi_t\frac{d\varphi^\downarrow_t}{\varphi_t} 
\right]}\\\label{eq expl value function}&\textstyle{=r+\mu \pi_--\frac{\gamma \sigma^2}{2}\pi_-^2.}
\end{align}
\end{proposition}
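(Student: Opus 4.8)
The plan is to run a standard verification argument on the HJB solution $(V,\lambda)$ supplied by Proposition \ref{prop2}, where $V(\zeta)=\int_0^\zeta \hat W(z)\,dz$ and $\lambda=h(\zeta_-)$. Two of the assertions are immediate. The claim that $\pi_t\in[\pi_-,\pi_+]$ almost surely, with no trading inside $(\pi_-,\pi_+)$ and trading only at the boundaries, is exactly the content of Lemma \ref{existence controllable strategy}: that lemma realizes $\hat\varphi$ as a reflected diffusion keeping $\zeta_t$ in $[\zeta_-,\zeta_+]$, and since $\zeta\mapsto\zeta/(1+\zeta)$ is monotone on each admissible component, $\pi_t=\zeta_t/(1+\zeta_t)$ stays in $[\pi_-,\pi_+]$. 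It then remains to prove optimality and to identify the value as $r+\mu\pi_- -\tfrac{\gamma\sigma^2}{2}\pi_-^2 = r + h(\zeta_-)=r+\lambda$.

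For optimality I would first reduce, via Lemma \ref{lem integrability} and Remark \ref{Cristiano}, to admissible strategies with $\pi_t\ge 0$, i.e. $\zeta_t\ge 0$ or $\zeta_t<-1/(1-\varepsilon)$. Writing the objective in the form \eqref{eq main problem 1} with $h(\zeta)=\mu\pi-\tfrac{\gamma\sigma^2}{2}\pi^2$, I would study
\[
\Gamma_t:=\int_0^t h(\zeta_s)\,ds-\varepsilon\int_0^t \pi_s\,\frac{d\varphi_s^\downarrow}{\varphi_s}-\lambda t-V(\zeta_t).
\]
Applying It\^o's formula to $V(\zeta_t)$ with the dynamics \eqref{eq zeta diff} (so $d\langle\zeta\rangle_t=\sigma^2\zeta_t^2\,dt$), the $dt$-terms of $dV$ assemble into $\mathcal A V(\zeta_t)\,dt$, while the finite-variation terms carry coefficients $\zeta_t(1+\zeta_t)V'(\zeta_t)$ for $d\varphi^\uparrow$ and $-\zeta_t(1+(1-\varepsilon)\zeta_t)V'(\zeta_t)$ for $d\varphi^\downarrow$, giving
\[
d\Gamma_t=\bigl[h(\zeta_t)-\lambda-\mathcal A V(\zeta_t)\bigr]dt-\sigma\zeta_t V'(\zeta_t)\,dB_t-\zeta_t(1+\zeta_t)V'(\zeta_t)\frac{d\varphi_t^\uparrow}{\varphi_t}+\bigl[\zeta_t(1+(1-\varepsilon)\zeta_t)V'(\zeta_t)-\varepsilon\pi_t\bigr]\frac{d\varphi_t^\downarrow}{\varphi_t}.
\]
The three inequalities packaged in \eqref{eq: HJBx}, namely $\mathcal A V\ge h-\lambda$, $V'\ge 0$, and $V'\le G$, make every piece have the right sign: the drift is $\le 0$; since $\zeta(1+\zeta)\ge 0$ on the admissible domain and $V'\ge 0$, the $d\varphi^\uparrow$-term is $\le 0$; and because $\zeta(1+(1-\varepsilon)\zeta)\ge 0$ there while $(1+\zeta)(1+(1-\varepsilon)\zeta)>0$, the bound $V'\le G=\varepsilon/((1+\zeta)(1+(1-\varepsilon)\zeta))$ yields $\zeta(1+(1-\varepsilon)\zeta)V'\le \varepsilon\zeta/(1+\zeta)=\varepsilon\pi$, so the $d\varphi^\downarrow$-term is $\le 0$ as well. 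Thus $\Gamma_t$ is a local supermartingale, and once the $dB_t$-integral is shown to be a true martingale I obtain $\mathbb E[\Gamma_T]\le\Gamma_0=-V(\zeta_0)$, which rearranges to $F_T(\varphi)-r\le\lambda+\tfrac1T(\mathbb E[V(\zeta_T)]-V(\zeta_0))$.

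Letting $T\to\infty$ and using that $V$ stays bounded along admissible trajectories (its derivative $\hat W$ is bounded and decays like $G(\zeta)\sim \varepsilon/((1-\varepsilon)\zeta^2)$ at infinity, while admissibility keeps $\zeta_t$ away from the singularity $-1/(1-\varepsilon)$) forces the boundary term to vanish, giving $\limsup_{T\to\infty}F_T(\varphi)\le r+\lambda$ for every admissible $\varphi$. For $\hat\varphi$ every inequality above is saturated: the drift vanishes on $[\zeta_-,\zeta_+]$ by construction, $d\varphi^\uparrow$ is supported on $\{\zeta=\zeta_-\}$ where $V'(\zeta_-)=\hat W(\zeta_-)=0$, and $d\varphi^\downarrow$ is supported on $\{\zeta=\zeta_+\}$ where $V'(\zeta_+)=G(\zeta_+)$ makes the bracket $\zeta_+(1+(1-\varepsilon)\zeta_+)G(\zeta_+)-\varepsilon\pi_+$ equal to zero. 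Hence $\Gamma_t$ is a martingale under $\hat\varphi$, so $\mathbb E[\Gamma_T]=\Gamma_0$, and since $\hat\zeta_T\in[\zeta_-,\zeta_+]$ is bounded the boundary term again vanishes, yielding $\lim_{T\to\infty}F_T(\hat\varphi)=r+\lambda=r+h(\zeta_-)=r+\mu\pi_- -\tfrac{\gamma\sigma^2}{2}\pi_-^2$. Combining the upper bound with this equality shows $\hat\varphi$ is optimal and establishes \eqref{eq expl value function}.

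The main obstacle I anticipate is the passage from local to genuine (super)martingale together with the uniform control of $\mathbb E[V(\zeta_T)]/T$ over all admissible competitors. This is precisely where the integrability conditions of Definition \ref{def: admiss} enter: boundedness of $\zeta V'(\zeta)$ (which follows from $0\le V'\le G$ and the boundedness of $\zeta G(\zeta)$ on the admissible set) controls the quadratic variation $\int_0^T\sigma^2(\zeta_t V'(\zeta_t))^2\,dt$ of the stochastic integral, while the solvency margin $\varepsilon'>\varepsilon$ in \eqref{eq: liquidity} keeps trajectories bounded away from the liquidation singularity so that $V$ remains bounded along them. Making these two points rigorous, rather than the essentially algebraic sign-checking above, is the technical heart of the proof.
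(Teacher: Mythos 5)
Your overall strategy coincides with the paper's: reduce to nonnegative weights via Lemma \ref{lem integrability} and Remark \ref{Cristiano}, then run a supermartingale/martingale verification against the HJB solution $(V,\lambda)$ of Proposition \ref{prop2}, using the solvency margin $\varepsilon'>\varepsilon$ of \eqref{eq: liquidity} to bound the derivative of the value function, turn the stochastic integral into a true martingale, and kill the transversality term. Your sign computations for the $dt$, $d\varphi^\uparrow$ and $d\varphi^\downarrow$ terms are correct, and your saturation argument for $\hat\varphi$ is exactly the paper's.

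The genuine gap is the choice of state variable. You apply It\^o's formula to $V(\zeta_t)$ along an \emph{arbitrary} admissible competitor, but the risky/safe ratio $\zeta_t=Y_t/X_t$ lives on the disconnected set $\left(-\infty,-\tfrac{1}{1-\varepsilon}\right)\cup[0,\infty)$, and an admissible strategy may pass through full investment $\pi_t=1$, i.e.\ zero safe position $X_t=0$ --- for instance, in the leveraged case $[\pi_-,\pi_+]\subset(1,1/\varepsilon)$, any strategy that starts at $\pi_0<1$ and buys its way into the no-trade region must cross. At such times $\zeta_t$ explodes (passing from $+\infty$ to $-\infty$), it is not a semimartingale, the dynamics \eqref{eq zeta diff} degenerate, and $V$, which is determined only up to an additive constant on each connected component of its domain, has no canonical global meaning; hence your process $\Gamma_t$ and its supermartingale property are undefined precisely where they are needed. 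Your argument therefore yields $F_\infty(\varphi)\le r+\lambda$ only for competitors that never cross $\pi_t=1$, not for all of $\Phi$, so optimality is not established. This is exactly why the paper's proof of Proposition \ref{prop3} changes variables to the risky weight: the map $\zeta=-1+\tfrac{1}{1-\pi}$ glues the two branches into the connected interval $[0,1/\varepsilon)$, $\pi_t$ is a genuine continuous semimartingale with dynamics \eqref{eq pi diff}, and the verification is run on $\hat V(\pi)=V(\zeta(\pi))$ with the HJB equation \eqref{eq HJB new}, where the bounds $\hat V'\le \varepsilon\varepsilon'/(\varepsilon'-\varepsilon)$ and $\vert\hat V(\pi_T)-\hat V(\pi_0)\vert\le \varepsilon/(\varepsilon'-\varepsilon)$ do all the work you describe. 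Your treatment of the candidate $\hat\varphi$ itself is fine, since its state stays in $[\zeta_-,\zeta_+]$ on a single branch; it is the upper bound over all admissible strategies that requires the compactified coordinate (or a separate argument excluding crossings, which you do not give).
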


\begin{proof}[Proof of Proposition \ref{prop3}]
Recall from Proposition \ref{prop2} that $\lambda=h(\zeta_-)$ and $(V,\lambda)$, defined from
the unique solution of the free boundary problem, is a solution of the HJB equation \eqref{eq: HJBx}. For the verification, the proportion $\pi_t$ of wealth in the risky asset is used instead of the risky-safe ratio $\zeta_t$. The change of variable $
 \zeta=-1+\frac{1}{1-\pi}$ amounts to a compactification of the real line, such that the two intervals $[-\infty, -1/(1-\varepsilon))$ and $(0,\infty]$ are mapped onto
the connected interval $[0,1/\varepsilon)$. Denote by $\mathcal L$ the differential operator
\[
\textstyle{(\mathcal L f)(\pi):=\frac{\sigma^2}{2}f''(\pi)\pi^2(1-\pi)^2+f'(\pi)(\mu-\sigma^2\pi)\pi (1-\pi).}
\]
Set $\hat h(\pi)=h(\zeta(\pi))=\mu \pi-\frac{\gamma \sigma^2}{2}\pi^2$. The function $\hat V(\pi):=V(\zeta(\pi))$ satisfies the HJB equation
\begin{equation}\label{eq HJB new}
\textstyle{\min(\mathcal L \hat V(\pi)-\hat h(\pi)+\lambda, \hat V'(\pi), \varepsilon/(1-\varepsilon \pi)-\hat V'(\pi))=0,\quad  0\leq\pi<1/\varepsilon.}
\end{equation}
First, note that $F_{\infty}(\varphi)\leq \lambda+r$ for any admissible trading strategy $\varphi$: By Lemma \ref{lem integrability} and Remark \ref{rem: range of trade}, without loss of generality assume $\pi_t\geq 0$ almost surely for all $t\ge 0$. An application of It\^o's formula to the stochastic process $\hat V(\pi_t)$, where $\hat V$ is the solution of the HJB equation \eqref{eq HJB new}, yields
\begin{align}\label{ito for veri}
\textstyle{\hat V(\pi_T)-\hat V(\pi_0)}&\textstyle{=\int_0^T \hat V'(\pi_t)d\pi_t+\frac{1}{2}\hat V''(\pi_t) d\langle \pi\rangle_t}\\\label{term 1 in eq}
&\textstyle{=\int_0^T \left(\mathcal L \hat V(\pi)-\hat h(\pi_t)+\lambda\right)dt +\int_0^T (\hat h(\pi_t)-\lambda)dt}\\\label{term 2 in eq}
&\textstyle{+ \int_0^T \hat V'(\pi_t)\pi_t (1-\pi_t)\sigma dB_t}\\\label{term 4 in eq}
&\textstyle{- \int_0^T\hat V '(\pi_t)(1-\varepsilon \pi_t)\pi_t\frac{d\varphi_t^\downarrow}{\varphi_t}}\\
\label{term 3 in eq}
&\textstyle{+ \int_0^T \hat V'(\pi_t)\pi_t \frac{d\varphi_t^\uparrow}{\varphi_t}.}
\end{align}

The first term in line \eqref{term 1 in eq} is non-negative, in view of \eqref{eq HJB new}. Furthermore, \eqref{eq: liquidity} implies the existence of $\varepsilon'>\varepsilon$ such that $\pi_t<1/\varepsilon'<1/\varepsilon$, for all $t$, a.s. Using \eqref{eq HJB new}  one thus obtains
\begin{equation}\label{hat v prime est}
\hat V'(\pi_t)\leq \frac{\varepsilon\varepsilon'}{\varepsilon'-\varepsilon},\qquad\text{a.s. for all }t\ge 0.
\end{equation}
Hence \eqref{term 2 in eq} is a martingale with zero expectation. Again, \eqref{eq HJB new} implies that
\[
\textstyle{\hat V'(\pi_t)\pi_t(1-\varepsilon\pi_t)\leq \varepsilon \pi_t,}
\]
whence \eqref{term 4 in eq} satisfies
\[
\textstyle{- \int_0^T\hat V '(\pi_t)(1-\varepsilon \pi_t)\pi_t\frac{d\varphi_t^\downarrow}{\varphi_t}\geq -\varepsilon \int_0^T \pi_t \frac{d\varphi^\downarrow_t}{\varphi_t}.}
\]
Finally, \eqref{term 3 in eq} is non-negative, because $\hat V'\geq 0$ due to \eqref{eq HJB new}.

Taking the expectation of \eqref{ito for veri} yields the estimate
\begin{equation}\label{est for veri}
\textstyle{\frac{1}{T}\mathbb E[\hat V(\pi_T)-\hat V(\pi_0)]\geq -\lambda+\frac{1}{T}\mathbb E[\int_0^T \hat h(\pi_t)dt]-\varepsilon \frac{1}{T}\int_0^T \pi_t\frac{d\varphi_t^\downarrow}{\varphi_t}.}
\end{equation}
By eq.~\eqref{hat v prime est}
\[
\textstyle{\vert \hat V(\pi_t)-\hat V(\pi_0)\vert\leq \vert \pi_T-\pi_0\vert \sup_{0<u \leq 1/\varepsilon '}\vert \hat V'(u)\vert\leq \frac{\varepsilon}{\varepsilon'-\varepsilon},}
\]
therefore $\lim_{T\rightarrow \infty}\frac{1}{T}\mathbb E[\hat V(\pi_T)-\hat V(\pi_0)]=0$. Hence letting $T\rightarrow \infty$ in \eqref{est for veri} implies that for any admissible strategy $\varphi$ one has $F_{\infty}(\varphi)\leq \lambda+r$. Finally, this bound is attained by the admissible trading strategy $\hat \varphi$
defined by Lemma \eqref{existence controllable strategy} in terms of the free boundaries $(\zeta_-,\zeta_+)$: Let $\zeta_t$ be the corresponding
risky-safe ratio. Using It\^o's formula, one has $
dV(\zeta_t)=V'(\zeta_t)\zeta_t \sigma d B_t+ 0-\varepsilon \pi_t \frac{d\varphi_t^\downarrow}{\varphi_t}+(h(\zeta_t)-\lambda)dt$. Division by $T$ yields, in view of \eqref{eq main problem 1},
\[
\textstyle{\frac1T 
\mathbb E\left[
\int_0^T \left( \mu \pi_t - \frac\gamma 2 \sigma^2 \pi_t^2 \right)dt
-{\varepsilon}\int_0^T\pi_t\frac{d\varphi^\downarrow_t}{\varphi_t} 
\right]=\lambda+\frac{1}{T}\mathbb E[\hat V(\pi_T)-\hat V(\pi_0)].}
\]
Letting $T\rightarrow\infty$, one obtains $F_{\infty}(\hat \varphi)=\lambda+r$. 
\end{proof}
\subsection{Proof of Theorem \ref{thm free b} \ref{thm free b part 1}--\ref{thm free b part 3}}
Theorem \ref{thm free b} \ref{thm free b part 1} is proved in Proposition \ref{prop1}, and
Theorem \ref{thm free b} \ref{thm free b part 2} \& \ref{thm free b part 3} are proved
in Proposition \ref{prop3}.

\section{Performance and Asymptotics}\label{app proof main theorem}
In this section, ergodicity arguments are used to derive closed-form expressions for average trading costs ($\avtrco$) and long-run mean and long-run variance of the optimal trading strategy. These formulae in turn yield the asymptotic expansions of Theorem \ref{thm free b} \ref{thm free b part 4}.

\subsection{The frictionless contribution}
Let $\zeta_-,\zeta_+$ be the free boundaries obtained in Proposition \ref{prop1}. In view of Remark \ref{rem: range of trade}, assume that either $\zeta_-<\zeta_+<-1$ (leveraged case)
or $\zeta_->\zeta_+>0$ throughout (non-leveraged case), and define the integral
\begin{equation}\label{eq: obj fricless}
I:=\frac{1}c\int_{\zeta_-}^{\zeta_+} h(\zeta)\vert\zeta\vert^{2\gamma\pi_*-2}d\zeta,
\end{equation}
where the normalizing constant is 
\begin{equation}\label{const c}
c:=\int_{\zeta_-}^{\zeta_+}\vert \zeta\vert^{2\gamma\pi_*-2}d\zeta= \sgn(\zeta_-)\frac{\vert\zeta_+\vert^{2\gamma\pi_*-1}-\vert\zeta_-\vert^{2\gamma\pi_*-1}}{2\gamma\pi_*-1}.
\end{equation}

\begin{lemma}\label{integral h}
\begin{equation}\label{eq: normalised mean var}
I=h(\zeta_-)+\frac{\sigma^2(2\gamma\pi_*-1)}{2} \left(\frac{G({\zeta_+}) {\zeta_+}}{1-\left(\frac{{\zeta_-}}{{\zeta_+}}\right)^{2\gamma\pi_*-1}}\right).
\end{equation}
\end{lemma}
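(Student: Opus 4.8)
The plan is to identify $I$ as the stationary average of the frictionless integrand $h$ with respect to the ergodic density of Lemma \ref{existence controllable strategy} (whose weight is exactly $|\eta|^{2\mu/\sigma^2-2}=|\eta|^{2\gamma\pi_*-2}$), and to evaluate its numerator by integrating the \emph{first-order} form of the free-boundary ODE against that weight, using the pasting condition $W(\zeta_+)=G(\zeta_+)$ to close the computation.

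First I would recall from the proof of Proposition \ref{prop1} that integrating \eqref{eq: TKA fbp re} once, together with the initial conditions \eqref{initial0 TKA re}--\eqref{initial1 TKA re}, gives the first integral $\frac{\sigma^2\zeta^2}{2}W'(\zeta)+\mu\zeta W(\zeta)=h(\zeta)-h(\zeta_-)$ on $[\zeta_-,\zeta_+]$. Multiplying this relation by $\frac{2}{\sigma^2\zeta^2}|\zeta|^{2\gamma\pi_*}$ and using $\gamma\pi_*=\mu/\sigma^2$ recasts it as the exact-differential identity
\[
\bigl(|\zeta|^{2\gamma\pi_*}W(\zeta)\bigr)'=\frac{2}{\sigma^2}\bigl(h(\zeta)-h(\zeta_-)\bigr)|\zeta|^{2\gamma\pi_*-2},
\]
where I use $\frac{d}{d\zeta}|\zeta|^{2\gamma\pi_*}=2\gamma\pi_*|\zeta|^{2\gamma\pi_*-2}\zeta$. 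This is the integrating-factor normal form of the linear equation for $W$ and is the computational heart of the lemma.

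Next I would integrate this identity over $[\zeta_-,\zeta_+]$. The left endpoint contributes nothing because $W(\zeta_-)=0$, while at the right endpoint $W(\zeta_+)=G(\zeta_+)$ by \eqref{terminal0 TKA re}, yielding
\[
\int_{\zeta_-}^{\zeta_+}\bigl(h(\zeta)-h(\zeta_-)\bigr)|\zeta|^{2\gamma\pi_*-2}\,d\zeta=\frac{\sigma^2}{2}|\zeta_+|^{2\gamma\pi_*}G(\zeta_+).
\]
I would then split the numerator of $I$ as $\int_{\zeta_-}^{\zeta_+}h\,|\zeta|^{2\gamma\pi_*-2}=\int_{\zeta_-}^{\zeta_+}(h-h(\zeta_-))|\zeta|^{2\gamma\pi_*-2}+h(\zeta_-)\,c$, divide by $c$, and substitute the closed form \eqref{const c}. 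Cancelling one factor $|\zeta_+|^{2\gamma\pi_*-1}$ between $|\zeta_+|^{2\gamma\pi_*}$ and the denominator $|\zeta_+|^{2\gamma\pi_*-1}-|\zeta_-|^{2\gamma\pi_*-1}$ turns the quotient into $\zeta_+ G(\zeta_+)/\bigl(1-(\zeta_-/\zeta_+)^{2\gamma\pi_*-1}\bigr)$, which is exactly \eqref{eq: normalised mean var}.

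The only point requiring care — and the one I expect to be the main obstacle to a clean write-up, rather than any analytic difficulty — is the sign bookkeeping that makes a single formula valid in both the leveraged regime ($\zeta<0$ throughout) and the long-only regime ($\zeta>0$ throughout). In the leveraged case $\sgn(\zeta_-)=-1$ and $|\zeta_+|=-\zeta_+$, so the factor $\sgn(\zeta_-)$ appearing in $c$ combines with the identity $(-1)|\zeta_+|^{2\gamma\pi_*}=\zeta_+|\zeta_+|^{2\gamma\pi_*-1}$ to restore the same numerator $\zeta_+G(\zeta_+)$; moreover $\zeta_-/\zeta_+=|\zeta_-|/|\zeta_+|>0$, so the ratio inside the exponent is unchanged. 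Tracking these signs confirms that \eqref{eq: normalised mean var} holds verbatim in both regimes.
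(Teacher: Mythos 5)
Your proof is correct and takes essentially the same route as the paper: your exact-differential identity $\bigl(|\zeta|^{2\gamma\pi_*}W(\zeta)\bigr)'=\tfrac{2}{\sigma^2}\bigl(h(\zeta)-h(\zeta_-)\bigr)|\zeta|^{2\gamma\pi_*-2}$ is precisely the integrating-factor form of the variation-of-constants representation \eqref{TKA IVP sol}, which the paper evaluates at $\zeta_+$ together with the pasting condition $W(\zeta_+)=G(\zeta_+)$ (equation \eqref{eq: TAK1}) to obtain the same key integral identity. The subsequent split of the numerator, division by $c$ from \eqref{const c}, and the sign bookkeeping across the leveraged and long-only regimes match the paper's normalization step.
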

\begin{proof}
From equations \eqref{TKA IVP sol} and \eqref{eq: TAK1} it follows that
\[
 \int_{\zeta_-}^{\zeta_+} h(\zeta)\vert\zeta\vert^{2\gamma\pi_*-2}d\zeta=h({\zeta_-})\sgn(\zeta_-)\frac{{\vert\zeta_+\vert}^{2\gamma\pi_*-1}-{\vert\zeta_-\vert}^{2\gamma\pi_*-1}}{2\gamma\pi_*-1}+\frac{\sigma^2 {\zeta_+}^{2\gamma\pi_*}}{2}G({\zeta_+}).
\]
By normalizing, \eqref{eq: normalised mean var} follows.
\end{proof}
\subsection{Transaction costs}
For the optimal trading policy, the risky-safe ratio $\zeta$ is a geometric Brownian motion with parameters $(\mu,\sigma)$, reflected at $\zeta_-,\zeta_+$ respectively, see Lemma \ref{existence controllable strategy}. Hence the following ergodic result \cite[Lemma C.1]{gerhold.al.11} applies:
\begin{lemma}\label{lem ergodic}
Let $\eta_t$ be a diffusion on an interval $[l,u]$, $0<l<u$, reflected at the boundaries, i.e.
\[
d\eta_t=b(\eta_t)dt+a(\eta_t)^{1/2}dB_t +d  L_t-d U_t,
\]
where the mappings $a(\eta)>0$ and $b(\eta)$ are both continuous, and the continuous, non-decreasing
processes $  L_t$ and $  U_t$ satisfy $  L_0=   U_0=0$ and increase only on $\{  L_t=l\}$
and $\{  U_t=u\}$, respectively. Denoting by $\nu(\eta)$ the invariant density of $\eta_t$, the following almost sure limits hold:
\[
\lim_{T\rightarrow \infty} \frac{   L_T}{T}=\frac{a(l) \nu(l)}{2},\quad \lim_{T\rightarrow \infty} \frac{   U_T}{T}=\frac{a(u) \nu(u)}{2}.
\]
\end{lemma}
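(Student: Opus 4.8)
The plan is to read off both limits at once from It\^o's formula applied to a single, well-chosen family of test functions. Writing the generator as $\mathcal A f=\tfrac12 a f''+b f'$, I would apply It\^o's formula to $f(\eta_t)$ for an arbitrary $f\in C^2([l,u])$ to obtain
\[
f(\eta_T)-f(\eta_0)=\int_0^T \mathcal A f(\eta_t)\,dt+\int_0^T f'(\eta_t)a(\eta_t)^{1/2}\,dB_t+\int_0^T f'(\eta_t)\,dL_t-\int_0^T f'(\eta_t)\,dU_t.
\]
Since $L$ and $U$ increase only on $\{\eta_t=l\}$ and $\{\eta_t=u\}$ respectively, the last two integrals collapse to $f'(l)L_T$ and $f'(u)U_T$. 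Dividing by $T$ and letting $T\to\infty$, the left-hand side vanishes because $f$ is bounded on the compact interval, and the stochastic integral divided by $T$ vanishes almost surely by the law of large numbers for continuous local martingales: its quadratic variation $\int_0^T (f'(\eta_t))^2 a(\eta_t)\,dt$ grows at most linearly in $T$, since $f'$ and $a$ are bounded. By positive recurrence of the reflected diffusion (automatic on a compact interval) and the ergodic theorem, $\tfrac1T\int_0^T \mathcal A f(\eta_t)\,dt\to\int_l^u \mathcal A f(\eta)\,\nu(\eta)\,d\eta$ almost surely.

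Next I would reduce the spatial integral to boundary data. Integrating by parts twice on the $\tfrac12 a f''$ term and once on the $b f'$ term gives
\[
\int_l^u \mathcal A f\,\nu\,d\eta=\Big[\tfrac12 a\nu f'\Big]_l^u+\int_l^u\Big(b\nu-\tfrac12(a\nu)'\Big)f'\,d\eta.
\]
The invariant density of a reflected diffusion carries zero probability flux, i.e. $b\nu-\tfrac12(a\nu)'\equiv 0$; this is exactly the first integral of the stationary Fokker--Planck equation $(a\nu)'-2b\nu=0$ used to derive $\nu$ in Lemma \ref{existence controllable strategy}. Hence the interior integral drops out and only $\tfrac12 a(u)\nu(u)f'(u)-\tfrac12 a(l)\nu(l)f'(l)$ survives. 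Combining this with the limit of the It\^o identity yields, for every admissible $f$, the linear relation
\[
f'(l)\Big(\lim_{T\to\infty}\tfrac{L_T}{T}-\tfrac12 a(l)\nu(l)\Big)=f'(u)\Big(\lim_{T\to\infty}\tfrac{U_T}{T}-\tfrac12 a(u)\nu(u)\Big).
\]

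Finally I would choose two test functions whose boundary slopes are independent -- for instance a $C^2$ function with $f'(l)=1,\ f'(u)=0$, and another with $f'(l)=0,\ f'(u)=1$, which plainly exist on a compact interval (low-degree polynomials or smooth interpolants suffice). The first choice forces $\lim_T L_T/T=\tfrac12 a(l)\nu(l)$ and the second forces $\lim_T U_T/T=\tfrac12 a(u)\nu(u)$, the asserted identities; note that the \emph{existence} of these limits is a byproduct of the argument, since the other three terms in the It\^o identity converge, so it need not be assumed separately. The main obstacle will be the analytic justification of the ergodic step, namely that the reflected diffusion on $[l,u]$ is positive recurrent with unique stationary density $\nu$, so that almost-sure time averages of continuous functionals converge to spatial averages against $\nu$; with $a>0$ continuous and $0<l<u$ the scale and speed measures are finite, so this is standard, but it is the hypothesis that genuinely underpins the whole computation. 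The zero-flux identity for $\nu$ and the bounded-test-function bookkeeping are then routine.
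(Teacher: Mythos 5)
Your proof is correct, but note that the paper itself does not prove this lemma at all: it is imported verbatim from the literature (the text invokes \cite[Lemma C.1]{gerhold.al.11} right before the statement), so your argument should be viewed as a self-contained substitute for an external citation rather than as an alternative to an in-paper proof. Your route --- It\^o's formula for arbitrary $f\in C^2([l,u])$, collapsing $\int_0^T f'(\eta_t)\,dL_t$ to $f'(l)L_T$ (and likewise at $u$) because $dL_t$ is carried by $\{\eta_t=l\}$, eliminating the interior term via the zero-flux identity $b\nu-\tfrac12(a\nu)'\equiv 0$, and then separating the boundaries with test functions having $(f'(l),f'(u))=(1,0)$ and $(0,1)$ --- is sound, and the two supporting facts are correctly deployed: the strong law for continuous local martingales applies since $\langle M\rangle_T\le CT$ by boundedness of $f'$ and $a$ on the compact interval, and the ergodic theorem for the reflected diffusion applies since $a>0$ is continuous on $[l,u]$. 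Your observation that the existence of the limits $L_T/T$ and $U_T/T$ is a byproduct (each being a sum of three a.s.\ convergent terms once the other boundary slope is set to zero) is also right, and is the correct way to organize the argument. Two points should be made explicit. First, the integration by parts needs $a\nu\in C^1$; this holds here because the stationary density has the explicit form $\nu=\tfrac{c}{a}\exp\left(\int 2b/a\right)$, as in the derivation of Lemma \ref{existence controllable strategy}, so the zero-flux identity holds classically. Second, the a.s.\ ergodic theorem for positive recurrent reflected one-dimensional diffusions is the one ingredient you import rather than prove --- exactly as you flag --- and in a fully rigorous write-up it deserves a precise citation (e.g.\ the occupation-time results in \cite{borodin2002handbook}, which the paper uses elsewhere for the same purpose). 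With that citation in place, your proof is complete.
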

The next formula evaluates trading costs.
\begin{lemma}\label{Lem atc}
The average trading costs for the optimal trading policy are 
\begin{equation}\label{eq: ATC appendix}
\avtrco:=\varepsilon\lim_{T\rightarrow\infty}\frac{1}{T}\int_0^T \pi_t \frac{d\varphi_t^\downarrow}{\varphi_t}=\frac{\sigma^2(2\gamma\pi_*-1)}{2} \left(\frac{G({\zeta_+}) {\zeta_+}}{1-\left(\frac{{\zeta_-}}{{\zeta_+}}\right)^{2\gamma\pi_*-1}}\right).
\end{equation}
\end{lemma}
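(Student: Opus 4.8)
The plan is to exploit the ergodic structure of the optimal policy. By Proposition \ref{prop3} and Lemma \ref{existence controllable strategy}, under the optimal strategy $\hat\varphi$ the risky/safe ratio $\zeta_t$ is a geometric Brownian motion with drift $\mu$ and volatility $\sigma$ reflected at the boundaries $\zeta_-<\zeta_+$, and the cumulative sales satisfy $\frac{d\varphi^\downarrow_t}{\varphi_t}=(1+(1-\varepsilon)\zeta_t)^{-1}dU_t$, where the increasing process $U$ grows only on $\{\zeta_t=\zeta_+\}$. First I would reduce the average-cost integral to the reflection process at the upper boundary. Since $d\varphi^\downarrow_t$ is carried by $\{\zeta_t=\zeta_+\}$, where $\pi_t=\pi_+=\zeta_+/(1+\zeta_+)$, the integrand is constant on the support of $dU$, so
\[
\varepsilon\int_0^T \pi_t\frac{d\varphi^\downarrow_t}{\varphi_t}=\frac{\varepsilon\zeta_+}{(1+\zeta_+)(1+(1-\varepsilon)\zeta_+)}U_T=\zeta_+\,G(\zeta_+)\,U_T
\]
by the definition of $G$. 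Thus it only remains to evaluate $\lim_{T\to\infty}U_T/T$.

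Next I would bring the reflected dynamics into the additive form required by Lemma \ref{lem ergodic}. Multiplying \eqref{eq zeta diff} by $\zeta_t$ and using that $dL$ and $dU$ live on $\{\zeta_t=\zeta_-\}$ and $\{\zeta_t=\zeta_+\}$ respectively gives
\[
d\zeta_t=\mu\zeta_t\,dt+\sigma\zeta_t\,dB_t+\zeta_-\,dL_t-\zeta_+\,dU_t,
\]
so that $\zeta_t$ is a reflected diffusion with diffusion coefficient $a(\zeta)=\sigma^2\zeta^2$, drift $b(\zeta)=\mu\zeta$, and additive upper-reflection process $\zeta_+U_t$. Applying Lemma \ref{lem ergodic} (to $|\zeta_t|$, so as to meet its positivity hypothesis) yields $\lim_{T\to\infty}\zeta_+U_T/T=a(\zeta_+)\nu(\zeta_+)/2$, and hence
\[
\lim_{T\to\infty}\frac{U_T}{T}=\frac{\sigma^2\zeta_+\,\nu(\zeta_+)}{2}.
\]
The crucial point is that passing from the geometric reflection in \eqref{eq zeta diff} to the additive reflection of Lemma \ref{lem ergodic} introduces exactly one factor $\zeta_+$, rather than the $\zeta_+^2$ one would naively read off from $a(\zeta_+)=\sigma^2\zeta_+^2$. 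This is the step I expect to require the most care, and it is what makes the dimensions of the final formula come out correctly.

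Finally I would substitute the stationary density $\nu$ from Lemma \ref{existence controllable strategy}. Using \eqref{eq st de 1}, one has $\nu(\zeta_+)=(2\gamma\pi_*-1)\zeta_+^{2\gamma\pi_*-2}/(\zeta_+^{2\gamma\pi_*-1}-\zeta_-^{2\gamma\pi_*-1})$, so that
\[
\lim_{T\to\infty}\frac{U_T}{T}=\frac{\sigma^2(2\gamma\pi_*-1)}{2}\,\frac{1}{1-\left(\zeta_-/\zeta_+\right)^{2\gamma\pi_*-1}},
\]
and combining this with the reduction of the first step gives the claimed identity \eqref{eq: ATC appendix}. The non-leveraged case ($0<\zeta_-<\zeta_+$) and the leveraged case ($\zeta_-<\zeta_+<-1$) are treated identically once the absolute-value and sign conventions of \eqref{eq st de 1}--\eqref{eq st de 2} are inserted, since both the ratio $\left(\zeta_-/\zeta_+\right)^{2\gamma\pi_*-1}$ and the product $G(\zeta_+)\zeta_+$ are insensitive to the common sign of $\zeta_\pm$.
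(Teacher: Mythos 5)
Your proof is correct and follows essentially the same route as the paper's: reduce the cost integral to the local time at the sell boundary $\zeta_+$, apply the ergodic Lemma \ref{lem ergodic}, and substitute the stationary density from Lemma \ref{existence controllable strategy}. You are in fact more explicit than the paper on the one delicate point—converting the geometric reflection of Lemma \ref{existence controllable strategy} into the additive reflection that Lemma \ref{lem ergodic} requires, which is what produces the extra factor $\zeta_+$; the paper's own displayed identity $\varepsilon \int_0^T \pi_t \,\frac{d\varphi_t^\downarrow}{\varphi_t}=G(\zeta_+)\,U_T$ silently takes $U$ to be that additive local time.
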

\begin{proof}
Note that $\varepsilon \int_0^T \pi_t \frac{d\varphi_t^\downarrow}{\varphi_t}=G({\zeta_+})\frac{U_T}{T}$. Applying Lemma \ref{lem ergodic} to $\eta:=\zeta$ (setting $l:=\zeta_-$, $u={\zeta_+}$) and using the stationary density of $\zeta_t$ (Lemma \ref{existence controllable strategy}) \eqref{eq: ATC appendix} follows.
\end{proof}
\begin{remark}\rm
An alternative proof of Lemma \ref{Lem atc} follows from Lemma \ref{le: rewriting obj fun}, by rewriting the objective functional as $
F_\infty(\varphi)=r+\lim_{T\rightarrow \infty}\frac{1}{T}\int_0^T h(\zeta_t)dt - \avtrco$.
By the ergodic theorem \cite[II.35 and II.36]{borodin2002handbook}, $\lim_{T\rightarrow \infty}\frac{1}{T}\int_0^T h(\zeta_t)dt=I$
hence using Lemma \ref{integral h} and Proposition \ref{prop3} it follows that
\[
\avtrco=-F_\infty(\varphi)+r+I=\frac{\sigma^2(2\gamma\pi_*-1)}{2} \left(\frac{G({\zeta_+}) {\zeta_+}}{1-\left(\frac{{\zeta_-}}{{\zeta_+}}\right)^{2\gamma\pi_*-1}}\right).
\]
which is in agreement with the formula in Proposition \ref{prop3}.
\end{remark}

\subsection{Long-run mean and variance}\label{sec: long-run mean and variance}
Set
\[ 
I_{\mu}:=\int_{\zeta_-}^{\zeta_+} \left(\frac{\zeta}{1+\zeta}\right)\vert\zeta\vert^{2\gamma\pi_*-2}d\zeta,\quad 
I_{s^2}:=\int_{\zeta_-}^{\zeta_+} \left(\frac{\zeta}{1+\zeta}\right)^2\vert\zeta\vert^{2\gamma\pi_*-2}d\zeta.
\]
 In view of the ergodic theorem \cite[II.35 and II.36]{borodin2002handbook}, the long-run mean and long-run variance satisfy
\begin{align*}
\hat m&=r+\mu\lim_{T\rightarrow\infty}\frac{1}{T}\mathbb E[\int_0^T \pi_t dt]-\avtrco=r+\frac{\mu}{c} I_\mu-\avtrco,\\
\hat s^2&=\sigma^2\lim_{T\rightarrow\infty}\frac{1}{T}\mathbb E[\int_0^T \pi_t^2 dt]=\frac{\sigma^2}{c}I_{s^2},
\end{align*}
whence the following decomposition holds:
\begin{equation}\label{I decomp}
I=\frac{1}{c}\left(\mu I_{\mu}-\frac{\gamma  \sigma^2}{2}I_{s^2}\right)=\frac{\pi_*}{\pi_*}(\hat m-r+\avtrco)-\frac{\gamma}{2}\hat{\sigma^2}=h(\zeta_-)+\avtrco.
\end{equation}
Integration by parts yields
\begin{equation}\label{eq IBP mu}
I_{\mu}=\int_{\zeta_-}^{\zeta_+} \frac{\zeta}{1+\zeta}\vert \zeta\vert^{2\gamma\pi_*-2}d\zeta=\frac{{\vert\zeta_+\vert}^{2\gamma\pi_*}}{2\gamma\pi_*(1+{\zeta_+})}-\frac{{\vert\zeta_-\vert}^{2\gamma\pi_*}}{2\gamma\pi_*(1+{\zeta_-})}+\frac{I_{s^2}}{2\gamma\pi_*}.
\end{equation}
Plugging \eqref{eq IBP mu} into \eqref{I decomp} yields $I=\frac{\sigma^2}{2c}\frac{\pi_*}{\pi_*}\left(\frac{\vert\zeta_+\vert^{2\gamma\pi_*}}{1+\zeta_+}-\frac{\vert\zeta_-\vert^{2\gamma\pi_*}}{1+\zeta_-}+(1-\frac{\gamma\pi_*}{\pi_*}) I_{s^2}\right)$. Except for the singular case $\gamma=1$, one can extract $I_{s^2}$, and thus \eqref{eq IBP mu} and \eqref{eq: ATC appendix} yield a formula for
$\hat{s}^2$. Therefore, the right side of equation \eqref{I decomp} gives a formula for $\hat m$ in terms of $\hat s$:
\begin{lemma}\label{Lemma long run mean and var}

When $\gamma\neq 1$, the following identities hold:
\begin{align}\label{eq long run var}
\hat s^2&=\frac{2}{1-\gamma}\left(h(\zeta_-)+\avtrco \right)-\frac{\sigma^2}{c(1-\gamma) }\left(\frac{\vert\zeta_+\vert^{2\gamma\pi_*}}{1+\zeta_+}-\frac{\vert\zeta_-\vert^{2\gamma\pi_*}}{1+\zeta_-}\right),\\\label{eq long run mean explicit}
\hat m&=r+\frac{\gamma}{2}\hat s^2 +h(\zeta_-).
\end{align}

\end{lemma}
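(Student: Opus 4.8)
The plan is to assemble the identity purely algebraically from three ingredients already in hand: the closed form $I=h(\zeta_-)+\avtrco$ from Lemma \ref{integral h}, the decomposition \eqref{I decomp} which writes $I=\frac1c(\mu I_\mu-\frac{\gamma\sigma^2}2 I_{s^2})$, and the integration-by-parts identity \eqref{eq IBP mu} for $I_\mu$. No new analysis is needed; the whole task is to eliminate $I_\mu$ in favour of $I_{s^2}$ and the boundary data, and then to recognise $I_{s^2}$ as $\hat s^2$ via the ergodic representation. The only subtle point will be the step at which dividing by $1-\gamma$ becomes necessary, which is exactly where the hypothesis $\gamma\neq1$ enters.

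First I would record the two ergodic representations already derived above, $\hat m=r+\frac\mu c I_\mu-\avtrco$ and $\hat s^2=\frac{\sigma^2}c I_{s^2}$, together with the numerical identity $\gamma\pi_*=\mu/\sigma^2$ coming from $\pi_*=\mu/(\gamma\sigma^2)$; in particular $\frac{\mu}{2\gamma\pi_*}=\frac{\sigma^2}2$. Substituting \eqref{eq IBP mu} into $\mu I_\mu-\frac{\gamma\sigma^2}2 I_{s^2}=cI$ and applying this identity collapses the two resulting $I_{s^2}$ contributions into a single $\frac{\sigma^2(1-\gamma)}2 I_{s^2}$ term, leaving only the boundary quantities $\frac{|\zeta_+|^{2\gamma\pi_*}}{1+\zeta_+}$ and $\frac{|\zeta_-|^{2\gamma\pi_*}}{1+\zeta_-}$.

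At this stage the equation reads $cI=\frac{\sigma^2}2\big(\frac{|\zeta_+|^{2\gamma\pi_*}}{1+\zeta_+}-\frac{|\zeta_-|^{2\gamma\pi_*}}{1+\zeta_-}\big)+\frac{\sigma^2(1-\gamma)}2 I_{s^2}$. Dividing by $c$ and using $\frac{\sigma^2}c I_{s^2}=\hat s^2$ turns this into a single linear equation in $\hat s^2$ whose coefficient is $\frac{1-\gamma}2$; solving it is legitimate precisely because $\gamma\neq1$. Inserting $I=h(\zeta_-)+\avtrco$ then delivers \eqref{eq long run var} directly.

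Finally, for the mean I would return to $\hat m=r+\frac\mu c I_\mu-\avtrco$ and rewrite \eqref{I decomp} as $\frac\mu c I_\mu=I+\frac\gamma2\hat s^2$ (just the decomposition with $\frac{\sigma^2}c I_{s^2}$ replaced by $\hat s^2$). Substituting $I=h(\zeta_-)+\avtrco$ cancels the two $\avtrco$ terms and leaves $\hat m=r+\frac\gamma2\hat s^2+h(\zeta_-)$, which is \eqref{eq long run mean explicit}. The main obstacle is not conceptual but clerical: keeping the absolute values and the $\sgn(\zeta_-)$ factor in \eqref{const c} consistent across the leveraged ($\zeta_\pm<0$) and non-leveraged ($\zeta_\pm>0$) regimes, which is the reason the $|\zeta_\pm|$ notation must be carried through every term.
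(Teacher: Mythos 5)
Your proposal is correct and follows essentially the same route as the paper: the paper likewise substitutes the integration-by-parts identity \eqref{eq IBP mu} into the decomposition \eqref{I decomp}, uses $\gamma\pi_*=\mu/\sigma^2$ to collapse the $I_{s^2}$ terms into a $\frac{\sigma^2(1-\gamma)}{2}I_{s^2}$ contribution, extracts $\hat s^2$ (which is exactly where $\gamma\neq 1$ is needed), and then reads off $\hat m$ from \eqref{I decomp} together with $I=h(\zeta_-)+\avtrco$.
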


\subsection{Proof of Theorem \ref{thm free b} \ref{thm free b part 4}}\label{as of th31}
\begin{proof}
The asymptotic expansion \eqref{eq: trading boundaries TKA x} for the trading boundaries $\pi_\pm$ is derived by expanding $\frac{\zeta_\pm}{1+\zeta_\pm}$ into a power series, thereby using the asymptotic expansions \eqref{eq: asymptotics zeta} of $\zeta_\pm$.

The long-run mean $\hat m$, variance $\hat s^2$, Sharpe ratio ($(\hat m - r)/\hat s$), average trading costs $\avtrco$, and value function $\lambda$
have closed form expressions in terms of the free boundaries $\zeta_-,\zeta_+$ (see equations \eqref{eq long run mean explicit}, \eqref{eq long run var}, and equations
\eqref{eq: ATC appendix} and \eqref{eq expl value function}). Using these formulas in combination with the asymptotic expansions \eqref{eq: asymptotics zeta} of the free boundaries,
the assertion follows.
\end{proof}

\section{From Risk Aversion to Risk Neutrality}\label{sec: proof th multiplier}
In this section the free boundary problem \eqref{eq: TKA fbp}--\eqref{terminal1 TKA} for $\gamma=0$ is solved for sufficiently small $\varepsilon$, it is shown
that the solution $(W,\zeta_-,\zeta_+)$ allows to construct a solution of the corresponding HJB equation and, similarly to the case $\gamma>0$, a verification argument yields the strategy's optimality.

Numerical experiments using $\gamma>0$ indicate that the trading boundaries $\pi_\pm$ (hence the leverage multiplier) satisfy $\lim_{\varepsilon\downarrow 0}\varepsilon^{1/2}\pi_\pm=1/A_\pm$ for two constants $A_->A_+>0$. This entails that the free boundaries have the approximation $\zeta_\pm\approx -1-A_\pm \varepsilon^{1/2}$, thereby suggesting that $\zeta_\pm$ are analytic in $\delta:=\varepsilon^{1/2}$. The system \eqref{eq: TAK1}--\eqref{eq: TAK2} is rewritten by using the new parameter $\delta:=\varepsilon^{1/2}$ and by multiplying the second equation by $\delta$:
\begin{align}\label{eq: TAK1half}
\textstyle{W(\zeta_-,\zeta_+)-\frac{\delta^2}{(1+\zeta_+)(1+(1-\delta^2)\zeta_+)}}&=0,\\
\label{eq: TAK2half}
\textstyle{\delta\left(\frac{2 (h(\zeta_+)-h(\zeta_-))}{\sigma^2 \zeta_+^2}-\frac{2\mu/\sigma^2}{\zeta_+} W(\zeta_-,\zeta_+)-\frac{(1-\delta^2)^2}{(1+(1-\delta^2)\zeta_+))^2}+\frac{1}{(1+\zeta_+)^2}\right)}&=0.
\end{align}
Using the transformation $u=\frac{-1-\zeta}{\delta}$ and noting that $\vert \zeta\vert=1+\delta u$, it follows that 
\[
\textstyle{\Xi (u_-,u):=W(-1-u_-\delta,-1-u\delta)=\frac{2 \mu}{\sigma^2(1+u\delta)^2}\int_{u_-}^u \left(\frac{1}{u_-}-\frac{1}{\xi}\right)\left(\frac{1+\xi \delta}{1+u\delta}\right)^{\frac{2\mu}{\sigma^2}-2}d\xi.}
\]
Accordingly, the system \eqref{eq: TAK1half}--\eqref{eq: TAK2half} transforms into
\begin{align}\label{fbp u1}
\textstyle{\Xi(u_-,u_+)-\frac{1}{u_+((1-\delta^2)u_+-\delta)}}&=0,\\
\label{fbp u2}
\textstyle{\frac{2\mu}{\sigma^2}\left(\frac{1}{u_+}-\frac{1}{u_-}+\frac{\delta }{1+u_+\delta} \Xi (u_-,u_+)\right)-\frac{2 \left(1-\delta ^2\right) u_+-\delta}{ u_+^2 \left(\delta
   +\left(\delta ^2-1\right) u_+\right)^2}}&=0.
\end{align}
Letting $\delta\rightarrow 0$ in \eqref{fbp u1}--\eqref{fbp u2}, one obtains a system of equations for $(A_-,A_+)$,
\begin{equation}\label{first order eq1}
\frac{2\mu}{\sigma^2}\left(\log(A_-/A_+)-\frac{A_--A_+}{A_-}\right)-\frac{1}{A_+^2}=0,\quad
\frac{\mu}{\sigma^2}\left(\frac{1}{A_+}-\frac{1}{A_-}\right)-\frac{1}{A_+^3}=0.
\end{equation}

\begin{lemma}\label{prop super}
The unique solution $(A_-,A_+)$ of the system \eqref{first order eq1} is
\begin{equation}\label{eq: first order proxies}
A_-=\frac{\kappa^{-1/2}}{1-\kappa}\sqrt{\frac{\sigma^2}{\mu}},\quad A_+=\kappa^{-1/2}\sqrt{\frac{\sigma^2}{\mu}},
\end{equation}
where $\kappa\approx 0.5828$ is the unique solution of \eqref{unicorn dimless}.
\end{lemma}
\begin{proof}
The second equation in \eqref{first order eq1} gives
\begin{equation}\label{eqam}
\textstyle{A_-=\frac{\mu A_+^3}{\mu A_+^2-\sigma^2}.}
\end{equation}
Hence substituting \eqref{eqam} into the first equation of \eqref{first order eq1} gives the well-posed problem
\begin{equation}\label{eq unicorn}
\textstyle{-\frac{3}{A_+^2}+\frac{2\mu \log\left(\frac{\mu A_+^2}{\mu A_+^2-\sigma^2}\right)}{\sigma^2}=0,\quad A_+>0.}
\end{equation}
Therefore it is enough to establish that the unique solution of \eqref{eq unicorn} is as in the second equation in line \eqref{eq: first order proxies}; the formula for $A_-$ then follows from \eqref{eqam}. To this end, substitute $\xi:=\sigma^2/(\mu A_+^2)$ into \eqref{eq unicorn} to obtain equation \eqref{unicorn dimless}. Note that $f(0)=0$, $f'>0$ on $(0,1/3)$ and $f'<0$ on $(1/3,1)$, while $f(\xi)\downarrow-\infty$ as $\xi\rightarrow 1$. This implies that $f$ has a single zero $\kappa$ on $(1/3,1)$
and thus the claim concerning $A_+$ is proved.
\end{proof}
\begin{proposition}\label{prop fbp at infinity}
For sufficiently small $\delta$, there exists a unique solution $(u_+,u_-)$ of \eqref{fbp u1}--\eqref{fbp u2} near $(A_-,A_+)$. This solution is analytic in $\delta$ and satisfies
the asymptotic expansion $u_\pm=A_\pm +O(\delta)$, where $A_\pm$ are in \eqref{eq: first order proxies}.
\end{proposition}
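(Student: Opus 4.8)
The plan is to treat \eqref{fbp u1}--\eqref{fbp u2} as a system $\Phi=(\Phi_1,\Phi_2)=0$ in the unknowns $(u_-,u_+)$ depending on the parameter $\delta$, and to apply the implicit function theorem for analytic maps \cite[Theorem I.B.4]{GunningRossi} exactly as in Proposition \ref{prop eqpsi12}. Three things must be checked: that $\Phi$ is jointly analytic in a neighbourhood of $(A_-,A_+,0)$, that $\Phi(A_-,A_+,0)=0$, and that the Jacobian $\det D_{(u_-,u_+)}\Phi$ is nonzero at that point. Granting these, the theorem produces a unique branch $(u_-(\delta),u_+(\delta))$ near $(A_-,A_+)$ that is analytic in $\delta$ with $u_\pm(0)=A_\pm$, which is precisely the asserted expansion $u_\pm=A_\pm+O(\delta)$.

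For analyticity, the only delicate term is $\Xi$. I would write the kernel $\big(\tfrac{1+\xi\delta}{1+u\delta}\big)^{2\mu/\sigma^2-2}=\exp\!\big((2\mu/\sigma^2-2)[\log(1+\xi\delta)-\log(1+u\delta)]\big)$; since $u_-,u_+,\xi$ stay near $A_\pm>0$ and $\delta$ near $0$, the arguments $1+u\delta$ and $1+\xi\delta$ are positive and bounded away from zero, so the kernel and the prefactor $(1+u\delta)^{-2}$ are jointly analytic, and integrating over the analytically-varying interval $[u_-,u_+]$ keeps $\Xi$ analytic. The remaining rational terms are analytic because $u_+\big((1-\delta^2)u_+-\delta\big)\to A_+^2\neq0$. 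Setting $\delta=0$ collapses $\Xi$ to $\tfrac{2\mu}{\sigma^2}\big[\tfrac{u_+}{u_-}-1-\log(u_+/u_-)\big]$, so that $\Phi=0$ reduces to the system \eqref{first order eq1}--\eqref{first order eq2}, whose unique root is $(A_-,A_+)$ by Lemma \ref{prop super}; in particular $\Phi(A_-,A_+,0)=0$.

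The Jacobian is the crux, and here I would exploit the same structural identity used in Proposition \ref{prop eqpsi12}. Because the term in parentheses in \eqref{eq: TAK2half} is the partial $\zeta_+$-derivative of the left-hand side of \eqref{eq: TAK1half} (the risk-neutral analogue of $\Psi_2=\partial_{\zeta_+}\Psi_1$), and $\zeta_+=-1-u_+\delta$, the chain rule gives the exact relation $\Phi_2=-\partial_{u_+}\Phi_1$. Evaluated at the root, where $\Phi_2=0$, this forces the off-diagonal entry $\partial_{u_+}\Phi_1(A_-,A_+,0)=0$, so $D\Phi$ is lower triangular and
\begin{equation}
\det D\Phi(A_-,A_+,0)=\partial_{u_-}\Phi_1\cdot\partial_{u_+}\Phi_2
=\frac{2\mu}{\sigma^2}\,\frac{A_--A_+}{A_-^2}\cdot\frac{2}{A_+^4}\Big(3-\tfrac1\kappa\Big),
\end{equation}
where I substituted $\mu/\sigma^2=1/(\kappa A_+^2)$ from \eqref{eq: first order proxies}. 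The first factor is nonzero since $A_->A_+>0$, and the second since $\kappa\in(1/3,1)$ gives $3-1/\kappa>0$; hence the determinant does not vanish.

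With analyticity, the vanishing of $\Phi$ at $(A_-,A_+,0)$, and the nonzero Jacobian all in hand, \cite[Theorem I.B.4]{GunningRossi} yields the unique analytic branch and the expansion $u_\pm=A_\pm+O(\delta)$, completing the proof. I expect the main obstacle to be the Jacobian bookkeeping — in particular establishing $\Phi_2=-\partial_{u_+}\Phi_1$ and thereby reducing the $2\times2$ determinant to the product of two explicit factors whose nonvanishing rests on $A_-\neq A_+$ and $\kappa\neq1/3$; the analyticity verification and the $\delta=0$ reduction are comparatively routine.
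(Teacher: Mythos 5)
Your proposal is correct and takes essentially the same approach as the paper: check that $(A_-,A_+)$ solves the $\delta=0$ system via Lemma \ref{prop super}, show the Jacobian at $(A_-,A_+,0)$ is triangular with nonvanishing diagonal entries, and apply the analytic implicit function theorem \cite[Theorem I.B.4]{GunningRossi}. The only cosmetic difference is that you derive the off-diagonal zero $\partial_{u_+}\Phi_1(A_-,A_+,0)=0$ from the structural identity $\Phi_2=-\partial_{u_+}\Phi_1$ (mirroring the paper's own device in Proposition \ref{prop eqpsi12}), whereas the paper's proof here verifies it by direct computation using \eqref{first order eq2}; the two are equivalent, and your determinant $\frac{2\mu}{\sigma^2}\cdot\frac{A_--A_+}{A_-^2}\cdot\frac{2}{A_+^4}\left(3-\frac{1}{\kappa}\right)$ agrees with the paper's value $-4(\mu/\sigma^2)^{7/2}(\kappa-1)\kappa^{5/2}(3\kappa-1)$.
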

\begin{proof}
Denote the left sides of \eqref{fbp u1}--\eqref{fbp u2}, by $F_i((u_-,u_+),\delta)$, $i=1,2$ and $F=(F_1,F_2)$.
By Lemma \ref{prop super}, $F((A_-,A_+),0)=0$. As
\[
\textstyle{\frac{\partial \Xi}{\partial u_-}((A_-,A_+),0)=\frac{2\mu}{\sigma^2}\left(\frac{A_--A_+}{A_-^2}\right),\quad \frac{\partial \Xi}{\partial u_+}((A_-,A_+),0)=\frac{2\mu}{\sigma^2}\left(\frac{A_+-A_-}{A_- A_+}\right),}
\]
one obtains at $(A_\pm)$, $\frac{\partial F_1}{\partial u_-}=\frac{2\mu}{\sigma^2}\left(\frac{A_--A_+}{A_-^2}\right)$, $\frac{\partial F_2}{\partial u_+}=\frac{6}{A_+^4}-\frac{2\mu}{\sigma^2 A_+^2}$
and
\[
\textstyle{\frac{\partial F_1}{\partial u_+}=\frac{2}{A_+^3}+\frac{2\mu}{\sigma^2}\left(\frac{A_+-A_-}{A_-A_+}\right)=0,}
\]
where the last equality follows from the second equation in \eqref{first order eq1}. Therefore, as $\kappa\in(1/3,1)$, the Jacobian $DF$ satisfies
$\det(D F)((A_-,A_+),0)=-4 (\mu/\sigma^2) ^{7/2} (\kappa -1) \kappa ^{5/2} (3 \kappa -1)\neq 0$. By the implicit function theorem for analytic functions  \cite[Theorem I.B.4]{GunningRossi} the assertion follows.
\end{proof}
\begin{lemma}\label{lem Hilfssatz}
Let $\kappa$ be the solution of \eqref{unicorn dimless} and $\theta\in[0,1]$. Then
\begin{equation}\label{eq junge}
\textstyle{\log(1-\kappa(1-\theta))+(1-\theta)\kappa+\frac{1}{2}\frac{\kappa (1-\kappa)^2}{(1-\kappa(1-\theta))^2}=0}
\end{equation}
implies $\theta=0$.
\end{lemma}
\begin{proof}
Clearly $f(0)=0$ and also $f(1)=1/2 \kappa(1-\kappa)^2>0$. There is a single local extremum
of $f$, in $(0,1)$, namely, $
\theta_1=\frac{0.5 \left(3. \kappa ^2+\sqrt{4. \kappa ^3-3. \kappa ^4}-2. \kappa
   \right)}{\kappa ^2}\approx 0.7669$
. Because $f'(0)=0$ and $
f''(0)=\frac{\kappa ^2 \left(\kappa  \left(3 \kappa ^2-7 \kappa
   +5\right)-1\right)}{(1-\kappa )^4}>0$, 
$\theta_1$ must be the global maximum. Hence $f>0$ on $(0,1]$, whence $\theta=0$, as claimed.
\end{proof}
\begin{lemma}\label{lem super kueken}
Let $A_-$ be as in \eqref{eq: first order proxies}. The only solution of
\begin{equation}\label{eq: no 1}
\frac{2\mu}{\sigma^2}\left(\log(A_-/\xi)-\frac{A_--\xi}{A_-}\right)-\frac{1}{\xi^2}=0
\end{equation}
on $[A_+, A_-]$ is $\xi=A_+$.
\end{lemma}
\begin{proof}
Let $\xi$ be a solution of \eqref{eq: no 1}. There exists $\theta\in[0,1]$ such that
\[
\textstyle{\xi=\theta A_-+(1-\theta)A_+=A_+\left(\frac{1+\kappa(1-\theta)}{1-\kappa}\right).}
\]
Hence $A_+^*/A_-=1+\kappa(\theta-1)$, and therefore \eqref{eq: no 1}
is rewritten as \eqref{eq junge}. An application of Lemma \ref{lem Hilfssatz} yields $\xi=A_+$.
\end{proof}

\subsection{Proof of Theorem \ref{th multiplier}}

\begin{proof}
Arguing similarly as in the Proof of Proposition \ref{prop1} for the case $\gamma>0$, the solvability of the free boundary problem \eqref{eq: TKA fbp}--\eqref{terminal1 TKA} for $\gamma=0$ is equivalent to solvability of the non-linear
system \eqref{eq: TAK1half}--\eqref{eq: TAK2half}. This, in turn, is equivalent to solving the system \eqref{fbp u1}--\eqref{fbp u2} for $(u_+(\delta), u_-(\delta))$. A unique solutions of the transformed system \eqref{fbp u1}--\eqref{fbp u2} near
$(A_+,A_-)$ is provided by Proposition \ref{prop fbp at infinity}, and one has $\zeta_\pm=-1-u_\pm \delta$. In particular, one obtains
\begin{equation}\label{eq: zeta riskneutral}
\zeta_\pm=-1-A_\pm \varepsilon^{1/2}+O(1).
\end{equation}
The solution of \eqref{eq: TKA fbp}--\eqref{terminal1 TKA} is 
\begin{equation}\label{eq explicit sol}
W(\zeta):=\frac{2\mu}{\sigma^2 \vert\zeta\vert^{\frac{2\mu}{\sigma^2}}}\int_{\zeta_-}^\zeta \left(\frac{y }{1+y}-\frac{\zeta_-}{1+\zeta_-}\right)\vert y\vert^{2\mu/\sigma^2-2}dy.
\end{equation}
One defines exactly as in \eqref{eq candidate HJB} a candidate solution $(V,\lambda)$ of the HJB equation \eqref{eq: HJBx}.
Next it is shown that $(V,\lambda)$ solves the HJB equation  \eqref{eq: HJBx} (for the intervals $[\zeta_-,\zeta_+]$, $(-\infty, \zeta_-]$ and finally for $[\zeta_+,\infty)$). In fact, the interval $[-1/(1-\varepsilon),0)$ is excluded.

On $[\zeta_-,\zeta_+]$, 
\[
\textstyle{(\mathcal A V(\zeta)-h(\zeta)+h(\zeta_-))'=\frac{1}{2}\sigma^2 \zeta^2 W''(\zeta)+(\sigma^2+\mu)\zeta W'(\zeta)+\mu W(\zeta)-\frac{\mu}{(1+\zeta)^2}=0}
\]
by construction. Because of the initial conditions \eqref{initial0 TKA}--\eqref{initial1 TKA}, $
(\mathcal A V(\zeta)-h(\zeta)+h(\zeta_-))\mid_{\zeta=\zeta_-}=\mathcal A V(\zeta)\mid_{\zeta=\zeta_-}=0$, and thus $\mathcal A V(\zeta)-h(\zeta)+h(\zeta_-)\equiv 0$ for $\zeta\in [\zeta_-,\zeta_+]$. Next it is shown that $0\leq V'\leq G$ on all of $[\zeta_-,\zeta_+]$. As $
(h(\zeta)-h(\zeta_-))'=h'(\zeta)=\frac{\mu}{(1+\zeta)^2}$ is strictly positive, $h(\zeta)-h(\zeta_-)>0$ for $\zeta\in(\zeta_-,\zeta_+]$.
From the explicit formula \eqref{eq explicit sol} it then follows that $V'=W\geq 0$ for $\zeta\in [\zeta_-,\zeta_+]$. It remains to show $V'\leq G$. As $V'(\zeta_+)-G(\zeta_+)=0$, 
and $V'(\zeta_-)-G(\zeta_-)=-G(\zeta_-)<0$, it suffices to rule out any zero $\zeta_+^*$ of $V'(\zeta)-G(\zeta)$ on $(\zeta_-,\zeta_+)$, for sufficiently small $\varepsilon$. This is equivalent to ruling out any zeros of 
\[
 \textstyle{\kappa(u,\delta):=V'(\zeta(u))-G(\zeta(u)), \quad u\in (u_+(\delta),u_-(\delta)),}
\]
where  $\zeta(u)=-1-u\delta$, for sufficiently small $\delta$. Recall that $u_\pm(\delta)$ is implicitly defined by $\zeta_\pm=-1-u_\pm(\delta) \delta$, 
$\lim_{\delta\rightarrow 0}u_\pm (\delta)=A_\pm$. Suppose, by contradiction, that there exists
$\delta_k\downarrow 0$ and a sequence $u_+(\delta_k)$ satisfying $u_-(\delta_k)<u_+^*(\delta_k)<u_+(\delta_k)$ which is a solution of $\kappa(u_+^*(\delta_k),\delta_k)=0$ for each $k\in\mathbb N$. By taking a subsequence, if necessary, one may without loss of generality assume $u_+^*(\delta_k)\rightarrow A_+^*\in [A_+,A_-]$ as $k\rightarrow\infty$. 
Suppose first that $A_+^*=A_+$ and define the map $\delta\mapsto u^*(\delta)$ by intertwining
$u_+$ and $u^*_+$ as follows:
\[
\textstyle{u_+^*(\delta)=\begin{cases}u_+^*(\delta_k),\quad k\in\mathbb N\\u_+(\delta),\quad \delta\neq \delta_k\end{cases}.}
\]
Then for sufficiently small $\delta$, the pair $(u_-(\delta),u_+^*(\delta))$ solves \eqref{fbp u1}--\eqref{fbp u2} near $(A_-,A_+)$, hence by Proposition \ref{prop fbp at infinity}, $u_+^*=u_+$, in contradiction to our previous assumption $\zeta_+^*\in (\zeta_-,\zeta_+)$.  Second, consider the case $A_+^*\in (A_+,A_-]$: By equation \eqref{fbp u1} 
\[
\frac{2\mu}{\sigma^2}\left(\log(A_-/A_+^*)-\frac{A_--A_+^*}{A_-}\right)-\frac{1}{(A_+^*)^2}=0.
\]
Lemma \ref{lem super kueken} states $A_+^*=A_+$, which is also impossible. Hence $V'(\zeta)-G(\zeta)$ has no zeroes on $(\zeta_-,\zeta_+)$, and thus $V$ solves the HJB equation on $[\zeta_-,\zeta_+]$.

Consider now $\zeta\leq \zeta_-$. $V$ solves the HJB equation, if 
\[
\mathcal A V-h(\zeta)+h(\zeta_-)=h(\zeta_-)-h(\zeta)\geq 0, \quad G(\zeta)\geq 0.
\]
The first inequality is clearly fulfilled. Also, as $\zeta<-1/(1-\varepsilon)$ or $\zeta>0$, $G$ is a strictly positive function  on [$-\infty,\zeta_-]$,
which finishes the proof for $\zeta\leq \zeta_-$.

Finally, consider $\zeta\geq \zeta_+$. As $G=W$, it suffices to show that
\begin{equation}\label{eq: eq hjb partial 3 gamma0}
L(\zeta):=\mathcal A V(\zeta)-h(\zeta)+h(\zeta_-)\geq 0,\quad G(\zeta)\geq 0.
\end{equation}
The second inequality has now been proved, and it remains to establish the first inequality in \eqref{eq: eq hjb partial 3 gamma0}. As $h_1(\zeta)=\mu\frac{\zeta}{1+\zeta}-\frac{\sigma^2}{2}\left(\frac{\zeta}{1+\zeta}\right)^2$ it follows that
\[
L(\zeta)=\frac{\sigma^2 \zeta^2}{2}G'(\zeta)+\mu \zeta G(\zeta)-h(\zeta)+h(\zeta_-)=h(\zeta_-)-h_1((1-\varepsilon)\zeta)-\frac{\sigma^2}{2}\left(\frac{\zeta}{1+\zeta}\right)^2.
\]
Therefore, by the boundary conditions at $\zeta_+$, and as $W$ solves the free boundary problem
on $[\zeta_-,\zeta_+]$, 
\[
L(\zeta_+)=\frac{\sigma^2 \zeta^2}{2}W'(\zeta_+)+\mu \zeta W(\zeta_+)+h(\zeta_-)-h(\zeta_+)=0.
\]

To show that $L(\zeta)\geq 0$ for all
$\zeta$, it suffices to show that there are no solutions of the equation $L(\zeta)=0$ on $\zeta\geq \zeta_+$ except $\zeta_+$. The transformation $z=\frac{\zeta}{1+\zeta}$ introduces $F(z,\varepsilon):=L(\zeta(z))$. As $F(\pi_+)=0$, polynomial division by $(z-\pi_+)$ yields \eqref{eq: Dec F}, where the third order polynomial $g$ has derivative
$g'=a_0+a_1z+a_2 z^2$ with certain, relatively complex but explicit coefficients $a_0,a_1,a_2$. By the second formula of \eqref{eq: first order proxies}
\begin{equation}\label{eq: pos g}
g(\pi_+)=-\mu+\frac{3 \sigma^2}{A_+^2}+O(\varepsilon^{1/2})
\end{equation}
is strictly positive for sufficiently small $\varepsilon$ because $\kappa>1/3$. The zeros $z_\pm$ of $g'$
are $z_-=-\frac{1}{2 A_+\varepsilon^{1/2}}+O(1)$, $z_+=\frac{4}{3\varepsilon}+O(1)$. For sufficiently small $\varepsilon$ the first one
is negative, and the second solution is larger than $1/\varepsilon$, hence both are irrelevant. Also, $g'(1/\varepsilon)=\sigma^2/2+O(\varepsilon^{1/2})$ and thus $g'(z)>0$ on all of $[\pi_+,1/\varepsilon]$. Together with \eqref{eq: pos g}, it follows that $g>0$ on $[\pi_+,1/\varepsilon]$. Hence $F(z)>0$ for all $z>\pi_+$ which proves that $(V,\lambda)$ solves the HJB equation \eqref{eq: HJBx}.

Using the proof of Proposition \ref{prop3}, one can obtain assertion \ref{th multiplier issue 2} and \ref{th multiplier issue 3}. Finally, the expansions of the trading boundaries claimed in 
\ref{th multiplier issue 4} follow from the asymptotic expansions of the free boundaries $\zeta_-,\zeta_+$ in \eqref{eq: zeta riskneutral}.

\end{proof}

\section{Convergence}\label{sec proofs convergence}

\begin{lemma}\label{outperform}
Let $\mu>\sigma^2$. There exists $\delta_0>0$ such that for all $\delta\leq \delta_0$ and for all $0\leq \gamma\leq\gamma_0:=\frac{\mu}{\sigma^2}$, the objective functional for a trading strategy $\varphi$ which only engages in buying at $\pi_-=1+\delta$ and selling at $\pi_+=(1-\delta)/\varepsilon>\pi_-$ outperforms a buy and hold strategy. More precisely, 
for all $\gamma\leq\gamma_0$ and for all $\delta\leq \delta_0$ 
\[
F_\infty(\varphi)\geq r+\mu-\frac{\gamma \sigma^2}{2} +\left(\frac{\mu-\gamma\sigma^2}{2} \right)\delta .
\]
\end{lemma}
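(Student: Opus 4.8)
The plan is to recognize $\varphi$ as the two-sided reflecting strategy of Lemma~\ref{existence controllable strategy} associated with the risky/safe boundaries $\zeta_\pm=\pi_\pm/(1-\pi_\pm)$, that is $\zeta_-=-(1+\delta)/\delta$ and $\zeta_+=-(1-\delta)/(1-\delta-\varepsilon)$. A one-line computation shows $\zeta_-<\zeta_+<-1/(1-\varepsilon)$ for small $\delta$, so $\varphi$ lives in the leveraged regime of Remark~\ref{Cristiano} and is admissible. By Lemma~\ref{le: rewriting obj fun} its performance is
\[
F_\infty(\varphi)=r+\lim_{T\to\infty}\frac1T\mathbb E\Big[\int_0^T h(\zeta_t)\,dt\Big]-\avtrco,
\]
so the task reduces to evaluating the ergodic average of $h$ and the average trading cost at these explicit boundaries.

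Both limits are available in closed form. Since $\zeta_t$ is a reflected geometric Brownian motion on $[\zeta_-,\zeta_+]$ with the invariant density $\nu(\zeta)\propto|\zeta|^{2\mu/\sigma^2-2}$ of Lemma~\ref{existence controllable strategy}, the ergodic theorem identifies the first limit with $I:=\frac1c\int_{\zeta_-}^{\zeta_+}h(\zeta)|\zeta|^{2\mu/\sigma^2-2}\,d\zeta$ (with $c$ the normalizing constant), while $\avtrco$ is given by Lemma~\ref{Lem atc}. Feeding the variation-of-constants formula \eqref{TKA IVP sol} for the IVP solution $W(\zeta_-,\cdot)$ into the definition of $I$ — exactly as in Lemma~\ref{integral h}, but \emph{without} imposing value matching at $\zeta_+$ — yields the identity
\[
F_\infty(\varphi)=r+h(\zeta_-)+\frac{\sigma^2\,(2\mu/\sigma^2-1)\,|\zeta_+|^{2\mu/\sigma^2}}{2\big(|\zeta_-|^{2\mu/\sigma^2-1}-|\zeta_+|^{2\mu/\sigma^2-1}\big)}\big(W(\zeta_-,\zeta_+)-G(\zeta_+)\big),
\]
in which the last bracket vanishes precisely at the free boundaries of Theorem~\ref{thm free b}. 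This isolates $h(\zeta_-)$ as the leading contribution.

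I would then expand. The leading term is $h(\zeta_-)=\mu(1+\delta)-\tfrac{\gamma\sigma^2}{2}(1+\delta)^2=(\mu-\tfrac{\gamma\sigma^2}{2})+(\mu-\gamma\sigma^2)\delta-\tfrac{\gamma\sigma^2}{2}\delta^2$, whose first-order coefficient $(\mu-\gamma\sigma^2)$ is \emph{twice} the coefficient $\tfrac{\mu-\gamma\sigma^2}{2}$ appearing in the claimed bound; since $\mu-\gamma\sigma^2\ge0$ on $[0,\gamma_0]$, this built-in factor of two is the source of the surplus. The remaining work is to expand $W(\zeta_-,\zeta_+)$, $G(\zeta_+)$ and the prefactor as $\delta\downarrow0$ with $\varepsilon$ fixed, using $|\zeta_-|\sim1/\delta$ to see that the density concentrates at $\zeta_-$, and to show the correction term does not erode more than the spare $\tfrac{\mu-\gamma\sigma^2}{2}\delta$. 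A helpful structural point is that $\nu$ and $\avtrco$ depend on $\gamma$ only through $2\mu/\sigma^2$, and $h$ depends on $\gamma$ only linearly, so the $\gamma$-dependence of the whole expression is affine and amenable to a uniform estimate over $\gamma\in[0,\gamma_0]$.

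The main obstacle is controlling the correction $W(\zeta_-,\zeta_+)-G(\zeta_+)$, i.e.\ the transaction-cost contribution. The process spends almost all its time near $\zeta_-$ (the drift of $\log|\zeta_t|$ is $\mu-\sigma^2/2>0$, which is where the hypothesis $\mu>\sigma^2$ enters), yet costs are incurred only at the far boundary $\zeta_+$, where $G(\zeta_+)$ is large; one must show that the product of ``large cost per visit'' and ``rare visits'' is of smaller order than the first-order gain, which relies on the exponent $2\mu/\sigma^2-1>1$ making the hitting rate of $\zeta_+$ decay fast in $\delta$. The most delicate point is uniformity as $\gamma\uparrow\gamma_0$: there the frictionless first-order gain $(\mu-\gamma\sigma^2)\delta$ vanishes, so the nonnegativity of $F_\infty(\varphi)-\big(r+\mu-\tfrac{\gamma\sigma^2}{2}+\tfrac{\mu-\gamma\sigma^2}{2}\delta\big)$ must be recovered from the sign of the correction term rather than from $h(\zeta_-)$ alone, and this is the step I expect to require the most careful bookkeeping.
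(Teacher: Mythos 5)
Your setup is sound and matches the paper's own route: you identify the policy with the reflected diffusion of Lemma~\ref{existence controllable strategy}, write $F_\infty(\varphi)=r+\lim_T\frac1T\mathbb E[\int_0^T h(\zeta_t)dt]-\avtrco$ via Lemma~\ref{le: rewriting obj fun}, and your exact identity (obtained by feeding \eqref{TKA IVP sol} into the ergodic average, as in Lemma~\ref{integral h}, and subtracting the cost formula of Lemma~\ref{Lem atc}) is algebraically correct. But the proposal stops exactly where the proof's content lies: the decisive estimate --- that the correction term is dominated by the spare margin $\frac{\mu-\gamma\sigma^2}{2}\delta$ uniformly in $\gamma\in[0,\gamma_0]$ --- is never carried out, and the sketch you give for it would not close. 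Concretely, with the stated boundaries one has $G(\zeta_+)=\frac{(1-\delta-\varepsilon)^2}{\varepsilon\delta}=\Theta(\delta^{-1})$, because $\pi_+=(1-\delta)/\varepsilon$ sits within $O(\delta)$ of the insolvency level $1/\varepsilon$, where selling barely reduces leverage; the stationary local-time rate at $\zeta_+$ is $\Theta(\delta^{2\mu/\sigma^2-1})$, so Lemma~\ref{Lem atc} gives $\avtrco=\Theta(\delta^{2\mu/\sigma^2-2})$. Under the lemma's only hypothesis $\mu>\sigma^2$ this exponent is merely positive, not greater than one, so ``rare visits times large cost per visit'' is \emph{not} of smaller order than the first-order gain once $\mu<\tfrac32\sigma^2$: your key heuristic fails on part of the admissible parameter range. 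The paper's computation avoids this extra factor $\delta^{-1}$ (its displayed cost term \eqref{eq: estimate line} carries factors $1-2\varepsilon$, i.e.\ effectively a selling boundary bounded away from $1/\varepsilon$) and lands on the better order $O(\delta^{\min(2,2\mu/\sigma^2-1)})$, for which $\mu>\sigma^2$ does suffice.

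The second gap is the endpoint $\gamma\uparrow\gamma_0$, which you correctly flag but propose to resolve in a direction that cannot work. For $\gamma$ near $\gamma_0$ the Merton point $\pi_*=\mu/(\gamma\sigma^2)$ is near $1<\pi_+$, so $h(\pi)-h(\pi_-)=(\pi-\pi_-)\bigl(\mu-\tfrac{\gamma\sigma^2}{2}(\pi+\pi_-)\bigr)<0$ on essentially all of $[\pi_-,\pi_+]$; hence $W(\zeta_-,\zeta_+)<0<G(\zeta_+)$ and your correction term is strictly \emph{negative}, so positivity cannot be ``recovered from the sign of the correction term.'' Indeed at $\gamma=\gamma_0$ your own identity forces $F_\infty(\varphi)<r+\tfrac{\mu}{2}(1-\delta^2)$, strictly below the claimed bound, so no bookkeeping closes this step with the boundaries as stated. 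The paper handles this point by a device your route has no substitute for: it bounds the ergodic average below by $h(\pi_-)$ using monotonicity of $h$ on the no-trade region, which requires $\pi_+\le\pi_*$ --- and note that this requirement itself forces $\gamma$ to stay away from $\mu/\sigma^2$ (the paper's assertion $\widetilde\pi=\mu/(\gamma_0\sigma^2)>1$ reads $1>1$ with $\gamma_0=\mu/\sigma^2$), so the uniformity problem you sensed is real rather than a matter of careful expansion. In summary: correct identity and the same general strategy as the paper, but the two estimates that constitute the actual proof (uniform smallness of the cost term, and a uniform lower bound on the ergodic excess) are absent, and as sketched they cannot both be established.
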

\begin{proof}
As $\varepsilon\in (0,1)$ and $\widetilde\pi:=\frac{\mu}{\gamma_0\sigma^2}>1$, there exists $\widetilde \delta>0$ such that $\pi_*\geq \pi_+$ for all $\delta\leq \widetilde \delta$ and 
$\gamma\leq \gamma_0$.

Let $\rho(\pi)d\pi=\nu(\pi/(1-\pi))\frac{d\pi}{(\pi-1)^2}$, where $\nu(d\zeta)$ is the stationary density of a reflected diffusion $\zeta$ on $[\zeta_-,\zeta_+]$ (Lemma \ref{existence controllable strategy}).  As  $\pi_*\geq \pi_+$, also $\mu\pi-\frac{\gamma\sigma^2}{2}\pi^2\geq \mu\pi_--\frac{\gamma\sigma^2}{2}\pi_-^2$ holds for all $\pi\in [\pi_-,\pi_+]$. Thus, 
\begin{align}\nonumber
\textstyle{F_\infty(\varphi)}&\textstyle{=  r+\int_{\pi_-}^{\pi_+}\left( \mu\pi -\frac{\gamma \sigma^2}{2}\pi^2\right)\rho(d\pi)-\avtrco} \\\nonumber&\textstyle{\geq r+ \mu(1+\delta)-\frac{\gamma\sigma^2}{2}(1+\delta)^2-\frac{(\delta +1) (2 \epsilon -1)^3 \left(2 \mu -\sigma ^2\right)}{4 \epsilon  \left(\delta  \left(\frac{-2 (\delta +1) \epsilon +\delta +1}{\delta }\right)^{\frac{2 \mu }{\sigma ^2}}+(\delta +1) (2 \epsilon -1)\right)}}\label{eq: estimate line}\\&\textstyle{\geq r+\mu-\frac{\gamma\sigma^2}{2} + (\mu-\gamma \sigma^2)\delta-O(\delta^{\min(2,\frac{2\mu}{\sigma^2}-1)}),}
\end{align}
where Lemma \ref{Lem atc} has been invoked to calculate and estimate the average trading costs $\avtrco$. The asymptotic expansion holds for sufficiently small $\delta$ and , as
$\mu>\gamma \sigma^2$, the exponent in the asymptotic formula \eqref{eq: estimate line} satisfies $2\mu/\sigma^2-2>1$.
\end{proof}

\subsection{Proof of Theorem \ref{lem convergence}}
\begin{proof}
As $\zeta_+<-1/(1-\varepsilon)$, the curves $(0,\bar\gamma]\rightarrow \mathbb R:\gamma\mapsto\pi_{\pm}(\gamma)$ range in a relatively compact set, namely $[1,\frac{1}{\varepsilon})$. Consider therefore a sequence
$\gamma_k$, $k=1,2,\dots$ which satisfies $
1\leq \pi_-^0:= \lim_{i\rightarrow\infty}\pi_-(\gamma_k)\leq \lim_{i\rightarrow\infty}\pi_+(\gamma_k)=:\pi_+^0\leq 1/\varepsilon$. Set $\zeta^k_{\pm}:=\frac{\pi_\pm(\gamma_k)}{1-\pi_\pm(\gamma_k)}$, for $k=0,1,2,\dots$, and note that $-\infty\leq \zeta_-^0\leq \zeta_+^0\leq - \frac{1}{1-\varepsilon}$. 
For each $k$, $k=1,2,\dots$, by assumption the HJB equation \eqref{eq: HJBx} is satisfied with $\lambda=\lambda_k:=h(\zeta_-^k)$. The verification arguments in the proof of Proposition \ref{prop3} yield that the trading strategies associated with the intervals $[\pi_-(\gamma_k),\pi_+(\gamma_k)]$
are optimal.

Next, three facts are proved. First $\pi_-^0>1$, which is equivalent to $\zeta_-^0>-\infty$. Suppose, by contradiction, that $\pi_-^0=1$. Then 
$\pi_-(\gamma_k)\rightarrow 1$ and thus $\lambda_k\rightarrow \mu$, as $k\rightarrow\infty$. Hence,
the objective functional eventually minorizes the uniform bound provided by Lemma \ref{outperform}, a mere impossibility to optimality. Hence $\pi_-^0>1$. Second, $\pi_-^0<\pi_+^0$: This holds due to the fact that, by observing limits for the initial and terminal conditions
of zero order in \eqref{eq: TKA fbp}, $W(\zeta_-^0)=0<G(\zeta_-^0)$.
Third, $\pi_+^0<\frac{1}{\varepsilon}$. Suppose, by contradiction, that $\pi_+^0=\frac{1}{\varepsilon}$. Then $G(\zeta_+^k)\rightarrow \infty$, as $k\rightarrow \infty$,
and, as $\zeta_-^0<\zeta_+^0$,  the average trading costs corresponding to $\gamma_k$ satisfy (by Lemma \ref{Lem atc})
\[
\textstyle{\avtrco(k):=\frac{\sigma^2\left(\frac{2\mu}{\sigma^2}-1\right)}{2}\frac{G(\zeta_+^k)\zeta_+^k}{1-\left(\frac{\zeta_-^k}{\zeta_+^k}\right)^{2\mu/\sigma^2-1}}\rightarrow \infty,}
\]
as $k\rightarrow\infty$. Denote by $\hat\varphi^k$ the trading strategy which only buys (resp. sells) ath $\pi_-(\gamma_k)$ (resp. $\pi_+(\gamma_k)$). By the results of Appendix \ref{app proof main theorem} the value function satisfies 
\[
\lim_{k\rightarrow\infty}{F_{\infty}(\hat \varphi^k)=
\lim_{k\rightarrow\infty}\int\limits_{\pi_-(\gamma_k)}^{\pi_+(\gamma_k)}(\mu\pi-\frac{\gamma_k \sigma^2}{2}\pi^2)\rho(d\pi)-\avtrco(k)\leq \frac{\mu}{\varepsilon}-\lim_{k\rightarrow\infty}\avtrco(k)= -\infty}
\]
as $k\rightarrow \infty$. In particular, for sufficiently large $k\geq k_0$, a buy-and-hold strategy $\varphi$ satisfies $F_{\infty}(\varphi)=\mu-\frac{\gamma_k\sigma^2}{2}>F_{\infty}(\hat \varphi^k)$,  which contradicts optimality of the trading strategy $[\pi_-(\gamma_k),\pi_+(\gamma_k)]$. Hence $\pi_+^0<1/\varepsilon$.

As the sequence $\zeta_-^k$ converges, by \cite[Lemma 9]{keller2010convexity} the solutions of the initial value problem associated with
\eqref{eq: TKA fbp} and $\gamma_{k}$, namely $W(\zeta; \zeta_-^k)$, converge to the solution of the initial value problem \eqref{eq: TKA fbp} (for $\gamma=0$), 
\[
W^0(\zeta)=-\frac{2}{\sigma^2\zeta^{2}}\int_{\zeta_-^0}^{\zeta} (\mu \frac{\zeta}{1+\zeta}-\mu \frac{\zeta_-^0}{1+\zeta_-^0})(\zeta/\zeta_-^0)^{2\mu/\sigma^2-2}d\zeta.
\]
The terminal conditions are met by $W^0$, because $G$ is continuous on $(-\infty, -\frac{1}{1-\varepsilon})$. Also, for each $k$, $k=1,2,\dots$, by assumption the HJB equation \eqref{eq: HJBx} is satisfied. Non-negativity is preserved by taking limits, hence, $(\hat W(\zeta;0), \lambda_0)$ satisfies the HJB equation as well. The verification arguments in the proof of Proposition \ref{prop3} imply that the trading strategies associated with the intervals $[\pi_-(\gamma),\pi_+(\gamma)]$ are not only optimal for risk-aversion levels $\gamma\in [0,\bar \gamma]$, but also $[\pi_-^0,\pi_+^0]$ is optimal for a risk-neutral investor.

$\zeta_-(\gamma)$ can have only one accumulation point for $\gamma\downarrow 0$, because $\lambda_0=h(\zeta_-^0)$ is the value function. Uniqueness of $\zeta_-^0$ is therefore clear  and it follows that $\zeta_-^0=\zeta_-(0)$. By assumption, the free boundary problem has a unique solution, hence it follows that $\pi_+(0)=\pi_+^0$. In particular,
the curves  $(0,\bar\gamma]\rightarrow \mathbb R:\gamma\mapsto\pi_{\pm}(\gamma)$  each have a unique limit $\pi_\pm^0$ as $\gamma\downarrow 0$, which
equals $\pi_\pm(0)$, the solution of the free boundary problem.
\end{proof}

\vspace{-1cm}
\bibliographystyle{agsm}

\end{document}